\documentclass[aps,pra,reprint,twocolumn,superscriptaddress,nofootinbib]{revtex4-1}
\pdfoutput=1

\usepackage[utf8]{inputenc}
\usepackage[english]{babel}
\usepackage[T1]{fontenc}
\usepackage{amsmath}
\usepackage{bm}
\usepackage{amsfonts}
\usepackage{comment}
\usepackage{hyperref}
\allowdisplaybreaks

\usepackage[most]{tcolorbox}

\usepackage{tikz}
\usepackage{braket}
\usepackage{natbib}
\usepackage{amsthm}

\newtheorem{thm}{Theorem}
 \newtheorem{proposition}{Proposition}[thm]
 
 \newtheorem{lemma}{Lemma}[thm]
 \newtheorem{fact}{Fact}[thm]
\newtheorem{cor}{Corollary}[thm]

\newtheorem{alg}{Algorithm}[thm]

\DeclareMathOperator{\Tr}{tr}

\newcommand{\norm}[1]{\left\lVert#1\right\rVert}
\newcommand{\vertiii}[1]{{\left\vert\kern-0.25ex\left\vert\kern-0.25ex\left\vert #1 
    \right\vert\kern-0.25ex\right\vert\kern-0.25ex\right\vert}}

\newcommand{\vct}[1]{\mathbf{#1}}
\renewcommand{\vec}{\bm}

\newcommand{\BE}{\mathbb{E}}
\newcommand{\CF}{\mathcal{F}}

\newcommand{\CO}{\mathcal{O}}

\newcommand{\CU}{\mathcal{U}}
\newcommand{\CV}{\mathcal{V}}

\newcommand{\lV}{\lVert}
\newcommand{\rV}{\rVert}
\newcommand{\e}{\mathrm{e}}
\newcommand{\iunit}{\mathrm{i}}

\newcommand{\AC}[1]{
{\color{black} #1}
}

\newcommand{\Last}[1]{
{\color{black} #1}
}

%%%%algorithm

\usepackage[noend]{algpseudocode}
\usepackage{algorithm,algorithmicx}

\algrenewcommand\alglinenumber[1]{\sf\scriptsize\color{blue}{#1}}
\algrenewcommand\algorithmicrequire{\textbf{Input:}}
\algrenewcommand\algorithmicensure{\textbf{Output:}}

\begin{document}
\title{Concentration for random product formulas}

\date{\today}
\author{Chi-Fang Chen}
\thanks{These authors contributed equally} \email{email:chifang@caltech.edu}
\affiliation{Department of Physics, Caltech, Pasadena, CA, USA}
\author{Hsin-Yuan Huang}
\thanks{These authors contributed equally} \email{email:chifang@caltech.edu}
\affiliation{Institute for Quantum Information and Matter, Caltech, Pasadena, CA, USA}
\affiliation{Department of Computing and Mathematical Sciences, Caltech, Pasadena, CA, USA}
\author{Richard Kueng}
\affiliation{Institute for Quantum Information and Matter, Caltech, Pasadena, CA, USA}
\affiliation{Department of Computing and Mathematical Sciences, Caltech, Pasadena, CA, USA}
\affiliation{Institute for Integrated Circuits, Johannes Kepler University Linz, Austria}
\author{Joel A.~Tropp}
\affiliation{Department of Computing and Mathematical Sciences, Caltech, Pasadena, CA, USA}

\begin{abstract}
 Quantum simulation has wide applications in quantum chemistry and physics.
    Recently, scientists have begun exploring the use of randomized methods for accelerating quantum simulation.
    Among them, a simple and powerful technique, called \textsc{qDRIFT}, is known to generate random product formulas
    for which the \textit{average} quantum channel approximates the ideal evolution.
    \AC{\textsc{qDRIFT} achieves a gate count that does not explicitly depend on the number of terms in the Hamiltonian, which contrasts with Suzuki formulas.
    This work aims to understand the origin of this speed-up by comprehensively analyzing a \textit{single realization} of the random product formula produced by \textsc{qDRIFT}.
    The main results prove that a typical realization of the randomized product formula approximates the ideal unitary evolution up to a small diamond-norm error.
    The gate complexity is already independent of the number of terms in the Hamiltonian, but it depends on the system size and the sum of the interaction strengths in the Hamiltonian.
    Remarkably, the same random evolution starting from an arbitrary, but fixed, input state yields a much shorter circuit suitable for that input state. In contrast, in deterministic settings, such an improvement usually requires initial state knowledge. 
    The proofs depend on concentration inequalities for vector and matrix martingales, and the framework is applicable to other randomized product formulas. Our bounds are saturated by certain commuting Hamiltonians.}

\end{abstract}
    
\maketitle

\section{Introduction}

Simulating complex quantum systems is one of the most promising applications for quantum computers.  This task has many applications, such as developing new pharmaceuticals, catalysts, and materials \cite{georgescu2014quantum, babbush2018low, mcardle2020quantum}, as well as solving linear algebra problems \cite{subacsi2019quantum, huang2019near, an2019quantum}. \AC{The task of digital quantum (dynamics) simulation can be phrased as a compiling problem: 
approximate a given unitary, say a Hamiltonian evolution $U = \mathrm{e}^{-\iunit  Ht}$, by a product of `simple' unitaries $g_k$:
\begin{equation}
U=\mathrm{e}^{-\iunit  H t} \approx V = g_1 \cdots g_N. \label{eq:compilation}
\end{equation}
Such a decomposition into elementary gates should obey two conditions: (i) It should accurately approximate the target unitary.
In this work, we require that the error in operator norm\footnote{\AC{For measuring time-evolved observables $\Tr(\rho(t)O)$, the operator norm suffices; for quantum phase estimation of ground state energies things can get more complicated, though. There, the estimates may depend on the details and vary orders of magnitude~\cite{google2020THC} and the 
assessment of real costs is an on-going research direction that is beyond the scope of this work.}}
satisfies $\|U-V \| \leq \epsilon$ for some specified accuracy parameter $\epsilon$.
Moreover, (ii) the decomposition should cost as little as possible.  Several cost functions make sense in this context, but we will focus on the gate complexity, i.e., the number $N$ of simple gates\footnote{\AC{In this work, we will count $\e^{-\iunit h_jt}$ as a single gate, as in~\cite{childs2019faster,childs2019theory,berry2020time}. Strictly speaking, in a fault-tolerant quantum computer, we would further decompose each $\e^{-\iunit h_jt}$ into universal 2-qubit gates.
According to the Solovay--Kiteav Theorem~\cite{dawson2005kitaev} this incurs at most a constant (multiplicative) overhead, but the exact cost depends on the type of hardware. Note we do not discuss the query complexity that appears in LCU approaches~\cite{martyn2021grand}. 
}
} on the right-hand side of Eq.~\eqref{eq:compilation}.
}

The earliest compilation procedures for quantum simulation were based on \textit{product formulas} \cite{suzuki1991general,lloyd1996universal}, also known as \textit{Trotterization}, or \textit{splitting methods}. They depend on the idea of approximating the matrix exponential of a sum by a product of matrix exponentials.

\AC{We will review one such construction below in Eq.~\eqref{eq:first_order}. 
Subsequently, alternative principles have led to the development of other quantum simulation algorithms. 
These include linear combination of unitaries~\cite{LCU}, quantum signal processing~\cite{low2017optimal} and qubitization~\cite{Low_2019_qubitize}. 
By and large, these algorithms rely on more powerful quantum computing primitives to yield improved performance in accuracy and cost. We refer to~\cite{martyn2021grand} for a systematic review. 

Despite these advanced simulation techniques, product formulas
have recently undergone a renaissance~\cite{berry2015hamiltonian,low2017optimal,childs2019faster,childs2019theory,berry2020time,campbell2019random}. 
They are not only simple and (comparatively) easy to implement on near-term devices, but they also remain very competitive~\cite{childs2019theory} provided that they incorporate information about the structure of the problem, such as initial state knowledge~\cite{low_energy2020}, locality~\cite{Haah_2021} or the commutator structure of Hamiltonian~\cite{childs2019theory}. The purpose of this paper is to explore the randomized aspects of constructing product formulas. }

\AC{
We begin by reviewing the first-order Lie--Trotter formula, which will be later contrasted with a randomized variant (\textsc{qDRIFT}). To that end, consider a quantum many-body Hamiltonian  $H = \sum_{j=1}^L h_j$ composed of $L$ simple terms $h_j$. The first-order formula cycles through each term in the Hamiltonian
\begin{equation}
    U \approx \big(\exp \left( -\mathrm{i} (tL / N) h_L \right) \cdots \exp \left( -\mathrm{i} (tL / N) h_1 \right)\big)^{N/L}.\label{eq:first_order}
\end{equation}
}
\AC{
 To approximate the target unitary $U$ up to accuracy $\epsilon$ in operator norm, a total gate count $N = \mathcal{O}(L \lambda^2 t^2 / \epsilon)$ suffices\footnote{\AC{To finally compile into universal gates, run a standard gate synthesis algorithm (such as Solovay--Kiteav) for each simple term. This yields another multiplicative factor on the gate count.}} \cite[Section~3]{childs2019theory}.  In this expression, $t$ is the simulation time, $L$ is the number of terms, and $\lambda = \sum_j \norm{h_j} $ summarizes the interaction strengths within $H$.
 The main idea is to cancel out the leading-order term in the Taylor expansion by including each of the $L$ terms. Owing to this construction, the factor $L$ remains in higher-order Suzuki formulas where the gate count $N = \mathcal{O}(L (\lambda t)^{1+ o(1)} / \epsilon^{o(1)})$, even though the time-dependence becomes nearly optimal. We refer to  Table~\ref{table:qDRFIT_vs_suzuki} for a sharper gate count incorporating commutators.
Recently, researchers started using randomization to improve the performance of product formulas \cite{childs2019faster, campbell2019random, ouyang2020compilation,faehrmann2021randomized}.  Campbell~\cite{campbell2019random} introduced the \textsc{qDRIFT} algorithm,

which approximates the target evolution $U=\e^{-\iunit Ht}$ by a quantum channel that results from averaging products $V_N \cdots V_1$ of random unitaries. Each $V_k$ corresponds to a short-time evolution based on a single term $h_K$ from the Hamiltonian.
The index $K$ is selected randomly, according to an importance sampling distribution $(p_1, \dots, p_L)$, constructed to match the leading order of time step
\begin{equation*}
\mathbb{E} \left[ V_k \right]
\approx \exp \big( - \mathrm{i} (t/N)\mathbb{E}\left[h_K/p_K \right] \big) = U^{1/N}.
\end{equation*}
This approximation is achieved by averaging over a single (random) unitary. 
(In contrast, the first order Suzuki formula \eqref{eq:first_order} must cycle through all terms which incurs an extra $L$-factor.) Independence among the $V_k$'s then ensures

\begin{equation*}
\mathbb{E} \left[ V_N \cdots V_1 \right] = \mathbb{E} \left[ V_N \right] \cdots \mathbb{E} \left[ V_1 \right] \approx \big(U^{1/N}\big)^N = U.
\end{equation*}
Campbell proved that the averaged Hamiltonian approximates the true Hamiltonian
when the gate count satisfies
\begin{align}
    N = \mathcal{O}(\lambda^2 t^2 / \epsilon). \label{eq:qDRIFT_gate_count}
\end{align}
Observe that the gate count is \textit{independent} of the number of terms $L$ in the Hamiltonian.
A summary of this procedure is as follows.
\begin{alg}[\textsc{qDRIFT}]\end{alg}\label{alg:qdrift}
\begin{tcolorbox}[colback=black!5!white,colframe=black!75!black,arc=0mm]
\begin{flushleft}
\begin{itemize}
\item[]\textbf{Inputs:} Hamiltonian $H=\sum_{j=1}^L h_j$ with interaction strength $\lambda = \sum_j \|h_j \| $, evolution time $t$, and number of steps $N$.

\item[]\textbf{At each $t/N$ interval:}
evolve a random term in Hamiltonian 
\begin{align}
    &V_k = \exp (-\mathrm{i} (t/N) X_k)\label{eq:qDRIFT} 
\end{align}

according to its importance 
\begin{align*}
& X_k \overset{\textit{i.i.d.}}{\sim} X= 
\begin{cases}
\tfrac{\lambda}{\|h_1 \|}h_1 & \text{with prob. } p_1 = \tfrac{\|h_1 \|}{\lambda} \\
& \vdots \\
 \tfrac{\lambda}{\|h_L\|}h_L & \text{with prob. } p_L = \tfrac{\|h_L\|}{\lambda}
\end{cases}. 
\end{align*}

\item[]\textbf{Output:} the unstructured (randomly generated) product formula
\begin{equation*}
V^{(N)} = V_N \cdots V_1.
\end{equation*}
\end{itemize}

\end{flushleft}
\end{tcolorbox}

}

% \begin{figure}[tb]
% \begin{tcolorbox}[colback=black!5!white,colframe=black!75!black,title=Randomized product formula (\textsc{qDRIFT}),arc=0mm]
% \begin{flushleft}

% \begin{itemize}
% \item[]\textbf{Inputs:} Hamiltonian $H=\sum_{j=1}^L h_j$ with interaction strength $\lambda = \sum_j \|h_j \| $, evolution time $t$, and number of steps $N$.

% \item[]\textbf{At each $t/N$ interval:}
% evolve a random term in Hamiltonian 
% \begin{align}
%     &V_k = \exp (-\mathrm{i} (t/N) X_k)\label{eq:qDRIFT} 
% \end{align}

% according to its importance 
% \begin{align*}
% & X_k \overset{\textit{i.i.d.}}{\sim} X= 
% \begin{cases}
% \tfrac{\lambda}{\|h_1 \|}h_1 & \text{with prob. } p_1 = \tfrac{\|h_1 \|}{\lambda} \\
% & \vdots \\
%  \tfrac{\lambda}{\|h_L\|}h_L & \text{with prob. } p_L = \tfrac{\|h_L\|}{\lambda}
% \end{cases}. 
% \end{align*}

% \item[]\textbf{Output:} the unstructured (randomly generated) product formula
% \begin{equation*}
% V^{(N)} = V_N \cdots V_1.
% \end{equation*}
% \end{itemize}

% \end{flushleft}
% \end{tcolorbox}
% \caption{\textit{\textsc{qDRIFT} \cite{campbell2019random}: An importance sampling procedure for 
% constructing  product formulas.} \\
% This procedure uses a single, randomly selected, Hamiltonian term to approximate the whole quantum simulation up to time $t/N$. Subsequent time steps are approximated in an analogous fashion. 
% The importance sampling distribution over Hamiltonian terms is designed to provide accurate approximations in expectation: $\mathbb{E} V_k \approx U^{1/N}$.
% }
% \label{fig:qdrift}
% \end{figure}

\AC{

Operationally, Campbell considers a black box that applies a new random product $V_N \cdots V_1$ of unitaries every time it is invoked. The average of all possible product formulas is $\mathcal{V}^{(N)}(X) = \BE[V_N \cdots V_1 X V_1^\dagger \cdots V_N^\dagger]$ and forms a completely positve trace-preserving (CPTP) map. The ideal unitary also forms a CPTP map given by $\mathcal{U}(X) = U X U^\dagger$.
Campbell proves that \eqref{eq:qDRIFT_gate_count} ensures that the two CPTP maps are $\epsilon$-close in diamond distance. 

It is interesting to compare the gate count~\eqref{eq:qDRIFT_gate_count} achieved by \textsc{qDRIFT} with very recent and powerful results for (deterministic) product formulas~\cite{childs2019theory,Haah_2021}, see Table~\ref{table:qDRFIT_vs_suzuki}. Broadly speaking, deterministic product formulas can achieve gate counts that are linear in time $t$ and the number $L$ of terms. The \textsc{qDRIFT} gate count, on the other hand, is independent of $L$, but quadratic in $t$. This implies that \textsc{qDRIFT} should be favored for short-time simulations rather than long-time simulations. For systems with many terms in the Hamiltonian ($L \gg 1$), such as in quantum chemistry and the SYK model \cite{sachdev1993gapless, polchinski2016spectrum, maldacena2016remarks}\footnote{ While \textsc{qDRIFT} provides performance guarantee on rather flexible choices of models, there is a specialized quantum algorithm for simulating the SYK models using much fewer gates~\cite{Babbush_2019}.}, there are indeed physically relevant time-scales for which \textsc{qDRIFT} outperforms Suzuki formulas; see Table~\ref{table:qDRFIT_vs_suzuki}.
}

\begin{table}[t]\label{table:qDRFIT_vs_suzuki}
% \begin{tabular}{lll}
% \hline
% qDRIFT      \ \ \ \ \ \                & Suzuki~\cite{childs2019theory}   \ \ \                    & Systems               \\\hline
% $\lambda^2 t^2/\epsilon $&$L\vertiii{H}_1 t$ & general \\\hline
% $n^2t^2/\epsilon$ & $nt$ & local \\\hline
% &&power-law($1/r^\alpha$) \\
% $n^2t^2/\epsilon$  & $(nt)^{1+d/(\alpha-d)}$      & $2d \le \alpha$  \\
% $n^2t^2/\epsilon$    & $n^2t$                       & $d \le \alpha \le 2d$ \\
% $n^{4-2\alpha/d}t^2/\epsilon$        & $n^{3-\alpha/d}t$            & $0 \le \alpha \le d$  \\\hline
% &&$k$-body\\
% $n^{k+1}t^2/\epsilon$ & $n^{(3k-1)/2}t$ &  $\lV h_j\rV=\CO(1/\sqrt{n^{k-1}})$ \\
% $n^{2}t^2/\epsilon$ & $n^{k}t$ &   $\lV h_j\rV=\CO(1/{n^{k-1}})$
% \end{tabular}
{
\begin{tabular}{lll}
\hline
qDRIFT      \ \ \ \ \ \                & $p$-th order Suzuki~\cite{childs2019theory}   \ \ \                    & Systems               \\\hline
$\lambda^2 t^2/\epsilon $&$L\vertiii{H}_1 \lambda^{\frac{1}{p}} t^{1+\frac{1}{p}}/\epsilon^{\frac{1}{p}}$ & general \\\hline
$n^2t^2/\epsilon$ & $n^{1+\frac{1}{p}}t^{1+\frac{1}{p}}\epsilon^{\frac{1}{p}}$ & nearest neighbor \\\hline
&&power-law($1/r^\alpha$) \\
$n^2t^2/\epsilon$  & $(nt)^{1+\frac{1}{p}+\frac{d}{\alpha-d}}/\epsilon^{\frac{d}{\alpha-d}+\frac{1}{p}}$      & $2d \le \alpha$  \\
$n^2t^2/\epsilon$    & $n^{2+\frac{1}{p}}t^{1+\frac{1}{p}}/\epsilon^{\frac{1}{p}}$                       & $d \le \alpha \le 2d$ \\
$n^{4-2\frac{\alpha}{d}}t^2/\epsilon$        & $n^{3-\frac{\alpha}{d}+(2-\frac{\alpha}{d})/p}t^{1+\frac{1}{p}}/\epsilon^{\frac{1}{p}}$            & $0 \le \alpha \le d$  \\\hline
&&$k$-local\\
$n^{k+1}t^2/\epsilon$ & $n^{\frac{3k-1}{2}+\frac{k+1}{2p}}t^{1+\frac{1}{p}}/\epsilon^{\frac{1}{p}}$ &  $\lV h_j\rV=%\CO(\sqrt{\frac{1}{n^{k-1}}})
\CO \left(\sqrt{1/n^{k-1}}\right)$ \\
$n^{2}t^2/\epsilon$ & $n^{k+\frac{1}{p}}t^{1+\frac{1}{p}}/\epsilon^{\frac{1}{p}}$ &   $\lV h_j\rV=%\CO(\frac{1}{{n^{k-1}}})
\CO \left(1/n^{k-1}\right)$
\end{tabular}}
\caption{
%\textcolor{orange}{[RiK: I think we should briefly restate what $k$, $d$ and $p$ are in the caption.]}
\AC{
\emph{Error bound comparison \textsc{qDRIFT} vs.\ $p$-th order Suzuki formulas~\cite{childs2019theory}.} {We consider systems with spatially local interaction in $d$-dimension, power-law interaction in $d$-dimension, and all-to-all $k$-local interaction. 
Mutiplicative overhead depending only on $p,k,\alpha,d,\log(n)$ are dropped, and practically we may think of $p=4$ or $6$.} The $t^2$-dependence sets a time scale before which \textsc{qDRIFT} yields an advantage. For geometrically local Hamiltonians, the Suzuki formulas are effectively tight~\cite{Haah_2021} (Exploiting locality, \cite{Haah_2021} further removes the $o(1)$ dependence), while \textsc{qDRIFT} already performs poorly after very short times ($t\sim \CO(1/n)$). However, once $L$ gets very large, such as in k-local models, there are physically relevant time scales $t=\CO( n^{(k-3)/2})$ where \textsc{qDRIFT} becomes advantageous.
Note that normalization conventions do affect the gate count substantially. For systems with random coefficients (e.g. SYK-like), one typically fixes $\sum_j \norm{h_j}^2=n$~\cite{sachdev1993gapless, chen2019operator}. But other conventions, like fixing $\lambda = \sum_j \norm{h_j}=n$, are also widespread~\cite{Lashkari_2013}. %\Last{ See calculation at Appendix~\ref{sec:calc_table}. }}
We refer to Appendix~\ref{sec:calc_table} for detailed calculations.}
}
\end{table}

\begin{figure*}
    \centering
    \includegraphics[width=0.8\textwidth]{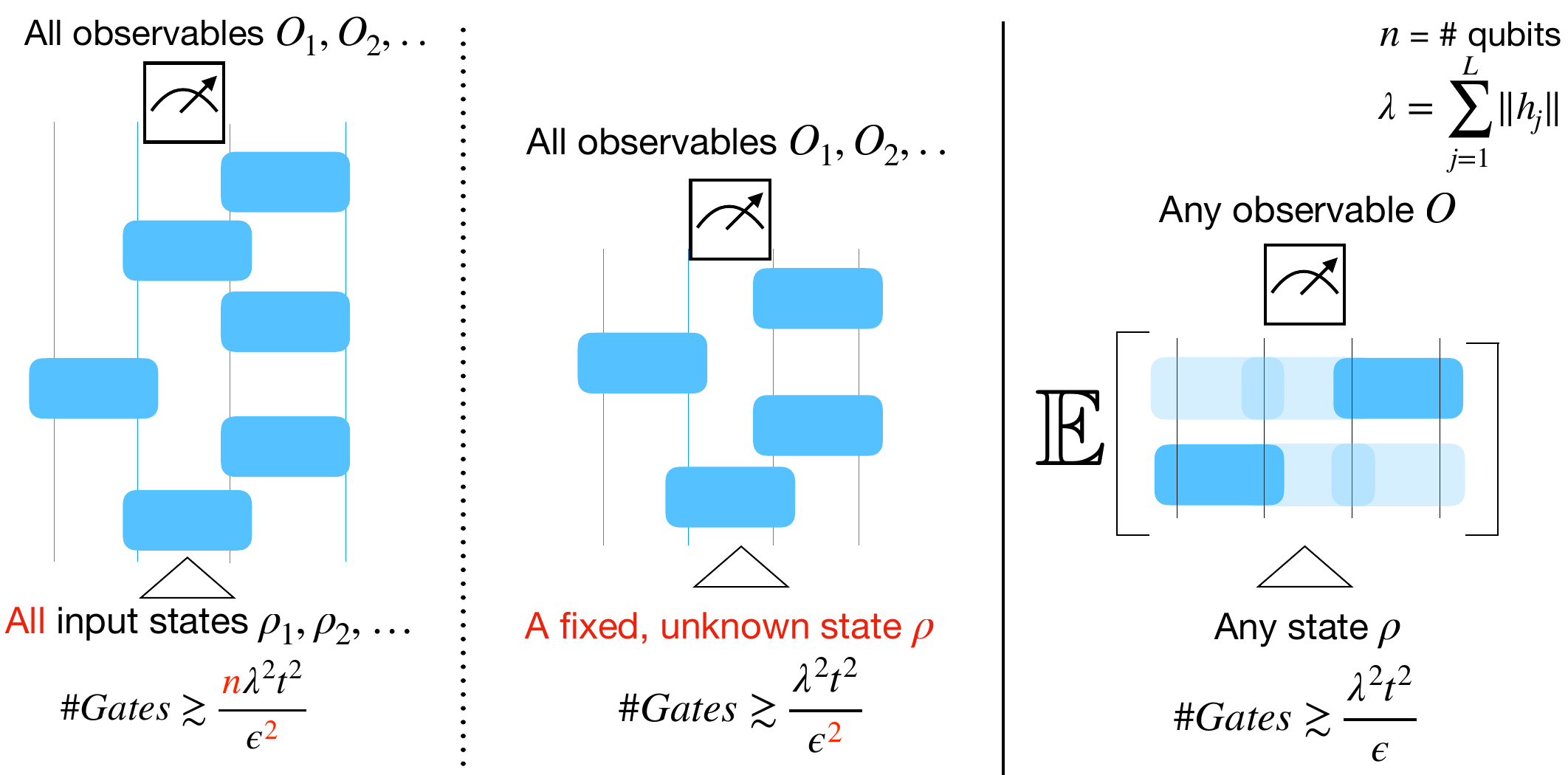}
    \caption{\AC{
     A pictorial summary of the main results. (left) To sample a product formula that works for all $n$-qubit input states and observables \textit{with high probability}, the number of gates is larger than sampling a product formula that works for a fixed, yet arbitrary, input state (center).  Resampling fresh product formulas every time (right) produces an average channel that requires even fewer gates; this is the original \textsc{qDRIFT} guarantee~\cite{campbell2019random}. }
    }
    \label{fig:mainresult}
\end{figure*}
\AC{
To summarize, there are interesting quantum simulation problems where the \textsc{qDRIFT} gate count~\eqref{eq:qDRIFT_gate_count} compares favorably with very recent and powerful results about high-order Suzuki formulas. This advantage is a consequence of \textit{randomization}.
But we may ask whether the benefit arises from the random choice of an individual product formula or whether it is due to the averaging of many product formulas together.  Which aspect produces the speed-up?

In this work, we analyze why random product formulas are efficient.To do so, we study the performance of \textit{a random instance} of the product formula $V_N \cdots V_1$.  By contrasting this instance with the average channel $\mathcal{V}^{(N)}(X)$, we deduce that random sampling alone allows us to avoid the dependency on the number $L$ of terms in the Hamiltonian; it is not necessary to average many different product formulas.  Our results establish strong concentration bounds for two cases, depicted in Figure~\ref{fig:mainresult}.}

First, let $n$ denote the number of system constituents, e.g., qubits.
If the gate count $N$ obeys
\begin{equation*}
N \geq \Omega( n t^2 \lambda^2 / \epsilon^2),  
\end{equation*}
then, {with high probability}, a single realization of the random product formula
approximates the ideal target unitary up to accuracy $\epsilon$ in operator norm.

By the probabilistic method, this result also establishes, for the first first time, the \textit{existence} of product formulas whose gate count $N$ is independent of the number $L$ of terms in the Hamiltonian but depends on the system size $n$ instead.  On the other hand, we cannot easily verify the quality of any given instance.

In practice,  we often wish to evolve a fixed input quantum state $\rho$, which may be arbitrary and unknown.This change in the problem statement has profound implications for randomized quantum simulation.
With high probability, a random product formula with
\begin{equation}
N \geq \Omega( t^2 \lambda^2 / \epsilon^2)    \label{eq:fixed-input-gate-count}
\end{equation}
terms suffices to achieve an $\epsilon$-approximation $V_N \cdots V_1 \rho V_1^\dagger \cdots V_N^\dagger$ of the ideal time-evolved state $U \rho U^\dagger$ with respect to trace distance. Roughly speaking, this result implies that each input state has a set of product formulas that are $n$ times shorter than a ``general-purpose'' product formula that works for all input states simultaneously.
Although the set of effective product formulas depends on the choice of state and observable,
the formulas are all produced by the same randomized procedure.  Remarkably, this procedure
does not exploit any knowledge of the particular input state. 

\AC{Note that the gate count required for the original \textsc{qDRIFT} protocol~\eqref{eq:qDRIFT_gate_count} and Rel.~\eqref{eq:fixed-input-gate-count} only differ in the scaling with accuracy: order $1/\epsilon$ for the full \textsc{qDRIFT} channel vs.\ order $1/\epsilon^2$ for a single random product formula with fixed input.\footnote{ An anonymous referee pointed out a potential connection with Stochastic Differential Equation solvers in the \Last{Euler-Maruyama Scheme}. There, given a sample $X_{sample}$ produced by the solver, there are also two ways to measure the error: strong error $\BE|X - X_{sample} |$ and  weak error $|\BE X - \BE X_{sample} |$. Interestingly, the convergence order is\Last{ $1/2$ and $1$}, respectively, seemingly resembling the discussion here on \textsc{qDRIFT}.} This discrepancy is reminiscent of the \textit{mixing Lemma} by Hastings~\cite{hastings2016turning} and Campbell~\cite{campbell_mixing16}: mixing unitaries yields a ``quadratic" improvement in accuracy.   See Lemma~\ref{lem:mixing_maintext} below for a statement.

In this work, we analyze several classes of randomized product formulas, including \textsc{qDRIFT} as a particular example.   The underlying theory is far more general because it relies on powerful concentration results for matrix and vector martingales.  The approach yields strong concentration results for any product of random unitary matrices, such as the ones that arise from randomized compiling~\cite{hastings2016turning,campbell_mixing16}.   We are confident that these ideas are applicable to other problems that involve functions of random matrices, such as~~\cite{chen2021optimal,chen2021concentration}.}

\paragraph*{\textbf{Roadmap}}
The rest of this paper is organized as follows.
Section~\ref{sec:results} presents the main theoretical contributions to this work in detail. These are further supported and illustrated by numerical experiments presented in Section~\ref{sec:numerics}.
Section~\ref{sec:proofidea} contains two instructive examples, as well as a non-technical illustration of the underlying proof idea. We introduce related background for martingales in Appendix~\ref{sec:intro_martingale}. Details and rigorous arguments are provided in Appendix~\ref{sec:proofs}).
The final appendix section  (Section~\ref{sec:tightness}) establishes
asymptotic tightness for time-evolving two simple (commuting) Hamiltonians.

\section{Main results} \label{sec:results}

This section gives rigorous results for the error incurred by a randomly sampled product formula $V_N \cdots V_1$,
as compared with the ideal unitary evolution operator $U = \exp(-\mathrm{i} H t)$.
The results depend on the distance measure and the particular setup,
which we discuss separately in the following subsections.

\subsection{Error bound in diamond distance: Worst-case error over all input states and observables}

The diamond distance is a standard distance measure for quantum channels. To compare two unitaries $U_1$ and $U_2$,
the diamond distance is equivalent to
\begin{align*}
    \mathrm{dist}_{\diamond}(U_1, U_2 ) &= \max_{\ket{\psi}: \text{ state}} \norm{U_1 \ket{\psi}\!\bra{\psi} U_1^\dagger - U_2 \ket{\psi}\!\bra{\psi} U_2^\dagger}_{1} \\
    &= \max_{\ket{\psi}: \text{ state}} \,\,\, \max_{O: \norm{O} \leq 1} \left| \langle O \rangle_{U_1 \ket{\psi}} -  \langle O \rangle_{U_2 \ket{\psi}} \right|,
\end{align*}
where $\norm{\cdot}_1$ is the trace norm and $\langle O \rangle_{\ket{\phi}} = \bra{\phi} O \ket{\phi}$ is the expectation value of an observable $O$ for the quantum state $\ket{\phi}$.

Operationally, this means that the expectation value of $O$ evaluated on the output state would differ at most by the diamond distance between $U_1, U_2$ for any input quantum state $\ket{\psi}$ and any observable $O$ with eigenvalues in $[-1, 1]$.

Our first main result bounds the gate complexity that suffices to guarantee that the randomly sampled product formula $V_N \cdots V_1$ is close to the ideal evolution $\exp(-\mathrm{i} t H)$ in this \textit{worst-case} error metric.

\begin{thm}[qDRIFT: Gate complexity for small diamond distance]\label{thm:allinput}
Consider an $n$-qubit Hamiltonian $H = \sum_j h_j$ with $\lambda = \sum_j \norm{h_j}$.
Draw a randomized product formula $V_N \cdots V_1$ from \eqref{eq:qDRIFT} with gate count
 \begin{equation}
 \label{eq:randomgate}
 N \geq \Omega\left((n+ \log(1 / \delta)) t^2 \lambda^2 / \epsilon^2 \right).
 \end{equation}
With probability at least $1 - \delta$, the diamond distance error satisfies
 \begin{equation*}
     \max_{\ket{\psi}: \text{ state}} \,\,\, \max_{O: \norm{O}  \leq 1} \left| \langle O \rangle_{V_N \cdots V_1 \ket{\psi}} -  \langle O \rangle_{\exp(-\mathrm{i} t H) \ket{\psi}} \right| < \epsilon.
 \end{equation*}
\end{thm}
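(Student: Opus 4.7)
The plan is to reduce the diamond-distance statement to a single operator-norm bound $\|V_N \cdots V_1 - U\|$ and then split that error, via the triangle inequality, into a \emph{bias} term $\|V_{\mathrm{avg}}^N - U\|$ and a \emph{fluctuation} term $\|V_N \cdots V_1 - V_{\mathrm{avg}}^N\|$, where $V_{\mathrm{avg}} := \BE[V_k]$ is the (contractive but non-unitary) common single-step expectation. The reduction uses the standard fact $\mathrm{dist}_{\diamond}(U_1, U_2) \le 2\|U_1 - U_2\|$ for unitary channels, which follows directly from the rank-one trace-norm identity behind the definition recalled in the paper. So it suffices to show that $\|V_N \cdots V_1 - U\| \le \epsilon/2$ with probability at least $1-\delta$ whenever $N$ meets the stated lower bound.

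For the bias term, since the importance-sampling construction is designed so that $\BE[X_k] = H$, a short Taylor-expansion argument shows $\|V_{\mathrm{avg}} - \e^{-\iunit (t/N) H}\| = \Ord((t\lambda/N)^2)$, and then a standard composition estimate (using $\|V_{\mathrm{avg}}\|,\|\e^{-\iunit(t/N)H}\|\le 1$) yields $\|V_{\mathrm{avg}}^N - U\| = \Ord(t^2\lambda^2/N)$. This is exactly Campbell's original \textsc{qDRIFT} scaling \eqref{eq:qDRIFT_gate_count}, which is negligible (much smaller than $\epsilon$) under the hypothesis of the theorem.

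The heart of the argument is the fluctuation bound, handled via a matrix martingale. Using the telescoping identity
\[
V_N \cdots V_1 - V_{\mathrm{avg}}^N \;=\; \sum_{k=1}^N D_k, \qquad D_k \;:=\; V_{\mathrm{avg}}^{N-k}\,(V_k - V_{\mathrm{avg}})\, V_{k-1}\cdots V_1,
\]
I verify that $(D_k)$ is a matrix martingale difference sequence with respect to the filtration generated by $X_1,\dots,X_k$, because $\BE[V_k - V_{\mathrm{avg}}] = 0$ and $V_k$ is independent of the earlier unitaries. Each increment is uniformly bounded, $\|D_k\| \le \|V_k - V_{\mathrm{avg}}\| = \Ord(t\lambda/N)$, and the predictable quadratic variation obeys
\[
\Big\|\sum_{k=1}^N \BE_{k-1}\!\big[D_k D_k^\dagger\big]\Big\| \;\le\; \sum_{k=1}^N \big\|\BE[(V_k-V_{\mathrm{avg}})(V_k-V_{\mathrm{avg}})^\dagger]\big\| \;=\; \Ord(t^2\lambda^2/N),
\]
where the $V_{k-1}\cdots V_1$ factor cancels since those unitaries are deterministic given $\CF_{k-1}$ and $V_{\mathrm{avg}}$ is a contraction. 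The key structural input is the variance estimate $\|\BE[X_k^2]\| \le \lambda^2$: by importance sampling, $\BE[X_k^2] = \sum_j (\lambda/\|h_j\|) h_j^2$, and the triangle inequality gives $\|\BE[X_k^2]\| \le \lambda \sum_j \|h_j\| = \lambda^2$, independent of $L$.

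Applying the matrix Azuma or matrix Freedman inequality on $2^n \times 2^n$ matrices (via the Hermitian dilation, since $D_k$ is not self-adjoint) then yields
\[
\BP\!\left[\,\|V_N\cdots V_1 - V_{\mathrm{avg}}^N\| \ge \epsilon/2\,\right] \;\le\; 4\cdot 2^{n}\exp\!\left(-\frac{c\,\epsilon^2 N}{t^2 \lambda^2}\right),
\]
which is at most $\delta$ exactly when $N \ge \Omega\!\left((n + \log(1/\delta))\,t^2\lambda^2/\epsilon^2\right)$, matching the claimed gate count. The main obstacle is getting the variance estimate right: a sloppy bound would give $\sum_j \|h_j\|^2$ in place of $\lambda^2 = (\sum_j \|h_j\|)^2$, which can be a factor of $L$ too small and would reintroduce the very $L$-dependence the theorem seeks to eliminate. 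The rest is careful bookkeeping of Taylor remainders in the bias term and the factor-$2^n$ dimension dependence inherited from matrix concentration, which is precisely what produces the extra $n$ in the gate count compared with the averaged-channel analysis of \cite{campbell2019random}.
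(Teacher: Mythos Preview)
Your proposal is correct and matches the paper's proof essentially step for step: the same reduction from diamond distance to operator norm, the same bias/fluctuation split, and the same martingale $B_k = V_{\mathrm{avg}}^{N-k}V_k\cdots V_1$ (your $D_k$ is exactly the paper's difference sequence $C_k=B_k-B_{k-1}$) analyzed via matrix Freedman. The only minor wrinkle is your closing remark about the ``sloppy bound'' $\sum_j\|h_j\|^2$ being too small---since $\sum_j\|h_j\|^2\le\lambda^2$, a smaller variance would only tighten the tail bound, so that aside is a bit muddled, but it does not affect the argument.
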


To keep notation as simple as possible, we have formulated this result for pure input states $|\psi \rangle$.Convexity readily allows for extending the bound to mixed input states $\rho = \sum_i p_i |\psi_i \rangle \! \langle \psi_i|$ as well. A complementary result bounds the expected approximation error in terms of gate count $N$.

\begin{cor}[qDRIFT: Error bound in diamond distance]
\label{cor:exp_allinput}
Consider an $n$-qubit Hamiltonian $H = \sum_j h_j$ with $\lambda = \sum_j \norm{h_j}$.
A randomized product formula $V_N \cdots V_1$, drawn from \eqref{eq:qDRIFT}, 
has expected diamond-distance error
\begin{align*}
& \mathbb{E} \left[ \max_{\ket{\psi}: \text{ state}} \,\,\, \max_{O: \norm{O}  \leq 1} \left| \langle O \rangle_{V_N \cdots V_1 \ket{\psi}} -  \langle O \rangle_{\exp(-\mathrm{i} t H) \ket{\psi}} \right| \right] \\
\lesssim &\sqrt{\frac{n t^2 \lambda^2}{N}}.
\end{align*}
\end{cor}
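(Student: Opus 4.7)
The plan is to derive Corollary~\ref{cor:exp_allinput} directly from Theorem~\ref{thm:allinput} via the standard ``integrate the tail'' trick. Write $X$ for the nonnegative worst-case deviation random variable that appears in both statements; it is bounded deterministically by $X \leq 2$, because the diamond distance between any two quantum channels is at most $2$.

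The first step is to rephrase Theorem~\ref{thm:allinput} as a tail bound. Inverting the hypothesis $N \geq C_1 (n+\log(1/\delta)) t^2 \lambda^2 / \epsilon^2$ for fixed $N$ and $\epsilon$, and solving for $\delta$, gives
\begin{equation*}
\Pr\bigl[X > \epsilon\bigr] \;\leq\; \exp\!\Bigl(n - \tfrac{\epsilon^2 N}{C_1 t^2 \lambda^2}\Bigr),
\end{equation*}
valid whenever the right-hand side is at most one, i.e.\ whenever $\epsilon \geq \epsilon_0 := \sqrt{C_1 n t^2 \lambda^2 / N}$. For $\epsilon < \epsilon_0$ I would retain only the trivial bound $\Pr[X > \epsilon] \leq 1$.

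Next I would invoke the layer-cake identity $\mathbb{E}[X] = \int_0^\infty \Pr[X > \epsilon]\,d\epsilon$, split the integral at $\epsilon_0$, and estimate
\begin{equation*}
\mathbb{E}[X] \;\leq\; \epsilon_0 \;+\; \int_{\epsilon_0}^{\infty} \exp\!\Bigl(n - \tfrac{\epsilon^2 N}{C_1 t^2 \lambda^2}\Bigr)\, d\epsilon.
\end{equation*}
The substitution $u = \epsilon \sqrt{N / (C_1 t^2 \lambda^2)}$ recasts the tail integral as $\sqrt{C_1 t^2 \lambda^2 / N}\int_{\sqrt{n}}^\infty \e^{\,n - u^2}\, du$, and the Gaussian tail estimate $\int_a^\infty \e^{-u^2}\, du \leq \e^{-a^2}/(2a)$ controls it by $\tfrac{1}{2\sqrt{n}}\sqrt{C_1 t^2 \lambda^2 / N}$. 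Adding this to $\epsilon_0$ yields $\mathbb{E}[X] \lesssim t \lambda \sqrt{n / N}$, which is precisely the bound asserted in the corollary.

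The only step requiring care is bookkeeping: one must choose the splitting point $\epsilon_0$ so that the regime where the concentration bound is nontrivial is used exactly where it pays off, producing matching $\sqrt{n / N}$ contributions from both pieces. I do not anticipate any substantive obstacle; all of the technical work is contained in Theorem~\ref{thm:allinput}, and the corollary is a routine tail-to-moment reduction driven by a sub-Gaussian-type concentration estimate.
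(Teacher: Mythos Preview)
Your proposal is correct and matches the paper's approach: the paper explicitly states that Corollary~\ref{cor:exp_allinput} is obtained by integrating the tail bound of Theorem~\ref{thm:allinput}, and the detailed argument in Proposition~\ref{prop:expectation} carries out exactly this layer-cake integration (splitting into a small-$\epsilon$ region bounded trivially by $1$ and a large-$\epsilon$ region where the exponential tail is integrated). The only cosmetic difference is that the paper first separates the deterministic bias $\|U-(\mathbb{E}V)^N\|$ before integrating the fluctuation tail, which produces the additional lower-order terms $t^2\lambda^2/N$ and $nt\lambda/N$ visible in~\eqref{eq:expected-error}; your version absorbs these into the $\lesssim$ from the outset, which is consistent with the large-$N$ reading of the corollary.
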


The symbol $\lesssim$ applies in the large-$N$ regime and suppresses constants.
The proof sketch is presented in Section~\ref{sec:proofideathm}, and the detailed proof is given in Section~\ref{sec:proofthm1}.

For comparison, the error bounds for the average quantum channel, established in \cite{campbell2019random}, imply that
{\small
\begin{align*}
&\max_{\ket{\psi}: \text{ state}} \,\,\, \max_{O: \norm{O}  \leq 1} \left| \mathbb{E}_{V_1, \ldots, V_N} [\langle O \rangle_{V_N \cdots V_1 \ket{\psi}}] -  \langle O \rangle_{\exp(-\mathrm{i} t H) \ket{\psi}} \right| 
\\
\lesssim & \frac{t^2 \lambda^2}{N}.
\end{align*}
}
As we can see, the error bound of the average over all product formulas is smaller than the error for an individual random product formula.
To understand the discrepancy, it is valuable to think about a randomly sampled product formula as a random walk on the group of $2^n \times 2^n$ unitary matrices.  Figure~\ref{fig:rwlie} indicates why a single realization of the random walk $V_N \cdots V_1$ has much greater error than the average of the random walks.

In the error bound $\mathcal{O}(\sqrt{nt^2 \lambda^2 / N})$, the square root reflects the statistical nature of the fluctuations in the random walk around its expectation.  The diamond norm requires us to control the behavior of the random product formula when applied to every $2^n$-dimensional input state.  Remarkably, we only pay for the \textit{logarithm} of the dimension, which coincides with the number $n$ of qubits.\AC{ We will discuss an intuitive example, where $\log(2^n)$ naturally arises from a union bound in Section~\ref{unionboundexp}. Formally, this pre-factor is a general feature of concentration inequality for matrix martingales (Sec.~\ref{sec:intro_martingale}).} Similar proof techniques could potentially be useful for controlling stochastic errors in other quantum computing applications.

\begin{figure}[t]
\centering
\includegraphics[width=0.9\columnwidth]{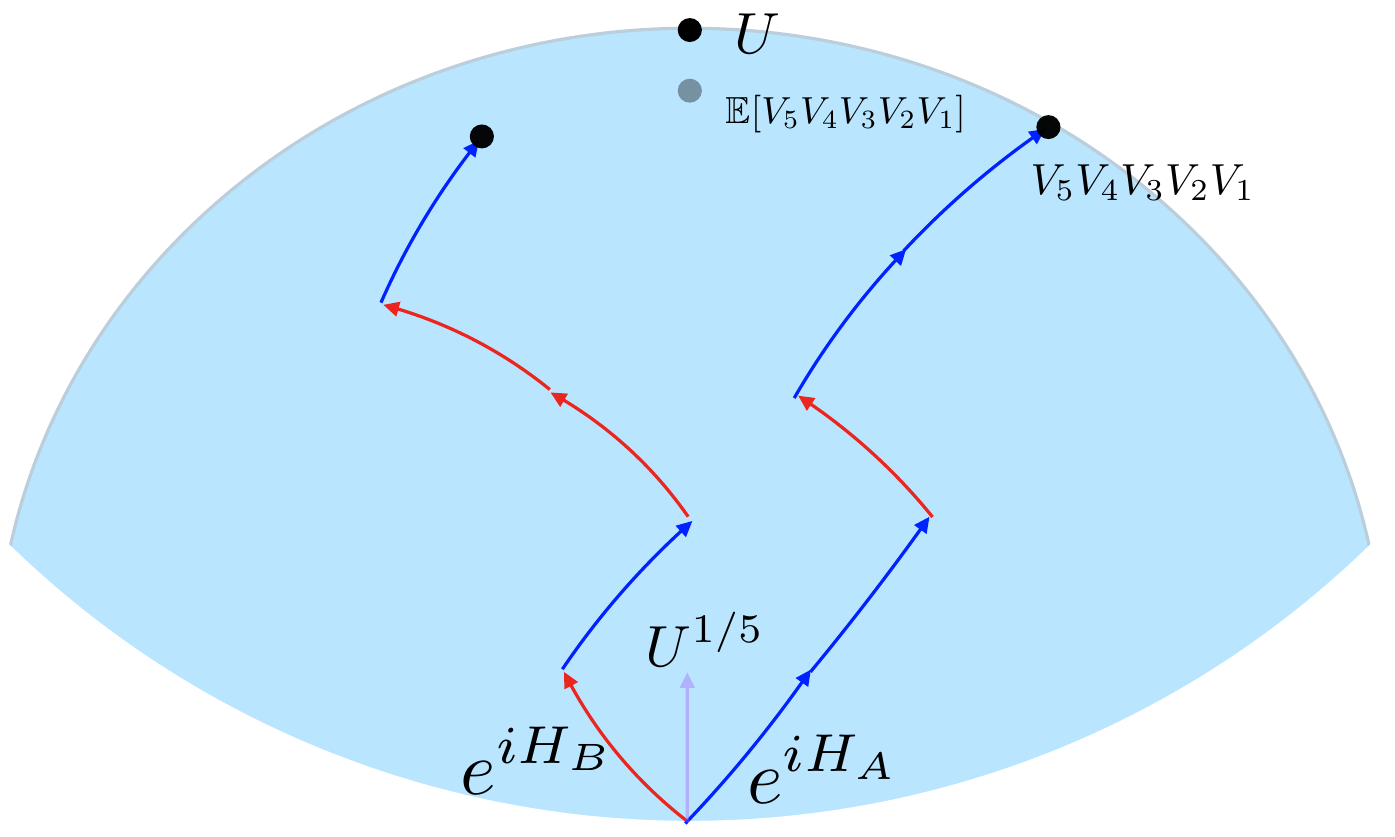}
\caption{
\textit{Illustration of concentration effects for random walks (and their averages) on the unitary group.} \\
The expectation $\mathbb{E}[ V_N \cdots V_1 ]$ of a random product formula is not unitary,
but it may be very close to the ideal evolution.
A sampled random product formula $V_N \cdots V_1$ is unitary,
but its distance from the ideal evolution is about $\mathcal{O}(\sqrt{n t^2 \lambda^2 / N})$. The average of the random product formulas results in an error of $\mathcal{O}\left(t^2 \lambda^2 / N\right)$.
}
\label{fig:rwlie}
\end{figure}

\subsection{Faster simulation for a fixed input state}
\label{taylortostate}

In practice, it is common to perform the quantum simulation starting from a particular input state.
The distinction with the previous setting is that the product formula only needs to work for one (arbitrary and possibly unknown) input state,
not for all states simultaneously.  The next theorem asserts that much shorter product formulas suffice in the easier setting.

\begin{thm}[qDRIFT: Gate complexity for fixed input] \label{thm:fixedinput}
 Consider an $n$-qubit Hamiltonian $H = \sum_j h_j$ with $\lambda = \sum_j \norm{h_j} $ and any input quantum state $\ket{\psi}$.
Draw a randomized product formula $V_N \cdots V_1$ from~\eqref{eq:qDRIFT} with the number of gates
 \begin{equation}
 \label{eq:singlerandomgate}
 N \geq \Omega(t^2 \lambda^2 \log(1 / \delta) / \epsilon^2).
 \end{equation}
 With probability at least $1 - \delta$, the output state $V_N \cdots V_1 \ket{\psi}$ satisfies
 \begin{equation*}
     \max_{O: O^\dagger = O, \norm{O}  \leq 1} \left| \langle O \rangle_{V_N \cdots V_1 \ket{\psi}} -  \langle O \rangle_{\exp(- \mathrm{i} t H) \ket{\psi}} \right| < \epsilon,
 \end{equation*}
 where $\langle O \rangle_{\ket{\psi}} = \bra{\psi} O \ket{\psi}$.
 This is equivalent to the output state $V_N \cdots V_1 \ket{\psi}$ being $\epsilon$-close to the ideal output state $\exp(-\mathrm{i} t H) \ket{\psi}$ in trace distance.
\end{thm}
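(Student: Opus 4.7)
Since the statement concerns a single, fixed input state, the plan is to reduce the problem to controlling the Hilbert-space-valued fluctuation $\|V_N\cdots V_1\ket{\psi}-U\ket{\psi}\|_2$ rather than a diamond-norm quantity, and then to invoke a dimension-free martingale concentration inequality in Hilbert space. This is what yields the improved gate count \eqref{eq:singlerandomgate}, which no longer carries the $n$-factor present in Theorem~\ref{thm:allinput}. The reduction itself is the pure-state inequality $\tfrac{1}{2}\|\ket{\phi_1}\!\bra{\phi_1}-\ket{\phi_2}\!\bra{\phi_2}\|_1\leq\|\ket{\phi_1}-\ket{\phi_2}\|_2$, so it suffices to prove $\|V_N\cdots V_1\ket{\psi}-U\ket{\psi}\|_2\leq\epsilon$ with probability at least $1-\delta$.

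I would set $\mu:=\E[V_k]$ (independent of $k$ by the i.i.d.~assumption, and satisfying $\|\mu\|\leq 1$ by Jensen's inequality) and define $Z_k:=\mu^{N-k}V_k\cdots V_1\ket{\psi}$ for $k=0,1,\ldots,N$. Since the $V_i$ are independent, $Z_k=\E[V_N\cdots V_1\ket{\psi}\mid V_1,\ldots,V_k]$, so $\{Z_k\}$ is a Hilbert-space-valued martingale with endpoints $Z_N=V_N\cdots V_1\ket{\psi}$ and $Z_0=\mu^N\ket{\psi}$. Its increments take the form $D_k=\mu^{N-k}(V_k-\mu)V_{k-1}\cdots V_1\ket{\psi}$, and since $\|X_k\|=\lambda$ deterministically one has $\|V_k-I\|,\,\|\mu-I\|\leq t\lambda/N$ and consequently $\|D_k\|_2\leq\|V_k-\mu\|\leq 2t\lambda/N$ uniformly, using that $V_{k-1}\cdots V_1\ket{\psi}$ is a unit vector and $\|\mu^{N-k}\|\leq 1$.

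The concentration step uses Pinelis's Hilbert-space Azuma--Hoeffding inequality for martingales with uniformly bounded differences: with $c=2t\lambda/N$,
$$\Pr\!\left[\,\|Z_N-Z_0\|_2\geq s\,\right]\leq 2\exp\!\left(-\frac{s^2\,N}{C\,t^2\lambda^2}\right),$$
and solving for $s=\epsilon/2$ recovers the stated scaling $N\gtrsim t^2\lambda^2\log(1/\delta)/\epsilon^2$. Finally, I would absorb the bias $\|Z_0-U\ket{\psi}\|_2=\|(\mu^N-W^N)\ket{\psi}\|_2$ where $W:=\exp(-\mathrm{i}Ht/N)$: Taylor-expanding $\mu=\E[\exp(-\mathrm{i}(t/N)X_k)]$ and using $\E X_k=H$ gives $\|\mu-W\|=\mathcal{O}((t\lambda/N)^2)$, and telescoping with $\|\mu\|,\|W\|\leq 1$ yields $\|\mu^N-W^N\|\leq\mathcal{O}(t^2\lambda^2/N)=\mathcal{O}(\epsilon^2)$ once $N$ meets \eqref{eq:singlerandomgate}, which is negligible compared to the fluctuation budget.

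The main obstacle is the martingale tail bound itself: the whole argument hinges on using a genuinely dimension-free concentration inequality for the $2^n$-dimensional Hilbert-space-valued increments. Pinelis-type inequalities provide exactly this for martingales with uniformly bounded differences, and their availability is what separates this result from Theorem~\ref{thm:allinput}, whose diamond-norm formulation forces a matrix-martingale bound and therefore an unavoidable $\log(2^n)=n$ penalty. A secondary technical point is verifying the operator Taylor bound $\|\mu-W\|=\mathcal{O}((t\lambda/N)^2)$ rigorously via a Duhamel-style remainder, so that the bias from replacing $\mu^N$ by $W^N$ truly remains lower order than $\epsilon$.
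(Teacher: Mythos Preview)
Your proposal is correct and essentially mirrors the paper's argument: the same vector martingale $Z_k=(\mathbb{E}V)^{N-k}V_k\cdots V_1\ket{\psi}$, the same increment bound $\|D_k\|_{\ell_2}\le 2t\lambda/N$, and the same bias control via $\|(\mathbb{E}V)-U^{1/N}\|\le (t\lambda/N)^2$ followed by telescoping. The only difference is the concentration step: you invoke Pinelis's dimension-free Hilbert-space tail bound directly, whereas the paper's main presentation derives $q$-th moment bounds via uniform smoothness and then optimizes Markov's inequality over $q$; the paper itself remarks that the Pinelis/vector-Freedman route would equally well yield the result.
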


Again, convexity allows us to extend this bound to a (fixed) mixed state $\rho = \sum_i p_i |\psi_i \rangle \! \langle \psi_i|$ as well. And, a complementary result bounds the expected approximation error in terms of gate count $N$.

\begin{cor}[qDRIFT: Error bound in trace distance]
\label{cor:exp_fixedinput}
Consider an $n$-qubit Hamiltonian $H = \sum_j h_j$ with $\lambda = \sum_j \norm{h_j}$.
A randomized product formula $V_N \cdots V_1$, drawn from \eqref{eq:qDRIFT}, 
has expected trace distance error for any fixed input
\begin{align*}
&\max_{\ket{\psi}: \text{ state}} \,\,\, \mathbb{E} \left[  \max_{O: \norm{O}  \leq 1} \left| \langle O \rangle_{V_N \cdots V_1 \ket{\psi}} -  \langle O \rangle_{\exp(-\mathrm{i} t H) \ket{\psi}} \right| \right] \\
\lesssim &\sqrt{\frac{ t^2 \lambda^2}{N}}.
\end{align*}
\end{cor}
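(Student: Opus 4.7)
The plan is to derive this expectation bound as a direct consequence of the high-probability statement in Theorem~\ref{thm:fixedinput} via the standard ``integrate the tail'' identity
\[
\mathbb{E}[X] = \int_0^\infty \Pr[X > u]\, du
\]
applied to the nonnegative random variable $X = \max_{\|O\|\le 1}\bigl| \langle O\rangle_{V_N\cdots V_1\ket\psi} - \langle O\rangle_{\exp(-\mathrm i tH)\ket\psi}\bigr|$, which is just the trace distance between the sampled and ideal output states for the (arbitrary but fixed) input $\ket\psi$.

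First, I would re-express Theorem~\ref{thm:fixedinput} as a subgaussian-type tail bound. Rearranging the sufficient condition $N \ge C\,t^2\lambda^2 \log(1/\delta)/\epsilon^2$ and solving for $\delta$ in terms of $\epsilon$ gives
\[
\Pr[X > \epsilon] \;\le\; \exp\!\left( - c\,\frac{N\epsilon^2}{t^2\lambda^2} \right)
\]
for some absolute constant $c>0$, valid for all $\epsilon$ in the regime where the right-hand side is at most $1$. For smaller $\epsilon$ the trivial bound $\Pr[X > \epsilon]\le 1$ is used, and in any case $X\le 2$ since $X$ is a trace distance.

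Second, I would split the tail integral at the natural scale $u_0 := \sqrt{t^2\lambda^2 / (cN)}$. On $[0,u_0]$ I bound $\Pr[X>u]\le 1$, contributing $u_0$. On $[u_0,\infty)$ I use the subgaussian tail, change variables $v = u\sqrt{cN/(t^2\lambda^2)}$, and recognize a standard Gaussian tail integral $\int_1^\infty e^{-v^2}\,dv = O(1)$; the change of variables pulls out exactly a factor of $\sqrt{t^2\lambda^2/N}$. Combining both pieces yields
\[
\mathbb{E}[X] \;\lesssim\; \sqrt{\frac{t^2\lambda^2}{N}},
\]
and since the bound from Theorem~\ref{thm:fixedinput} is uniform in $\ket\psi$, the same estimate survives the outer maximum over input states, establishing the claim.

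There is essentially no hard step here; the argument is the routine conversion of a high-probability concentration bound into a moment bound. The only thing worth being careful about is that the exponential tail estimate is only valid for $\epsilon$ not too small (otherwise $\delta$ from Theorem~\ref{thm:fixedinput} would exceed $1$), which is precisely why the integral must be split at the scale $u_0$; this also matches why the corollary's bound is a square root of the quantity in \eqref{eq:singlerandomgate} rather than a logarithmic factor. The genuinely substantive work is all packaged in Theorem~\ref{thm:fixedinput} itself, whose proof is where the martingale concentration machinery is invoked.
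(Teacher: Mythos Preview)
Your approach is correct, but it differs from the paper's own derivation. In the paper, the expectation bound for a fixed input is obtained \emph{directly} from an $L^q$ moment estimate (Lemma~\ref{lem:fixed-input-fluctuation}, proved via the uniform smoothness inequality of Proposition~\ref{prop:pythagoras}): setting $q=2$ in the bound
\[
\big(\mathbb{E}\|(V_N\cdots V_1-\mathbb{E}[V_N\cdots V_1])\ket\psi\|_{\ell_2}^q\big)^{1/q}\le 2\sqrt{(q-1)(t\lambda)^2/N}
\]
immediately gives the expectation bound, and the tail bound of Theorem~\ref{thm:fixedinput} is then extracted \emph{afterwards} via Markov's inequality on large $q$. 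You go in the opposite direction: you take Theorem~\ref{thm:fixedinput} as input and integrate the subgaussian tail. This is exactly the route the paper uses for the \emph{worst-case} corollary (Corollary~\ref{cor:exp_allinput}, see the proof of Proposition~\ref{prop:expectation}), so it is entirely natural to transplant it here. What the paper's route buys is a slightly cleaner constant and no need to split the integral; what your route buys is that it treats Theorem~\ref{thm:fixedinput} as a black box, so the argument is agnostic to whether the tail bound came from uniform smoothness or from a vector Freedman inequality (Fact~\ref{fact:vector_martingale}).
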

 
Theorem~\ref{thm:fixedinput} yields an $n$-fold improvement over the gate count from Theorem~\ref{thm:allinput}.

So, for a 100-qubit system, focusing on a single input state leads to a $100\times$ reduction
in gate complexity over a simulation that works for all input states. The product formulas that work for a fixed input state may vary with the choice of state,
but they are all generated using the same \textsc{qDRIFT} procedure. 

\AC{The probabilistic origin of this improvement is in stark contrast with deterministic constructions, where
additional structure, such as low energy input states~\cite{low_energy2020}, is required to further reduce the gate count.
Here, we only make one assumption that is much weaker: the state mus be fixed and cannot depend on the concrete gate sequence being sampled.

We can even construct short product formulas that work for a moderate number of (arbitrary) input states by increasing the gate complexity slightly and invoking a union bound argument. 

The proof of Theorem~\ref{thm:fixedinput} is similar in spirit to the proof of Theorem~\ref{thm:allinput}. We construct a random walk from the (fixed) starting state $\ket{\psi}$.
We show that, with high probability, the output state is close to the ideal output state $U \ket{\psi}$. However, what marks the difference from the unitary results (Theorem~\ref{thm:allinput}) is that
\textit{vector} concentration inequalities suffice. 
More precisely, we analyze the vector random walk using a geometric tool, called uniform smoothness~\cite{HNTR20:Matrix-Product}. The details appear in Section~\ref{sec:proofthm2}.
The resulting $n$-fold improvement can also be understood as a result of switching the order of quantifiers, see~\ref{unionboundexp} for an explicit example. }

\subsection{
Extension to general products of random unitaries}
\AC{
So far, we have presented our results exclusively for \textsc{qDRIFT}. But the underlying concentration analysis readily extends to more general random walks on the unitary group. 
Let $V_N \cdots V_1 \in U(2^n)$ be a random product designed to approximate a target unitary $U=U_N \cdots U_1$. We need two properties. \newline
(i) \textit{Causality:} for $1 \leq k \leq N$ the random selection of $V_k$ can only depend on previous choices for $V_1,\ldots,V_{k-1}$:
\begin{align}
\mathrm{Pr} \left[ V_k |V_N\ldots V_{k+1} V_{k-1},\ldots V_1 \right]
=& \mathrm{Pr} \left[ V_k| V_{k-1},\ldots ,V_1 \right] \label{eq:causality}
\end{align}
(ii) \textit{accurate approximation:} Each realization of $V_k$ (and their conditional expectation) must be close to the ideal unitary $U_k$. More precisely, let $a_k,b_k >0$ be constants such that \begin{align}
\lV V_k-\BE_{k-1} V_k\rV\le a_k \quad \text{and} \quad 
\left\| \mathbb{E}_{k-1} V_k - U_k \right\| \leq b_k, \label{eq:small-steps}\\ \text{where}\ \ \ \ \mathbb{E}_{k-1}V_k:=\mathbb{E} \left[ V_k| V_{k-1},\ldots,V_1 \right] \nonumber
\end{align}
almost surely for all $1 \leq k\leq N$.
All concentration results from this work, as well as the main result from \cite{campbell2019random}, do readily extend to random products that do obey these two properties.

\begin{thm}[general concentration bounds for products of random unitaries]\label{thm:summary_of_errors}
Let $V=V_N \cdot V_1 \in U(2^n)$ be a random product that approximates a target product $U=U_N \cdots U_1$ in a \textit{causal} (Eq.~\eqref{eq:causality}) and \textit{small-step} (Eq.~\eqref{eq:small-steps} fashion.
Then, the associated unitary channels $\mathcal{V}(X) = VXV^\dagger$ and $\mathcal{U}(X) = UXU^\dagger$ obey

\begin{align*}
 \lV \CU-\CV\rV_\diamond &\le 2\sum_{k=1}^N a_k & \text{(worst case)},\\
 \mathbb{E} \lV \CU-\CV\rV_\diamond 
&\lesssim \sqrt{n\sum_{k=1}^N a_k^2}+ 2\sum_{k=1}^N b_k & \text{(typical case)},\\
\mathbb{E}\| \CU(\rho) - \CV(\rho) \|_1
&\lesssim \sqrt{\sum_{k=1}^N a_k^2}+ 2\sum_{k=1}^N b_k
& \text{(fixed input)},\\ 
\lV \CU-\mathbb{E}\CV\rV_\diamond &\leq 2 \sum_{k=1}^N b_k &
\text{(average case)}
\end{align*}
where $\lesssim$ suppressed absolute constants.
\end{thm}

\subsubsection{Concentration of randomly permuted Suzuki formulas}
So far, we have only introduced the first order Lie-Trotter formulas (\ref{eq:first_order}). The 2nd
order formula is typically called the Suzuki formula. For some short time $\tau >0$, define
\begin{align*}
    S_2(\tau) := \prod^L_{j=1}\exp(-\iunit(\tau/2) h_j ) \prod^1_{j=L}\exp(-\iunit(\tau/2) h_j ).
\end{align*}
Higher order formulas, also named after Suzuki, are constructed recursively from $S_2 (\tau)$:
\begin{equation*}
S_{2p}(\tau) := S_{2p-2}(q_p\tau)^2 S_{2p-2}((1-4q_p)\tau)S_{2p-2}(q_p\tau)^2,
\end{equation*}
where $q_p:=1/(4-4^{1/(2p-1)})$~\cite{suzuki1991general}. We can combine $r$ rounds of $2p$-th order Suzuki formulas with time $\tau = t/r$ each to approximate a desired quantum evolution. This yields
\begin{align*}
U = \e^{-\iunit t H} =& \e^{-\iunit (t/r) H} \cdots \e^{-\iunit (t/r) H} \\
\approx & S_{2p}(t/r) \cdots S_{2p}(t/r) = S_{2p}(t/r)^r
\end{align*}

For $\epsilon$ fixed, we choose an appropriate number of rounds $r$ and obtain a gate count $N\approx rL$ that scales as
\begin{equation}
N_{\mathrm{det}} \gtrsim t\Lambda L^2 \left(\frac{t\Lambda L}{\epsilon}\right)^{1/2p}
\quad \text{with} \quad \Lambda = \max_j \|h_j\|,\label{eq:old_gatecount}
\end{equation}
simple gates $\mathrm{e}^{-\iunit (t/r) h_j}$ to ensure worst-case approximation error $\| \mathcal{U}-\mathcal{V} \|_\diamond\le \epsilon$~\cite{childs2019faster}.
Note that the $2p$-th order Suzuki formula $S_{2p}(\tau)$ does not specify a preferred order in which the terms in Hamiltonian $h_j$ should appear.
Childs, Ostrander and Su observed that randomly permuting the order of Hamiltonian terms
within each block $S_{2p}(t/r)$ can suppress the approximation error for low order Suzuki formulas~\cite{childs2019faster} considerably. More precisely, 
we reshuffle the ordering in each Suzuki block by applying uniformly random permutations $\pi_1,\ldots,\pi_r$ to the term indices ($h_j \mapsto h_{\pi_1 (j)}$ for the first Suzuki block, etc.). Denote the resulting randomized Suzuki formula by $V_{2p}^{\pi_r\cdots \pi_1}= S_{2p}^{\pi_r}(t/r) \cdots S_{2p}^{\pi_1}(t/r)$.
Childs, Ostrander and Su proved that 
a total gate count

\begin{align}
    N_{\mathrm{avg}} \gtrsim &  t \Lambda L^2 \left(\frac{t\Lambda}{\epsilon}\right)^{1/2p} \label{eq:avg_suzuki}
\end{align}
ensures that the average channel (averaged over all possible permutations) obeys
$
\lV \CU-\mathbb{E}_{\pi_r,\ldots,\pi_1} \CV_{2p}^{\pi_r \cdots \pi_1}\rV_\diamond\le \epsilon.
$

We note in passing that Eq.~\eqref{eq:avg_suzuki} is tighter than the original bound presented in~\cite{childs2019faster}. This slight improvement follows from replacing the original mixing Lemma~\cite{campbell_mixing16,hastings2016turning} in the proof of Childs \textit{et al.} by a stronger statement derived in this work (Lemma~\ref{lem:mixing_maintext} below).

Applying Theorem~\ref{thm:summary_of_errors} to the 
problem at hand supplies strong concentration around this expected behavior.

\begin{cor}[concentration of randomly permuted Suzuki-formulas]\label{cor:randomsuzuki}
Consider a $n$-qubit Hamiltonian $H=\sum_j h_j$ with $\Lambda = \max_j \|h_j\|$, an associated time-$t$ evolution $U=\e^{-\iunit t H}$ and a Suzuki order $2p$.
Then, a total gate count of 
\begin{equation*}
    N_{\mathrm{typ}}  \gtrsim  t\Lambda L^2 \left(\frac{n\Lambda Lt}{\epsilon^2}\right)^{1/(4p+1)}+t \Lambda L^2 \left(\frac{t\Lambda}{\epsilon}\right)^{1/2p}, 
\end{equation*}
ensures that a randomly permuted Suzuki formula with $r$ terms obeys $\mathbb{E}_{\pi_1,\ldots,\pi_r} \| \mathcal{U} - \mathcal{V}_{2p}^{\pi_r \cdots \pi_1} \cdots S_{2p}^{\pi_1}(t/r) \| \leq \epsilon$ (typical case). For a fixed (but arbitrary) input state, the gate count can be further reduced to
\begin{equation*}
N_{\mathrm{fix}} \gtrsim  t\Lambda L^2\left(\frac{\Lambda Lt}{\epsilon^2}\right)^{1/(4p+1)}+t \Lambda L^2 \left(\frac{t\Lambda}{\epsilon}\right)^{1/2p}.
\end{equation*}

For simplicity, we omitted the logarithmic dependence on failure probability $\delta$.

\end{cor}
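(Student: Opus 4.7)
The plan is to cast the randomly permuted Suzuki product as a causal, small-step random product and then invoke the master concentration bounds of Theorem~\ref{thm:summary_of_errors}. Each random factor is one full Suzuki block $V_k := S_{2p}^{\pi_k}(t/r)$, driven by an independent uniform permutation $\pi_k$, so causality in the sense of Eq.~\eqref{eq:causality} is automatic. The target block is $U_k := \e^{-\iunit (t/r)H}$. A single Suzuki block involves $\mathcal{O}(L)$ simple gates (with a $p$-dependent constant absorbed into $\gtrsim$), so the total gate count at the end is $N \asymp rL$.

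The first step is to identify the two parameters $(a,b)$ of Eq.~\eqref{eq:small-steps}. For the bias $b$, I would \emph{reuse} the average-case Suzuki analysis of Childs, Ostrander, and Su, now sharpened by the improved mixing Lemma~\ref{lem:mixing_maintext}, which is exactly what drives Eq.~\eqref{eq:avg_suzuki}. Repackaging that bound as a per-block statement yields
\begin{equation*}
\bigl\| \E_{\pi_k} S_{2p}^{\pi_k}(t/r) - \e^{-\iunit(t/r)H} \bigr\| \;=:\; b \;\lesssim\; \frac{(t\Lambda)^{2p+1}L^{2p}}{r^{2p+1}}\cdot(\text{combinatorial const}),
\end{equation*}
so that $rb\lesssim\epsilon$ whenever $r\gtrsim t\Lambda L\,(t\Lambda/\epsilon)^{1/2p}$; this is precisely the second term in both gate counts. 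For the per-step fluctuation $a$, I would use the triangle inequality
\begin{equation*}
\bigl\| V_k - \E_{\pi_k}V_k\bigr\| \;\le\; \bigl\|V_k - \e^{-\iunit(t/r)H}\bigr\| + \bigl\|\E_{\pi_k}V_k-\e^{-\iunit(t/r)H}\bigr\|
\end{equation*}
together with the standard (permutation-independent) deterministic Suzuki error $\|S_{2p}^{\pi}(\tau)-\e^{-\iunit\tau H}\|\lesssim (\tau\Lambda L)^{2p+1}$. Since the second term is dominated by the first, we may take $a\lesssim (t\Lambda L/r)^{2p+1}$.

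The second step is to plug these $a,b$ into Theorem~\ref{thm:summary_of_errors}. Summing over the $r$ identically distributed blocks,
\begin{align*}
\E\lV \CU-\CV\rV_\diamond &\lesssim \sqrt{n\,r}\;a + r\,b
\;\lesssim\; \frac{\sqrt{n}\,(t\Lambda L)^{2p+1}}{r^{(4p+1)/2}} + r\,b, \\
\E\lV \CU(\rho)-\CV(\rho)\rV_1 &\lesssim \sqrt{r}\;a + r\,b
\;\lesssim\; \frac{(t\Lambda L)^{2p+1}}{r^{(4p+1)/2}} + r\,b,
\end{align*}
so that demanding each of the two summands to be $\lesssim\epsilon$ separately and then taking $N\asymp rL$ gives the two stated gate counts: the $\sqrt{r}a$ (resp.\ $\sqrt{nr}a$) branch solves to $r\gtrsim (t\Lambda L)^{(4p+2)/(4p+1)}\epsilon^{-2/(4p+1)}$ (resp.\ with an extra $n^{1/(4p+1)}$), which after multiplication by $L$ reproduces the $t\Lambda L^2(\Lambda Lt/\epsilon^2)^{1/(4p+1)}$ and $t\Lambda L^2(n\Lambda Lt/\epsilon^2)^{1/(4p+1)}$ terms. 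The $rb$-branch reproduces the Childs--Ostrander--Su term $t\Lambda L^2(t\Lambda/\epsilon)^{1/2p}$ unchanged.

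The main obstacle I expect is a bookkeeping one rather than a conceptual one: extracting a clean per-block bias bound $b$ from the proof of Eq.~\eqref{eq:avg_suzuki} so that it plugs directly into Theorem~\ref{thm:summary_of_errors}. Childs--Ostrander--Su originally argue at the level of the full $r$-fold channel via the mixing lemma, and one must reshape their Taylor-remainder/commutator estimates into the single-step form required by Eq.~\eqref{eq:small-steps}. Once that normalization is in place, verifying the small-step condition and optimizing the two-term upper bound in $r$ is elementary algebra, and the high-probability versions (with the suppressed $\log(1/\delta)$ factors) follow from the tail inequalities underlying Theorem~\ref{thm:summary_of_errors} applied to the same martingale.
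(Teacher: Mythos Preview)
Your proposal is correct and follows essentially the same route as the paper: identify each Suzuki block $V_k=S_{2p}^{\pi_k}(t/r)$ as one step of a causal random product, import the per-block bounds $a\lesssim (t\Lambda L/r)^{2p+1}$ and $b\lesssim (t\Lambda/r)^{2p+1}L^{2p}$ from the deterministic and averaged Suzuki analyses of \cite{childs2019faster,suzuki1985decomposition}, plug into Theorem~\ref{thm:summary_of_errors}, and solve the resulting two-term bound for $r$. The ``bookkeeping obstacle'' you anticipate is not really there: the per-block estimates for $a$ and $b$ are already stated at the single-step level in those references, so no reshaping of the $r$-fold channel argument is needed.
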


In contrast to qDRFIT, the parameters $n, \epsilon, L$ parameters now all raised to the $1/(4p+1)$st power, and the different randomized settings yield very much the same gate count $t\Lambda^2 L$ for large $p$. This is in accordance with observations provided  in~\cite{childs2019faster}.

For illustration, we have chosen to directly import older results (\ref{eq:old_gatecount}) to compare with the averaged channel bounds (\ref{eq:avg_suzuki}). However,  (\ref{eq:old_gatecount}) can possibly be sharpened to scale with commutator~\cite{childs2019theory}. Future work remains to carry out commutator analysis for the averaged channel(\ref{eq:avg_suzuki}) and then swiftly upgrade Corollary~\ref{cor:randomsuzuki}.

We refer to the appendix (Sec.~\ref{sec:proofgeneral}) for detailed statements and proofs. 

\subsubsection{Compiling}
Beyond Hamiltonian simulation, random product also help suppressing error in compiling. The task of compiling turns a perfect circuit diagram into a sequence of executable gates. However, since the gate set is discrete (or imperfect), the compiler can only return an approximate circuit 
\begin{align}
    \|U_N\cdots U_1 - V_N\cdots V_1\| \le \epsilon,
\end{align}
where each $V_k$ will be further decomposed into universal gates (such as using the Solovay-Kiteav Theorem, see e.g.~\cite{dawson2005kitaev}) up to some individual accuracy $\|U_k-V_k\|\le \eta$. In the worst case, the local error $\|U_k-V_k\|\le \eta$ accumulates linearly and we must conclude
\begin{align}
    \|U-V\| \le N\eta,  \label{eq:coherent-error}
\end{align} 
by means of the triangle inequality. This means that individual accuracy $\eta =\epsilon/N$ is required
to ensure an overall approximation error of (at most) $\epsilon$. 
This in turn, requires more gates for each approximation, because $\eta$-approximating $V_k$ requires a gate count\footnote{For the Solovay-Kiteav Theorem, $3\le c \le 4$~\cite{dawson2005kitaev} and it is also known that $c=1$ is the best possible exponent~\cite{Harrow_2002}} proportional to $\log^c(\eta)$.

Randomized compiling~\cite{hastings2016turning,campbell_mixing16} addresses precisely this issue. 
It uses random gate synthesis to avoid that individual approximation errors add up ``coherently'' over the entire compilation. The resulting ``incoherent'' error accumulation can be much more favorable and the mixing Lemma~\cite{hastings2016turning,campbell_mixing16} makes this intuition precise. In the following, we present a sharpened version of this statement that seems to be novel.

\begin{lemma}[Mixing lemma]\label{lem:mixing_maintext}
Let $U_k$ be a fixed unitary and $V_k$ a random unitary. Then,
\begin{equation*}
     \lV  \mathcal{U}_k-\BE \mathcal{V}_k \rV_{\diamond} \le 4 \lV U_k- \BE V_k\rV.
\end{equation*}
\end{lemma}
\noindent We refer to Appendix~\ref{sub:conversion} for a self-contained proof. 

Back to compiling, $\lV \BE V_k-U_k\rV $ can as small as $\CO(\eta^2)$~\cite{campbell_mixing16} by mixing appropriate synthesis protocols for $V_k$. This yields an improved bound for the average channel,
\begin{align}
    \lV \mathcal{U}-\BE \mathcal{V} \rV_{\diamond} \le \CO(N\eta^2),\label{eq:random_compiling}
\end{align}
In words, the individual accuracy needed is suppressed quadratically to $\eta = \Omega(\sqrt{\epsilon/N})$.
In~\cite{campbell_mixing16}, Campbell pointed out that this square root improvement may turn into a multiplicative overhead reduction, depending on the gate synthesis efficiency $\log^c(\sqrt{\eta})=\log^c(\eta)/2^c $.
Using Theorem~\ref{thm:summary_of_errors}, we can immediately bound the performance of randomized compiling \textit{without mixing}.

\begin{cor}[Randomized compiling without mixing]
Suppose that we wish to approximate $U=U_N \cdots U_1$ by a gate collection $V=V_N \cdots V_1$ such that each $V_k$ is random and obeys $\|U_k-V_k\|\le \eta$ almost surely. Then,
\begin{equation*}
\BE  \| U - V \| \lesssim \sqrt{nN} \eta.
\end{equation*}
What is more, the $\sqrt{n}$-factor on the r.h.s.\ disappears if we restrict attention to a fixed input state.

\end{cor}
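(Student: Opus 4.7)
The plan is to cast randomized compiling directly into the framework of Theorem~\ref{thm:summary_of_errors} and read off the claimed bound. The random product $V = V_N \cdots V_1$ approximates the target $U = U_N \cdots U_1$, so I only need to verify the causality hypothesis \eqref{eq:causality} and identify admissible constants $a_k, b_k$ for the small-step hypothesis \eqref{eq:small-steps}. Causality is the mild assumption that the random choice of $V_k$ depends at most on $V_1,\ldots,V_{k-1}$; this holds in the standard randomized-compilation setting where each synthesis step is drawn independently, and it can be assumed without loss of generality in any reasonable protocol.

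Next I would extract $a_k$ and $b_k$ from the single hypothesis $\|U_k - V_k\| \le \eta$ almost surely. By contractivity of conditional expectation in operator norm,
\begin{equation*}
b_k \;=\; \bigl\|\mathbb{E}_{k-1} V_k - U_k\bigr\|
\;=\; \bigl\|\mathbb{E}_{k-1}(V_k - U_k)\bigr\|
\;\leq\; \eta,
\end{equation*}
and by the triangle inequality
\begin{equation*}
\|V_k - \mathbb{E}_{k-1} V_k\| \;\le\; \|V_k - U_k\| + \|U_k - \mathbb{E}_{k-1} V_k\| \;\le\; 2\eta,
\end{equation*}
so we may take $a_k \le 2\eta$ almost surely. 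These are the only ingredients required.

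Plugging these constants into the typical-case bound of Theorem~\ref{thm:summary_of_errors} gives
\begin{equation*}
\mathbb{E}\,\lV \mathcal{U}-\mathcal{V}\rV_\diamond
\;\lesssim\; \sqrt{n \sum_{k=1}^N a_k^2} \;+\; 2\sum_{k=1}^N b_k
\;\lesssim\; \sqrt{n N}\,\eta \;+\; N\eta,
\end{equation*}
and keeping only the stochastic (concentration) contribution yields the stated $\sqrt{nN}\,\eta$ scaling; this is the honest gain coming from randomization alone, and it is what \emph{additional} use of the mixing Lemma~\ref{lem:mixing_maintext} (to suppress $b_k$ from $\eta$ to $\eta^2$) upgrades to the known $\sqrt{nN}\,\eta + N\eta^2$ bound of~\cite{campbell_mixing16}. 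For the fixed-input claim I would invoke the third line of Theorem~\ref{thm:summary_of_errors} with the same $a_k,b_k$, obtaining
\begin{equation*}
\mathbb{E}\,\| \mathcal{U}(\rho) - \mathcal{V}(\rho)\|_1
\;\lesssim\; \sqrt{\sum_{k=1}^N a_k^2} \;+\; 2\sum_{k=1}^N b_k
\;\lesssim\; \sqrt{N}\,\eta,
\end{equation*}
so the $\sqrt{n}$ factor is automatically shed, because for a fixed state the martingale lives in a single vector orbit rather than over all $2^n$ input directions.

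There is no serious obstacle: the entire content of the corollary is bookkeeping on top of Theorem~\ref{thm:summary_of_errors}. The only delicate point is a conceptual one — justifying that $b_k \le \eta$ is the correct (tight) bound when no mixing is performed, so that the advertised $\sqrt{nN}\,\eta$ indeed captures the dominant improvement over the deterministic coherent-error bound \eqref{eq:coherent-error} in the regime $n \ll N$.
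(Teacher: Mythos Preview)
Your proposal is correct and follows essentially the same route as the paper: the corollary is stated as an immediate consequence of Theorem~\ref{thm:summary_of_errors}, and your bookkeeping (extracting $a_k \le 2\eta$ and $b_k \le \eta$ from the almost-sure hypothesis, then reading off the typical-case and fixed-input lines) is exactly what the paper has in mind. You are also right to flag the bias term $\sum_k b_k \lesssim N\eta$ explicitly; the paper's statement suppresses it and records only the fluctuation contribution $\sqrt{nN}\,\eta$, which is the part genuinely attributable to randomization rather than mixing.
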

This translates to individual accuracy $\eta = \CO(\epsilon/\sqrt{nN})$, and $\eta = \CO(\epsilon/\sqrt{N})$ respectively. Both results interpolate between the worst case~\eqref{eq:coherent-error} (``coherent'' error accumulation) and best case~\eqref{eq:random_compiling} (``incoherent'' error accumulation).

}

\subsection{Numerical experiments} \label{sec:numerics}

\begin{figure*}[t]
\centering
\includegraphics[width=1.0\textwidth]{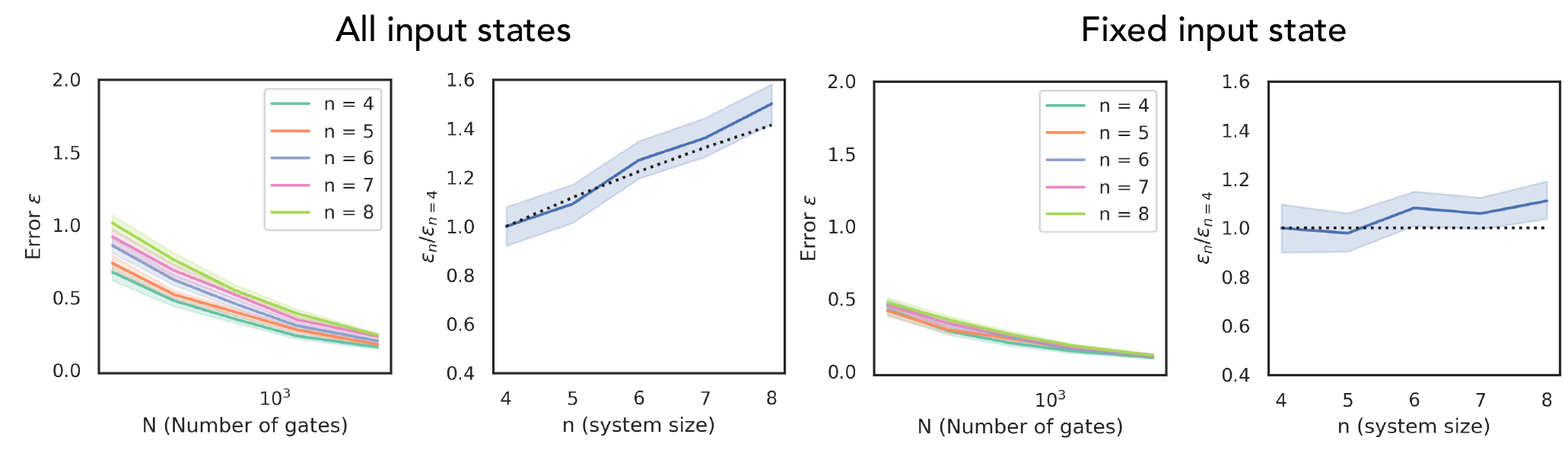}
\caption{\AC{
\textit{Numerical experiments for simulating 1D Heisenberg model under different gate count $N$.} \\
In \textit{All input states} (left), we consider $\epsilon = \norm{U - V_N \ldots V_1}_\infty$, which considers the error over all input states and observables. In \textit{Fixed input state} (right), we consider $\epsilon = \norm{U \ket{\psi} - V_N \ldots V_1 \ket{\psi}}_{\ell_2}$, which considers the error over all observables. The input state $\ket{\psi}$ is chosen to be the tensor product of single-qubit Haar-random states.
For both \textit{All input state} and \textit{Fixed input state}, we give an additional plot showing how the error $\epsilon$ increases as the system size $n$ increases for a fixed number of gates $N = 160$.
The $y$-axis is normalized using the average error for system size $n = 4$ over $50$ independent runs.
Bounds in Theorem~\ref{thm:allinput}~and~\ref{thm:fixedinput} show that the relative error $\epsilon_n / \epsilon_{n=4}$ scales as $\sqrt{n / 4}$ for \textit{All input state} and stays as constant $1$ for \textit{Fixed input state}, which are shown as the dotted lines.
The shaded regions are the standard deviation over $50$ independent runs.}
}
\label{fig:exper}
\end{figure*}
\AC{
In this section, we perform numerical experiments for simulating a simple Heisenberg model on a one-dimensional chain with a randomly sampled product formula. 
For $n$ qubits, $H = \frac{1}{n-1} \sum_{i=1}^{n-1} X_i X_{i+1} + Y_i Y_{i+1} + Z_i Z_{i+1}$ and we view this as a sum of $3(n-1)$ simple terms.
The normalization keeps the interaction strength $\lambda =3$ as a constant and we consider a constant time evolution $t=2$.The numerical experiments for the error under various setups using different gate count $N$ and different system sizes $n$ are shown in Figure~\ref{fig:exper}.
We can see that the error $\epsilon$ when we consider all input states scales as $\sqrt{n}$.
In contrast, the error $\epsilon$ stays roughly the same when we only consider a single input state.
This is in accordance with the theoretical predictions presented in Theorem~\ref{thm:allinput}~and~\ref{thm:fixedinput}.
}

\section{Instructive examples and proof idea}
\label{sec:proofidea}

\subsection{Comparison between stochastic averages of product formulas and concrete instances}
\label{nvs1}

This section considers an extremely simple Hamiltonian to pinpoint important differences between averaging random product formulas (that is, Campbell's black box) and concrete instances of product formulas.
The example Hamiltonian is a 1-local non-interacting Hamiltonian with a Pauli-$Z$ operator acting on each qubit:
\begin{align}
H =& \frac{1}{n}\sum_{k=1}^n Z_k,
\quad \text{where} \label{eq:example-hamiltonian}\\
Z_k =& \underset{\text{$(k-1)$-times}}{\underbrace{\mathbb{I} \otimes \cdots \otimes \mathbb{I}}} \otimes Z \otimes 
\underset{\text{$(n-k)$-times}}{\underbrace{\mathbb{I} \otimes \cdots \otimes \mathbb{I}}} 
\nonumber
\end{align}
for $1 \leq k \leq n$.
The relevant parameters are $L=n$ (number of terms), $\lambda =\tfrac{1}{n}\sum_{k=1}^n \|Z_k \|  = 1$ (interaction strength) and we fix the evolution time to $t=\pi$. 

Stochastic averages of random product formulas can accurately approximate the associated unitary evolution $U=\exp(- \iunit \pi H)$ after only a few iterations. 
The following observation is an immediate consequence of Campbell's main result \cite{campbell2019random}, see also Proposition~\ref{prop:bias1} below.

\begin{cor} \label{obs:example}
Fix a target accuracy $\epsilon$ and set $N = t^2 \lambda^2/\epsilon \approx 10/\epsilon$.Then, $N$ successive applications of the \textsc{qDRIFT} single-step average $\mathcal{V}(\rho)=\tfrac{1}{n}\sum_k \exp(- \mathrm{i} \tfrac{\pi}{N} Z_k ) \otimes \mathbb{I}^{(\text{else})} \rho \exp( \mathrm{i} \tfrac{\pi}{N} Z_k) \otimes \mathbb{I}^{(\text{else})}$ (Campbell's black box)
approximate the target unitary channel $\mathcal{U}(\rho)=U\rho U^\dagger$ up to accuracy $\epsilon$ in diamond distance. In particular, $\tfrac{1}{2}\| \mathcal{V}^{(N)}(|\psi \rangle \! \langle \psi|) - U |\psi \rangle \! \langle \psi| U^\dagger \|_1 \leq \epsilon$ for all input states $|\psi \rangle \! \langle \psi|$.
\end{cor}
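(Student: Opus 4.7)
}
The plan is to invoke Campbell's diamond-norm guarantee for the average \textsc{qDRIFT} channel in the specialized setting of the Hamiltonian~\eqref{eq:example-hamiltonian} and verify that the stated choice of $N$ is sufficient. Concretely, Campbell's bound (already recalled in the introduction, cf.~Eq.~\eqref{eq:qDRIFT_gate_count}, and restated in the forthcoming Proposition~\ref{prop:bias1}) asserts that the $N$-fold composition of the single-step \textsc{qDRIFT} channel satisfies
\begin{equation*}
    \lV \mathcal{V}^{(N)} - \mathcal{U} \rV_{\diamond} \;\lesssim\; \frac{t^{2}\lambda^{2}}{N}.
\end{equation*}
The proof will therefore consist almost entirely of reading off parameters: for the example Hamiltonian one has $\lambda = \tfrac{1}{n}\sum_{k=1}^{n}\lV Z_{k}\rV = 1$ and the evolution time is fixed to $t=\pi$, so $t^{2}\lambda^{2}=\pi^{2}<10$.

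The first step is to verify that the single-step channel $\mathcal{V}(\rho) = \tfrac{1}{n}\sum_{k} \mathrm{e}^{-\iunit (\pi/N) Z_{k}}\rho\,\mathrm{e}^{+\iunit(\pi/N) Z_{k}}$ displayed in the statement is indeed the average \textsc{qDRIFT} channel associated with~\eqref{eq:example-hamiltonian}: the importance-sampling distribution is uniform because every $\lV Z_{k}\rV$ equals the same value, and the rescaled operator $X_k = (\lambda/\lV h_k\rV) h_k$ collapses to $Z_{k}$. Hence $\mathcal{V}^{(N)}$ is precisely the $N$-step average channel to which Campbell's bound applies.

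The second step is to plug in the parameters. Using $t^{2}\lambda^{2}\le 10$ together with $N\ge 10/\epsilon$ yields $\lV \mathcal{V}^{(N)}-\mathcal{U}\rV_{\diamond}\le \epsilon$. The trace-distance statement $\tfrac{1}{2}\lV \mathcal{V}^{(N)}(\ket{\psi}\!\bra{\psi})-U\ket{\psi}\!\bra{\psi}U^{\dagger}\rV_{1}\le \epsilon$ for every pure input then follows from the definition of the diamond norm as a supremum of trace-norm differences over inputs (no ancilla is needed here because the ideal channel is unitary, so purification does not enlarge the maximum).

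I do not anticipate any genuine obstacle: the only thing to be slightly careful about is the absolute constant hidden in $\lesssim$. This is handled by appealing to the explicit (rather than asymptotic) form of Campbell's bound, which is what Proposition~\ref{prop:bias1} will provide. Thus the entire argument reduces to (i) recognizing that the example satisfies the hypotheses of the \textsc{qDRIFT} bound with $\lambda=1$, (ii) substituting $t=\pi$, and (iii) choosing $N$ large enough that the resulting upper bound is at most $\epsilon$.
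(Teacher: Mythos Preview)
Your proposal is correct and matches the paper's approach exactly: the paper states that Corollary~\ref{obs:example} is ``an immediate consequence of Campbell's main result~\cite{campbell2019random}, see also Proposition~\ref{prop:bias1} below,'' and your plan of reading off $\lambda=1$, $t=\pi$, checking that the displayed single-step channel is the \textsc{qDRIFT} average, and invoking the bound $\tfrac{1}{2}\lV\mathcal{V}^{(N)}-\mathcal{U}\rV_{\diamond}\le t^{2}\lambda^{2}/N$ is precisely this. One small quibble: your parenthetical ``no ancilla is needed here because the ideal channel is unitary'' is misplaced, since $\mathcal{V}^{(N)}$ is not unitary; but the trace-distance conclusion for any fixed pure input follows trivially from the definition of the diamond norm as a supremum, so the argument goes through regardless.
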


This assertion seems remarkably strong. In particular, the sequence length $N$ does not depend on the number of qubits $n$. Once $n$ is sufficiently large it becomes impossible for concrete product formulas to achieve comparable results. The problem is that the sequence length $N$ is too small to address all $n$ qubits. This necessarily leads to substantial discrepancies between the simulated time evolution $V_N \cdots V_1$ and the actual target $U$, see Figure~\ref{fig:determin-lower} for an illustration.

\begin{lemma}
\label{lem:example-negative}
Assume that $n$ is an even number.
It is impossible to accurately approximate the time evolution $U$ defined in Eq.~\eqref{eq:example-hamiltonian} with  $N < n/2$ elementary gates of the form $V_i=\exp (-\iunit \tfrac{\pi}{N}Z_{k(i)}) \otimes \mathbb{I}^{(\text{else})}$.
More precisely, for each product formula $V=V_N \cdots V_1$, there exists an input state $|\psi \rangle \! \langle \psi|$ such that $\tfrac{1}{2}\|V |\psi \rangle \! \langle \psi| V^\dagger - U|\psi \rangle \! \langle \psi| U^\dagger \|_1 =1$.
\end{lemma}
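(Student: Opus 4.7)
The plan is to exploit that every $Z_k$ commutes, so both $V=V_N\cdots V_1$ and the target $U=\e^{-\iunit\pi H}$ are simultaneously diagonal in the computational basis. The proof then reduces to exhibiting two basis states $|s\rangle,|s'\rangle\in\{0,1\}^n$ on which the diagonal entries of $V^\dagger U$ differ by a phase $-1$; the equal superposition $|\psi\rangle=(|s\rangle+|s'\rangle)/\sqrt{2}$ is mapped by $V^\dagger U$ to a vector orthogonal to itself, which forces the pure outputs $V|\psi\rangle$ and $U|\psi\rangle$ to be orthogonal, and hence the trace distance to equal~$1$.

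First I would record the diagonal phase structure. Collecting the multiplicities $m_k := |\{i : k(i)=k\}|$ with $\sum_k m_k=N$, we get $V=\exp(-\iunit(\pi/N)\sum_k m_k Z_k)$, $U=\exp(-\iunit(\pi/n)\sum_k Z_k)$, and therefore $V^\dagger U|s\rangle=\e^{\iunit\phi(s)}|s\rangle$ with
\[
\phi(s) \;=\; \pi \sum_k \bigl(m_k/N - 1/n\bigr)\,(-1)^{s_k}.
\]
The key observation is that for an \emph{untouched} qubit ($m_k=0$) the coefficient multiplying $(-1)^{s_k}$ reduces simply to $-1/n$, so flipping any single untouched qubit changes $\phi$ by exactly $\pm 2\pi/n$.

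Next I would use the counting hypothesis. Since $N<n/2$, at most $n/2-1$ qubits are touched, so at least $n/2+1$ qubits are untouched; because $n$ is even, I can select a subset $S$ of exactly $n/2$ untouched qubits. Defining $s,s'$ to agree off $S$ and to differ (say $s_k=0$, $s'_k=1$) on $S$, a direct calculation gives
\[
\phi(s')-\phi(s) \;=\; \pi\sum_{k\in S}(-1/n)\bigl((-1)^1-(-1)^0\bigr) \;=\; \tfrac{n}{2}\cdot\tfrac{2\pi}{n} \;=\; \pi.
\]
Hence $V^\dagger U|s'\rangle=-\e^{\iunit\phi(s)}|s'\rangle$ while $V^\dagger U|s\rangle=\e^{\iunit\phi(s)}|s\rangle$, and a one-line computation yields $\langle\psi|V^\dagger U|\psi\rangle=\tfrac12\e^{\iunit\phi(s)}(1-1)=0$ for $|\psi\rangle=(|s\rangle+|s'\rangle)/\sqrt{2}$, which is the desired orthogonality.

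The only step requiring care is the combinatorial bookkeeping: the hypotheses ``$n$ even'' and ``$N<n/2$'' are used precisely to guarantee that a subset $S$ of exactly $|S|=n/2$ untouched qubits exists, which is in turn exactly the size needed so that the accumulated phase shift lands on $\pi$ rather than on some smaller (possibly irrational) multiple of $2\pi/n$. Because no inequality is invoked anywhere, the argument is tight by construction and the threshold $N=n/2$ cannot be improved by this phase-counting strategy.
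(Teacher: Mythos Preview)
Your proof is correct and follows essentially the same approach as the paper: both exploit that at least $n/2$ qubits are untouched by $V$, place a GHZ-type superposition $(|s\rangle+|s'\rangle)/\sqrt{2}$ on $n/2$ of them, and observe that the phases accumulated by $U$ on these two branches differ by exactly $\pi$, forcing $V|\psi\rangle\perp U|\psi\rangle$. The only cosmetic difference is that the paper tracks $U$ and $V$ separately (putting $|0\rangle$ on the touched qubits so $V$ acts as a global phase) whereas you compute the diagonal of $V^\dagger U$ directly; your version is slightly cleaner since it makes clear the values of $s$ on the touched qubits are irrelevant.
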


\begin{figure}[t]
    \centering
    \includegraphics[width=1\columnwidth]{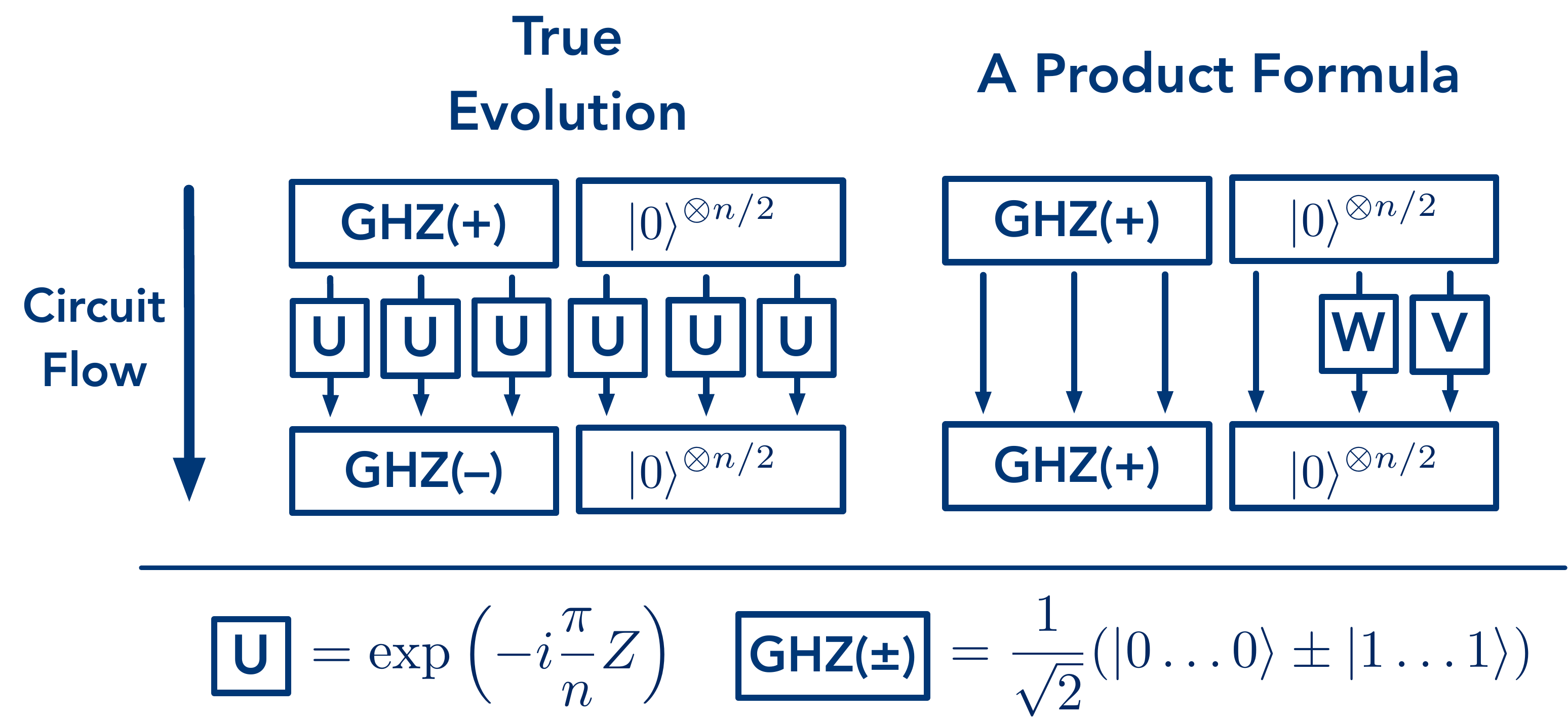}
    \caption{\textit{Illustration of the worst-case input for a product formula simulating evolution of a simple Hamiltonian.} \\
    The Hamiltonian $H = \frac{1}{n} \sum_{k=1}^n Z_k$ produces a time evolution that factorizes into single qubit unitaries $U$ (left). A product formula with fewer than $n/2$ single-site terms (right) is too small to address all qubits; at least $n/2$ of them must remain untouched.
    These errors accumulate for a GHZ-state comprised of these untouched qubits. If $n$ is large, even small evolution times ($U = \exp (-\mathrm{i} \tfrac{\pi}{n}Z)$) can accumulate and lead to a maximal approximation error ($\langle \mathrm{GHZ}(+), \mathrm{GHZ}(-) \rangle=0$).
   }
    \label{fig:determin-lower}
\end{figure}

\begin{proof}
All terms in the Hamiltonian \eqref{eq:example-hamiltonian} commute. Hence, the associated target evolution factorizes nicely into tensor products: $U = \exp (-\iunit  \pi H) = \exp (-\iunit  \tfrac{\pi}{n}Z_1) \otimes \cdots \otimes \exp (-\iunit  \tfrac{\pi}{n}Z_n)$. Up to a global phase, each single-qubit unitary affects the computational basis in the following fashion: $\exp( -\iunit  \tfrac{\pi}{n}Z) |0 \rangle = |0 \rangle$ and $\exp (-\iunit  \tfrac{\pi}{n} Z) |1 \rangle = \exp (\iunit \tfrac{2\pi}{n}) |1 \rangle$. 
These small phase shifts can add up for states that are in superposition. Consider the tensor product of a GHZ state on $n/2$ qubits with the all-zeroes state on the remaining half: $|\tilde{\psi}\rangle = \tfrac{1}{\sqrt{2}}(|0 \rangle^{\otimes n/2} + |1 \rangle^{\otimes n/2}) \otimes |0 \rangle^{\otimes n/2}$.
Then,
\begin{align*}
U |\tilde{\psi} \rangle =& \exp(-\iunit  \tfrac{2\pi}{n} Z)^{\otimes n}|\tilde{\psi}\rangle \\
=& \tfrac{1}{\sqrt{2}}( |0 \rangle + (\exp (\iunit \tfrac{2\pi}{n}))^{n/2} |1 \rangle) \otimes |0 \rangle^{\otimes n/2} \\
=& \tfrac{1}{\sqrt{2}}(|0 \rangle^{\otimes n/2}- |1 \rangle^{\otimes n/2}) \otimes |0 \rangle^{\otimes n/2}
\end{align*}
and we can easily check that input and output are orthogonal to each other: $\tfrac{1}{2}\|U |\tilde{\psi} \rangle \! \langle \tilde{\psi}| U^\dagger - |\tilde{\psi} \rangle \! \langle \tilde{\psi}| \|_1 = 1$.
These features do not change if we permute the qubits in the input state $|\tilde{\psi} \rangle$. Any combination of a GHZ state on one half of the qubits with computational $|0\rangle$-states on the remaining ones obeys the same orthogonality relation. We can use this freedom to construct a worst-case input $|\psi \rangle$ for a fixed product formula $V=V_N \cdots V_1$ comprised of fewer than $n/2$ single-qubit gates. Simply initialize the (at most) $n/2$ qubits on which the product formula acts nontrivially in the computational 0-state and hide the GHZ component among the remaining qubits. By construction, the product formula $V$ does not affect this input state at all.
This is a worst case, because the target unitary $U$ does rotate the hidden GHZ component into an orthogonal configuration:
$\| U |\psi \rangle \! \langle \psi|U^\dagger - V|\psi \rangle \! \langle \psi| V^\dagger \|_1 = \| U |\psi \rangle \! \langle \psi| U^\dagger - |\psi \rangle \! \langle \psi| \|_1 =1$.
\end{proof}

This negative statement highlights that the gate count of (worst case) accurate product formulas must in general depend on the number of qubits and justifies the appearance of $n$ in Theorem~\ref{thm:allinput}. Note, however, that Lemma~\ref{lem:example-negative} is contingent on identifying a worst-case input state for a fixed (and known) product formula. If the input state is fixed, the situation can change dramatically. For instance, we could use explicit knowledge of the input to construct a product formula that accurately approximates its time evolution. Identifying an optimal product formula seems like a daunting task, but randomness can help. 
Theorem~\ref{thm:fixedinput} asserts that a collection of $N \gtrsim \pi^2/\epsilon^2$ randomly selected single-qubit gates approximate the time evolution \eqref{eq:example-hamiltonian} of any fixed input state $|\psi \rangle$ up to accuracy $\epsilon$ in trace distance. While this gate count is considerably larger than the one put forth in Corollary~\ref{obs:example}, it is still independent of the number of qubits. What is more, this assertion applies with high probability to \textit{any} fixed input state. 
This capitalizes on another advantage of generating unstructured product formulas according to a randomized procedure: it is extremely difficult to fool a randomized compiling procedure with an already fixed % (potentially maliciously chosen)
 input. 
\subsection{Instructive concentration argument for a simple Hamiltonian}
\label{unionboundexp}

 \begin{figure}[t]
  \centering
  \includegraphics[width=\columnwidth]{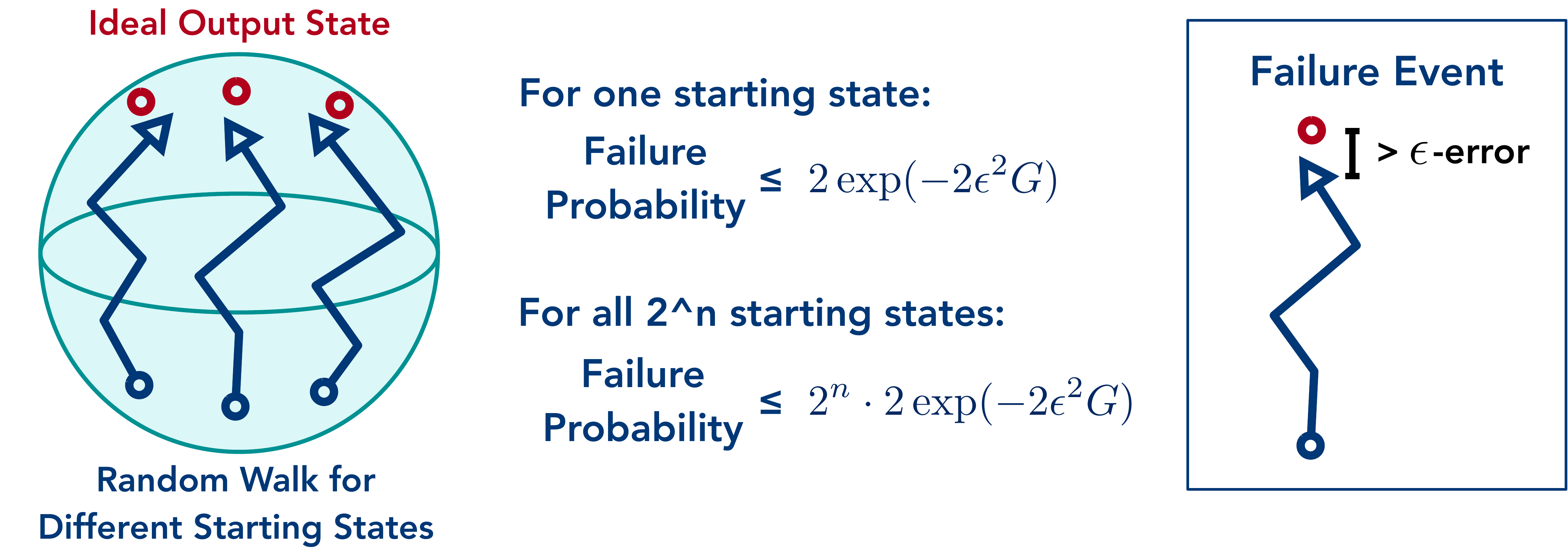}
  \caption{
  \textit{Illustration of the probabilistic proof for the commuting Hamiltonian given in Equation~\eqref{eq:CHZs}.}\\
  We consider all the $2^n$ computational basis states as the starting state. The probability for one of the starting state to incur at least an error $\epsilon$ is exponentially smaller than the probability for the maximum of the $2^n$ starting states to incur error $> \epsilon$. However the failure probability is exponential suppressed by increasing the gate count $N$. Hence one only need to set $N = n / \epsilon^2$.
  }
  \label{fig:factforCHS}
\end{figure}

This section provides intuition for the concentration effects that ultimately imply Theorem~\ref{thm:allinput}
by means of another example Hamiltonian that is composed of (commuting) Pauli-Z terms only: 

\begin{align}
\label{eq:CHZs}
    H =& \frac{1}{2^n} \sum_{\vct{p}\in \{0, 1\}^n} \alpha_\vct{p} Z_\vct{p}
    \quad \text{where} \\ Z_{\vct{p}}=&Z_{(p_1,\ldots,p_n)}=Z^{p_1}\otimes \cdots \otimes Z^{p_n} \nonumber
\end{align}
(with the convention that $Z^0 =\mathbb{I}$) and $\alpha_{\vct{p}} \in \left\{-1,1\right\}$.
That is, the Hamiltonian is a sum of $2^n$ signed Pauli strings that are comprised of $Z$ and $\mathbb{I}$, as well as a global sign. A high-order Suzuki formula would require a gate complexity of $\mathcal{O}(L) = \mathcal{O}(2^n)$ by putting down every term. In contrast, by random sampling, Theorem~\ref{thm:allinput} yields a gate complexity of $\mathcal{O}(n / \epsilon^2)$ (Theorem~\ref{thm:fixedinput} yields $\CO(1/\epsilon^2)$). This is an exponential improvement in system size.
%over $\mathcal{O}(2^n)$ 

\AC{
The physical intuition is that all the terms in the Hamiltonian act on the same system with $n$ qubits (a $2^n$-dimensional Hilbert space), so their actions must ``overlap'' with one another, and can be efficiently estimated by random sampling. 
To see this effect more clearly, let us write down the unitary evolution $\exp(-\mathrm{i} H)$ in the computational basis $\ket{\vct{b}}$ with multi-index $\vct{b}=(b_1,\ldots,b_n) \in \{0, 1\}^n$.
Note that all terms in the Hamiltonian \eqref{eq:CHZs} are diagonal in the computational basis. This implies
\begin{align}
\exp(-\mathrm{i} H) \ket{\vct{b}} =& \exp\left(-\mathrm{i} \frac{1}{2^n} \sum_{\vct{p} \in \{0, 1\}^n} \alpha_\vct{p} Z_\vct{p} \right) \ket{\vct{b}} \\
=& \exp\left(-\mathrm{i} \frac{1}{2^n} \sum_{\vct{p} \in \{0, 1\}^n} c_\vct{p}(\vct{b})\right) \ket{\vct{b}} \nonumber\\
:=& \mathrm{e}^{-\mathrm{i} S(\vct{b})} \ket{\vct{b}}, \nonumber
\end{align}
where $c_\vct{p}(\vct{b}) = \alpha_\vct{p}  \bra{\vct{b}} Z_\vct{p} \ket{\vct{b}} \in \{-1, 1\}$. When we instead sub-sample an effective Hamiltonian $\hat{H}$ comprised of $N$ randomly selected terms, the constructed product formula would be
\begin{align}
   \exp(-\mathrm{i} \hat{H})&= \exp\left(-\mathrm{i} \frac{1}{N} \alpha_{\vct{p}_N} Z_{\vct{p}_N}\right) \cdots \exp\left(-\mathrm{i} \frac{1}{N} \alpha_{\vct{p}_1} Z_{\vct{p}_1}\right) \ket{\vct{b}} \nonumber \\
    =& \exp\left(-\mathrm{i} \frac{1}{N} \sum_{k} c_{\vct{p}_k}(b)\right) \ket{\vct{b}} := \mathrm{e}^{-\mathrm{i} \hat{S}(\vct{b})} \ket{\vct{b}}. \nonumber
\end{align}
By Hoeffding's inequality, $\hat{S}(\vct{b}) = \frac{1}{N} \sum_{k} c_{\vct{p}_k}(\vct{b})$ should concentrate around $S(\vct{b}) = 2^{-n} \sum_{\vct{p} \in \{0, 1\}^n} c_\vct{p}(\vct{b})$ with standard deviation $1 / \sqrt{N}$ and an exponentially decaying tail (think Monte Carlo). An illustration and some helpful facts can be found in Figure~\ref{fig:factforCHS}. 

Let us now fix an (arbitrary) $n$-qubit basis state. Then, the probability of an $\epsilon$-deviation (or larger), i.e.\ \ $|\hat{S}(b) - S(b)| > \epsilon$, 
can be bounded by $1/\mathrm{e}$ if we choose $N = 1 / \epsilon^2$. However, if we wish the effective Hamiltonian to work \textit{for all} basis states simultaneously, we would choose a larger $N = n/ \epsilon^2$ to ensure that the deviation probability is further suppressed exponentially to $1 / \mathrm{e}^n$. By a union bound, $|\hat{S}(b) - S(b)| \leq \epsilon$ for all $2^n$ computational basis states $\ket{\vct{b}}$ with probability at least $1 - 2^n / \mathrm{e}^n$. Albeit in the simplest example (commuting Hamiltonian), this demonstrates that a random product formula $\exp(-\mathrm{i} \hat{H})$ can accurately simulate $\exp(-\mathrm{i} H)$ up to error $\epsilon$ with only $N = n / \epsilon^2$ gates, which further reduces to $\CO(1/\epsilon^2)$ for an arbitrary basis state. The powerful tool of concentration for matrix (vector) martingales allows us to prove the same statement for any (non-commuting) many-body Hamiltonian.}

We will return to this example Hamiltonian in Section~\ref{sec:tightness} to show that this more general analysis yields an essentially optimal parameter dependence:
dimension dependence that is tight: the scaling $N \geq \Omega( n t^2 \lambda^2/\epsilon^2)$ in Theorem~\ref{thm:allinput} is unavoidable in general.

\subsection{Proof idea for Theorem~\ref{thm:allinput}~and~\ref{thm:fixedinput}}
\label{sec:proofideathm}

This section sketches the main ideas and tools required to establish Theorem~\ref{thm:allinput}~and Theorem~\ref{thm:fixedinput}. The other results follow from more elementary
arguments. Detailed arguments and rigorous statements are provided in Appendix~\ref{sec:proofs} below.

Consider an $n$-qubit Hamiltonian $H = \sum_{j=1}^L h_j$ and an evolution time $t$.  The associated unitary evolution defines a (unitary) channel on $n$-qubit states:
\begin{align*}
\mathcal{U}(| \psi \rangle \! \langle \psi| ) =& U | \psi \rangle \! \langle \psi| U^\dagger \quad \text{where} \\
U =& \exp \left( -\mathrm{i} t H \right)
= \exp \Big( - \mathrm{i} t \sum\nolimits_{j=1}^L h_j \Big).
\end{align*}
Fix a number of steps $N$ and set $\lambda = \sum_{j=1}^L \|h_j \|$. The task is to accurately approximate the target unitary $U$ by a product formula, i.e., the composition of $N$ \textit{simple} unitary evolutions:
\begin{align*}
\mathcal{V}^{(N)} (|\psi \rangle \! \langle \psi|)  =& \mathcal{V}_N\circ \cdots \circ \mathcal{V}_1 (| \psi \rangle \! \langle \psi|) \\
=& V_N \cdots V_1 | \psi \rangle \! \langle \psi| V_1^\dagger \cdots V_N^\dagger. 
\end{align*}

We quantify the difference between $\CV^{(N)}$ and $\mathcal{U}$ in \textit{diamond distance}. That is, the worst case approximation error over all possible input states $\rho$ in the presence of an unaffected quantum memory.

Let $\mathcal{E},\mathcal{F}$ be two quantum channels, and let $\mathcal{I}(|\psi \rangle \! \langle \psi|) = |\psi \rangle \! \langle \psi|$ denote the identity channel on an equally large ancilla system. The diamond distance between $\mathcal{E}$ and $\mathcal{F}$ is defined as
\begin{equation}
\tfrac{1}{2}\| \mathcal{E}-\mathcal{F} \|_\diamond = \tfrac{1}{2}\max_{|\psi \rangle \! \langle \psi|} \| \mathcal{E} \otimes \mathcal{I}(|\psi \rangle \! \langle \psi|) - \mathcal{F} \otimes \mathcal{I} (|\psi \rangle \! \langle \psi|) \|_1,
\label{eq:stabilized-diamond-distance}
\end{equation}
where the maximization ranges over all pure\footnote{Convexity ensures that the worst-case discrepancy is attained at a pure state $|\psi \rangle \! \langle \psi|$. Hence, it is not necessary to consider mixed states $\rho$ in this definition. We refer to \cite{watrous2018book} for details.}
input states $| \psi \rangle \! \langle \psi|$ and
$\| \cdot \|_1$ denotes the trace norm.
First, we relate the diamond distance \eqref{eq:stabilized-diamond-distance} between the channels $\CV^{(N)}$ and $\mathcal{U}$, which are both unitary, to an operator norm distance of the associated matrices:
\begin{align}
 \tfrac{1}{2}\| \mathcal{V}^{(N)}-\mathcal{U} \|_\diamond  
=&\tfrac{1}{2} \max_{|\psi \rangle \! \langle \psi|} \| \mathcal{U}(|\psi \rangle \! \langle \psi|) - \mathcal{V}^{(N)} (|\psi \rangle \! \langle \psi|) \|_1 
\nonumber \\
\leq & \| V_N \cdots V_1 - U \|,
\label{eq:diamond-distance}
\end{align}
see Lemma~~\ref{lem:diamond-to-operator} below.
This relation exploits the fact that stabilization
 (i.e., tensoring with the identity channel) is not
necessary for computing the diamond distance of two unitary channels~\cite[Thm.~3.55]{watrous2018book}.

Now, we can deal with the i.i.d.~random matrices $V_N,\ldots,V_1$ in the more familiar operator norm. Add and subtract the expected product $\mathbb{E} \left[ V_N \cdots V_1 \right] = \mathbb{E} \left[V_N \right] \cdots \mathbb{E} \left[ V_1 \right]= \left( \mathbb{E}V \right)^N$ to decompose the operator-norm difference into two qualitatively different contributions:
\begin{align}
& \| V_N \cdots V_1 - U \|  \nonumber \\
\leq&  \underset{\text{deterministic bias}}{\underbrace{ \| (\mathbb{E} V )^N - U \| }}
+
\underset{\text{random fluctuation}}{\underbrace{\| V_N \cdots V_1 - \mathbb{E} \left[ V_N \cdots V_1 \right] \| }}.
\label{eq:reformulation}
\end{align}
These two contributions can be analyzed separately:\AC{
\begin{enumerate}
\item[i.] \textit{Deterministic bias:} Most product formulas arise from first decomposing the target unitary into a sequence of many small steps: $U=(U^{1/N})^N$, where $U^{1/N} = \exp \left( - \mathrm{i} (t/N) H \right)$ is close to the identity matrix. 
This allows for approximating $U^{1/N}$ by another process that is easier to implement.
The random importance sampling model \eqref{eq:qDRIFT} over individual Hamiltonian terms is designed to achieve this goal. The average approximation error scales inverse quadratically in the number of steps:
$
\| (\mathbb{E} V ) - U^{1/N}\| \leq t^2 \lambda^2 / N^2
$;
see Lemma~\ref{lem:taylor} below.
While small, this expected error does constitute a bias that is present in each of the $N$ approximation steps. It can, and in general will, accumulate across different time steps:
\begin{align}
\| \mathbb{E} \left[V_N \cdots V_1 \right] - U \|  =& \| (\mathbb{E} V )^N - (U^{1/N})^N \| \nonumber \\
\leq & N \| (\mathbb{E}V)-U^{1/N}\| \nonumber  \\
\leq & \frac{t^2 \lambda^2}{N},
\label{eq:campbell2}
\end{align}
see Lemma~\ref{lem:composition} below.
The first inequality is obtained from a telescoping sum. This upper bound diminishes as the number of steps $N$ increases. For $\epsilon >0$,
\begin{equation}
N \geq \frac{2 t^2 \lambda^2}{\epsilon}
\quad \text{ensures} \quad \|(\mathbb{E}V)^N-U \|
\leq \frac{\epsilon}{2}.
\label{eq:bias}
\end{equation}

\item[ii.] \textit{Random fluctuation:} 

We also need to control the deviation of a product of i.i.d.~unitaries $V_N \cdots V_1$ from its expectation $\mathbb{E}[V_N \cdots V_1]= (\mathbb{E} V)^N$ in operator norm. We achieve this by applying modern matrix martingale tools, which may of independent interest. In short (see ~Sec.\ref{sec:intro_martingale} for a more thorough introduction), a \textit{martingale} is a random process $\left\{B_k:\; k=0,\ldots,N\right\}$ such that the distribution of $B_k$ only depends on past elements $B_{k-1},\ldots,B_1$ (`causality') and also $\mathbb{E}[B_k|B_{k-1}\cdots B_1]=B_{k-1}$ (`on average, tomorrow is the same as today').
To control fluctuations, we introduce a martingale that interpolates between the extreme cases we need to compare:
\begin{align*}
B_k =& (\mathbb{E}V)^{N-k}V_k \cdots V_1 \quad \text{such that} \\
B_0 =& (\mathbb{E}V)^N \quad \text{and} \quad 
 B_N= V_N \cdots V_1.
\end{align*}
Note that adjacent elements of this process only differ in a single term: $B_k$ arises from $B_{k-1}$ by replacing $\mathbb{E}V$ at position $k$ (counted from the right) by a random realization $V_k$ of $V$,{ and thus $\BE_{k-1} B_k=B_{k-1}$}.  Moreover, the current iterate $B_k$ only depends on realizations in the past and the interpolation process $\left\{B_k:\;k=0,\ldots,N\right\}$ forms a martingale comprised of $d \times d$ matrices, where $d=2^n$ is the Hilbert space dimension. The discrepancies are small: $\|V_k - (\mathbb{E} V)\| \leq 2t \lambda/N$, because each realization of $V$ is also very close to the identity matrix. 

Powerful tail bounds for matrix-valued martingales (which are relatively modern compared to their scalar counterparts) are available in the literature~\cite{oliveira2009spectrum,tropp2011freedman}. 
Adapting these results to the task at hand yields the bound
\begin{align}
&\mathrm{Pr} \left[ \| V_N \cdots V_1 - (\mathbb{E}V)^N \|  \geq \epsilon/2 \right] \\
&\leq 2d \exp \left( - \frac{N\epsilon^2}{44t^2 \lambda^2} \right), \label{eq:deviation-bound}
\end{align}
see Proposition~\ref{prop:fluctuation} below.
In words, the product $V_N \cdots V_1$ will concentrate around its expectation once $N$ is sufficiently large. 
Similar to more conventional random walks on integer lattices, the error is subgaussian with variance proportional to $ N \cdot (\lambda^2 t^2 / N^2)$.  There is an extra dimensional factor $d=2^n$ that arises because the martingale is matrix-valued. It converts to the number of qubits $n=\log(d)$ in the gate count $N$. For error parameters $\epsilon,\delta \in (0,1)$,
a gate count obeying
\begin{align}
& N \geq 44 \frac{ t^2 \lambda^2}{\epsilon^2} \log (2d/\delta) \quad \text{implies} \label{eq:random-fluctuation} \\
&\| V_N \cdots V_1 - (\mathbb{E} V)^N \|  \leq \tfrac{\epsilon}{2}
\text{ with probability $> 1-\delta$.}
\nonumber
\end{align}
\end{enumerate}
}\AC{
Theorem~\ref{thm:allinput} can be derived by combining the bound \eqref{eq:bias} for the deterministic bias with the bound \eqref{eq:random-fluctuation}  for random fluctuations. This provides a high probability error bound in the diamond distance when $N \geq \Omega(n t^2 \lambda^2 / \epsilon^2)$. 

Corollary~\ref{cor:exp_allinput} is derived by integrating the tail bounds of Theorem~\ref{thm:allinput}.
This produces

\begin{equation*}
\mathbb{E} \|V_N \cdots V_1 - (\mathbb{E} V)^N \|  \lesssim \sqrt{\frac{t^2 \lambda^2}{N} \log_2 (d)},
\end{equation*}
where the symbol $\lesssim$ suppresses a modest multiplicative constant. We refer to Section~\ref{sub:expectation} for details.\\
With fixed input(Theorem~\ref{thm:fixedinput}), we
use convexity to restrict our attention to a fixed, pure input state $|\psi \rangle$. 
We then need to compare $U|\psi \rangle$ with $V_N \cdots V_1 |\psi \rangle$ which can be achieved by constructing an interpolating random process that describes a \textit{vector}-valued martingale. 
We then call another concentration inequality (Lemma~\ref{lem:fixed-input-fluctuation})
\begin{equation*}
    \mathrm{Pr}\big[\| (V_N \cdots V_1 - \mathbb{E}[V]^N)| \psi \rangle \|_{\ell_2}
    > \epsilon \big]
    \leq \exp\left( \frac{-\epsilon^2 N}{8\mathrm{e} t^2 \lambda^2} \right),
\end{equation*}
which, in contrast, does not contain a dimensional factor. This is the reason why Theorem~\ref{thm:fixedinput} and Corollary~\ref{cor:exp_fixedinput} do not depend on system size at all. 
}

\section{Discussion and outlook}
\AC{

This work shows interesting characteristics of randomization that might help to further improve quantum simulation. \textit{(a)}
By studying typical unitary instances of \textsc{qDRFIT}, we have shown that $L$-independence (the number of terms in Hamiltonian) of \textsc{qDRIFT} attributes to randomly sampling terms; mixing different realizations is not essential. 
\textit{(b)} 
Gate complexities can be reduced substantially by restricting attention to a particular input state or/and target observable. 
Simple randomized compilation procedures, like \textsc{qDRIFT}, do not utilize extra information. We believe that more specialized algorithms might be able to exploit additional structure. 
\textit{(c)} We have shown that channel averages can be much closer to the ideal evolution than any individual product formula, however, in the case of \textsc{qDRIFT} it is only a quadratic improvement $1/\epsilon^2\rightarrow 1/\epsilon$. This can be seen as a manifestation of Jensen's inequality for convex functions (distance to ideal evolution) and we may also see such behavior in other quantum simulation algorithms.
Up to our knowledge, we also provide the first matrix concentration analysis for a randomly sampled product formulas and -- more generally -- Hamiltonian simulation. Similar proof techniques readily apply to any random product formula, e.g. randomly permuted Suzuki formulas~\cite{childs2019faster}, and in fact any (causal) random unitary product, e.g. randomized compiling~\cite{hastings2016turning,campbell_mixing16}.

Beyond random products, we expect the developed matrix concentration tools to be useful for controlling stochastic errors in other quantum computing applications, as well as analyzing properties of random Hamiltonians~\cite{chen2021concentration}.}
\subsection*{Acknowledgments} 
The authors want to thank John Preskill and Yuan Su for their valuable input and inspiring discussions. Earl Campbell and Nathan Wiebe provided insightful comments and encouraging feedback. We would especially like to thank Richard Allen and Angus Lowe for pointing out a significant gap in the proof of Lemma 3.4 and Richard Allen and Haimeng Zhao for proposing an elegant proof that fixed the claim with only an additional factor of 2. We are also grateful to Soonwon Choi for relaying this message. Finally, we would like to thank the anonymous reviewers for their in-depth comments and suggestions.
CC is thankful for Physics TA Relief Fellowship at Caltech.
HH is supported by the Kortschak Scholars Program.
RK acknowledges funding from ONR Award N00014-17-1-2146 and ARO Award W911NF121054).
JAT gratefully acknowledges funding from the ONR Awards N00014-17-1-2146 and N00014-18-1-2363 and from NSF Award 1952777.

\onecolumngrid\par

\appendix 
{
\section{Calculations for Table~\ref{table:qDRFIT_vs_suzuki}}\label{sec:calc_table}
Here, we provide detailed calculations for the numbers presented in Table~\ref{table:qDRFIT_vs_suzuki}. Recall from Eq.~\eqref{eq:qDRIFT_gate_count} that \textsc{qDRIFT} achieves a gate count proportional to
%depends on the 1-norm of Hamiltonian
$\lambda^2t^2/\epsilon$, where $\lambda=\sum_j \|h_j \|$ is the 1-norm .
For $p$-th order product formulas, we quote the state-of-the-art Trotter error analysis from Ref.~\cite{childs2019theory} with gate count
%; and we mostly follow the state-of-the-art Trotter error analysis~\cite{childs2019theory} for p-th order product formula
\begin{align}
    N =\CO \left( \vertiii{H}_1 \lambda^{1/p} \frac{t^{1+1/p}}{\epsilon^{1/p}} \right).
\end{align}
And for a $k$-local Hamiltonian, the induced 1-norm (using notation from~\cite{childs2019theory})
$\vertiii{H}_1$ is defined as 
\begin{align}
    \displaystyle\vertiii{H}_1: = \max_{\ell}\ \max_{j_\ell} \sum_{j_1,\cdots j_{\ell-1},j_{\ell+1},\cdots, j_k }\norm{ h_{j_1,\cdots,j_\ell,\cdots,j_k}}
\end{align}
which is smaller than the 1-norm $\lambda$.
Here, we also use the convention that  $\CO(\cdot)$ absorbs constants that may depend on  $k, p,\alpha, d$,$Poly \log(n)$  as long as they are independent of system size $n$, time $t$, and the Hamiltonian $H$. For Suzuki formulas values like $p=4$ or $p=6$ are fairly standard. For larger $p$, the constant overheads get prohibitively expensive.
%In practice, for Suzuki formulas, one should think of $p=4$ or $p=6$ beyond which the constant overhead gets too large.
\paragraph*{
\textbf{Nearest Neighbor interaction.}} 
On an integer lattice $[0,R]^d$, the nearest neighbor interacting Hamiltonian reads
\begin{align}
    H = \sum_{\vec{r}\in [0,R]^d} \sum_{ |\vec{r}'-\vec{r}|=1 } h_{\vec{r},\vec{r}'} \quad \text{with} \quad \norm{h_{\vec{r},\vec{r}'}} = \CO(1).
\end{align}
This, in turn, implies
\begin{align}
    \lambda &= \CO(R^d) = \CO(n) \quad \text{and}\\
    \vertiii{H}_1 &= \CO(1).
\end{align}

\paragraph*{
\textbf{Power law interaction.}}
On an integer lattice $[0,R]^d$, the power-law interacting Hamiltonian is 2-local and reads
\begin{align}
    H = \sum_{\vec{r}, \vec{r}'\in [0,R]^d} h_{\vec{r},\vec{r}'} \quad \text{with} \quad  \norm{h_{\vec{r},\vec{r}'}} = \CO(\frac{1}{|\vec{r}'-\vec{r}|^\alpha}).
\end{align}
% its strength in 1-norms are
This, in turn, implies
\begin{align}
   \vertiii{H}_1 &=\begin{cases}
   \CO(1) &\textrm{if}\ \alpha > d \\
   \CO(\log(R))=\CO(\log(n)) &\textrm{if}\ \alpha = d \\
   \CO(R^{d-\alpha})=\CO(n^{1-\alpha/d}) &\textrm{if}\ \alpha < d \\
   \end{cases} \quad \text{and} \\
     \lambda &= \CO(n\vertiii{H}_1).
\end{align}
Note that the $\vertiii{H}_1 $ is essentially the single site energy integral. The total number of terms is $L=\CO(n^2)$.
For $\alpha >2d$, as in~\cite{childs2019theory} one can sacrifice time dependence and truncate terms with $|\vec{r}'-\vec{r}|>\ell$, where $\ell=\CO(nt/\epsilon)^{1/(\alpha-d)}$ is tuned to match the Trotter error with the truncation error.
In this case,
\begin{align}
    L &= \CO(n \ell^{d}) = \CO(n \left(\frac{nt}{\epsilon}\right)^{d/(\alpha-d)}),\\
   \vertiii{H}_1 &= \CO(1)\ \ \ \ (\alpha >2d ), \\
    \lambda &= \CO(n\vertiii{H}_1) =\CO(n),
\end{align}

\paragraph*{\textbf{$k$-local Hamiltonians(SYK-like normalization).}}
Consider an all-to-all interacting $k$-local Hamiltonian~\cite{sachdev1993gapless}
\begin{equation}
    H := \sum_{j_1< \ldots <j_k \le n} h_{j_1\cdots j_k} =\sum_{j_1< \ldots <j_k \le n}  g_{j_1\cdots j_k} Z_{j_1\cdots j_k} \quad \text{with normalization} \quad 
    \norm{Z}= \sqrt{\frac{J^2 (k-1)!}{kn^{k-1}}},
\end{equation}
and $g_{j_1\cdots j_k}$ are random couplings sampled independently from the standard Gaussian distribution,  $Z_{j_1\cdots j_k}$ are determinstic k-local operators, and $J$ is some energy scale.
%i.i.d gaussian with unit variance.
The number of terms is $L=\binom{n}{k}=\CO(n^k)$ and the normalization is chosen such that 
\begin{align}
    \sum_{j_1< \ldots <j_k \le n} \lV Z_{j_1\cdots j_k}\rV^2 = \CO(n).
\end{align} However, let us drop $g_{j_1\cdots j_k}$ and not bother with the randomness. For the cautious reader, we should call a union bound that $\Pr(\max_{j_1\cdots j_k} \left| g_{j_1\cdots j_k } \right|\ge \epsilon )\le L \e^{-\epsilon^2}$, i.e. we at most lose an extra factor of $\CO(\sqrt{\log(L)})=\CO(\sqrt{\log(n)})$. Therefore, up to a $\sqrt{\log(n)}$ overhead, 
\begin{align}
    \vertiii{H}_1&=  \sum_{j_2,\cdots,j_k }\norm{ h_{j_1,\cdots,j_k}} = \CO( n^{k-1}\cdot n^{-(k-1)/2} ) =\CO(n^{(k-1)/2}),\\
    \lambda &= \CO( n^{k}\cdot n^{-(k-1)/2} ) =\CO(n^{(k+1)/2})
\end{align}
in the calculation we fixed $j_1$ due to symmetry.
\paragraph*{\textbf{$k$-local Hamiltonians (1-norm normalization).}}
In some other settings, one may choose~\cite{Lashkari_2013}
\begin{equation}
    H := \sum_{j_1< \ldots <j_k \le n} h_{j_1\cdots j_k} =\sum_{j_1< \ldots <j_k \le n} Z_{j_1\cdots j_k} \quad \text{with normalization} \quad \norm{Z}= \CO(1/n^{k-1}),
\end{equation}
and without randomly sampled coefficients. The normalization is such that the 1-norm is extensive
\begin{align}
    \sum_{j_1< \ldots <j_k \le n} \lV Z_{j_1\cdots j_k}\rV = \CO(n).
\end{align} 
This in turn implies
\begin{align}
    \vertiii{H}_1 &=  \sum_{j_2,\cdots,j_k }\norm{ h_{j_1,\cdots,j_k}} = \CO( n^{k-1}\cdot n^{-(k-1)} ) =\CO(1),\\
    \lambda  &= \CO( n ).
\end{align}

}

\section{Matrix and vector valued martingales.}\label{sec:intro_martingale}
\AC{
Let $X_1,\ldots,X_N$ be independent and identically distributed (i.i.d.) random variables. 
Then, the strong law of large numbers (LLN) implies that the sample mean $\hat{\mu} =(1/N)\sum_{k=1}^N X_k$ converges to the actual mean $\mu = \mathbb{E}X_k$ ($\hat{\mu} \approx \mu$ as $N \to \infty$).  In fact, even for finite $N$, the sample average is likely to be close to the expectation.  This behavior is called \textit{concentration}. It turns out that concentration is a surprisingly generic phenomenon, and it kicks in earlier than one might expect. 
Asymptotically large sample sizes ($N \to \infty$), which are essential for the law of large numbers and the central limit theorem, are not required establish concentration. An example is Bernstein's inequality for sums of bounded random numbers.
\begin{fact}[Bernstein's inequality]\label{fact:bernstein}
Let $X_1,\ldots,X_N$ be independent random variables that obey $\BE X_i=0$ and $|X_i| \leq R$ almost surely. Then, for $\epsilon >0$,
\begin{align}
    \mathrm{Pr}\left[ \left| \sum^N_{k=1} X_k \right| \ge \epsilon \right] \leq 2\exp \left(\frac{\epsilon^2/2}{v^2+R\epsilon/3}\right) \quad \text{where} \quad v=\sum^N_{k=1} \BE X_k^2.
\end{align}
\end{fact}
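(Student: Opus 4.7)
The plan is to follow the classical Chernoff/Cram\'er exponential moment method. First I would use Markov's inequality applied to $\exp(\theta S)$, where $S = \sum_{k=1}^N X_k$ and $\theta > 0$ is a parameter to be optimized later. Combined with independence, this yields the starting bound
\begin{equation*}
\Pr[S \geq \epsilon] \;\leq\; e^{-\theta \epsilon}\, \mathbb{E}[e^{\theta S}] \;=\; e^{-\theta \epsilon} \prod_{k=1}^N \mathbb{E}[e^{\theta X_k}].
\end{equation*}
A symmetric argument controls $\Pr[S \leq -\epsilon]$, and a union bound over the two tails produces the factor of $2$ out front.

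The core technical step is to control each moment generating function $\mathbb{E}[e^{\theta X_k}]$ using the boundedness assumption $|X_k| \leq R$. I would Taylor expand $e^{\theta X_k} = 1 + \theta X_k + \sum_{j \geq 2} (\theta X_k)^j / j!$, take expectations so that the linear term vanishes by the mean-zero assumption, and upper bound the higher absolute moments via $\mathbb{E}|X_k|^j \leq R^{j-2}\, \mathbb{E}X_k^2$ for $j \geq 2$. Summing the resulting geometric-type series and applying $1+u \leq e^u$ gives
\begin{equation*}
\mathbb{E}[e^{\theta X_k}] \;\leq\; \exp\!\left( \mathbb{E}X_k^2 \cdot \frac{e^{\theta R} - 1 - \theta R}{R^2} \right).
\end{equation*}
Taking the product across $k$ collapses the exponents into a single term involving the total variance $v = \sum_k \mathbb{E}X_k^2$, so that the Chernoff bound becomes $\Pr[S \geq \epsilon] \leq \exp\!\left( -\theta\epsilon + v\,(e^{\theta R}-1-\theta R)/R^2 \right)$.

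The last step is to optimize $\theta$. The exact minimizer is transcendental, so I would instead use the standard elementary bound $e^{u} - 1 - u \leq \tfrac{u^2}{2(1 - u/3)}$ valid on $0 \leq u < 3$, and then pick $\theta = \epsilon/(v + R\epsilon/3)$. This choice is calibrated so that the linear term $\theta\epsilon$ dominates the quadratic term by a factor of two, yielding the advertised Bernstein exponent $\epsilon^2/(2(v+R\epsilon/3))$ after a short algebraic check.

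The main obstacle is not conceptual --- every ingredient is textbook --- but keeping the constants sharp so that the $R\epsilon/3$ term appears \emph{exactly} in the denominator. That entire improvement over a naive Hoeffding-style $\epsilon^2/R^2$ bound is bought by the inequality $e^u - 1 - u \leq u^2/(2(1-u/3))$; replacing it by the cruder $e^u - 1 - u \leq u^2 e^{|u|}/2$ would produce a weaker subgaussian-in-the-bulk tail. I would also flag that the displayed exponent in the statement appears to be missing a minus sign and the variance should read $v$ rather than $v^2$, which look like typographical slips rather than substantive changes to the claim.
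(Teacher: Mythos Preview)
Your proposal is the standard textbook derivation of Bernstein's inequality and is correct, including your observation about the typographical slips (the missing minus sign in the exponent and $v^2$ versus $v$). However, the paper does not actually prove this statement: it is presented as a known \textbf{Fact} in the background appendix on martingales, with no argument given, serving only as a warm-up before the matrix Freedman inequality. So there is nothing to compare against; your Chernoff-method sketch is exactly the classical argument one would supply if a proof were demanded.
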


Bernstein's inequality equips random sums with a LLN-type concentration behavior already for non-asymptotic sample sizes $N \gtrsim \max \left\{v^2/\epsilon^2,R/\epsilon \right\}$. However, it requires strong assumptions on the random variables involved. They need to be independent, bounded scalars. 
In fact, random processes may concentrate under much weaker conditions.

Martingales form a richer family of processes that capture more realistic random processes and that still enjoy powerful concentration inequalities. Let us offer a minimal technical introduction following Tropp~\cite{tropp2011freedman}.  Consider a filtration of the master sigma algebra $\CF_0\subset \CF_1 \subset \CF_2 \cdots \subset \CF_k \subset \cdots \CF$, where we denote the conditional expectation with respect to $\CF_k$ by the symbol $\BE_k$. A \emph{martingale} is a sequence $\{ B_0, B_1, B_2, \dots \}$ of random variables that satisfies
\begin{align}
    \sigma(B_k) &\subset \CF_k &\textrm{(causality)}, \\
    \BE_{k-1} B_k &:= \BE \left[ B_k | B_{k-1},\ldots,B_0 \right]= B_{k-1} &\textrm{(status quo)}.
\end{align}
Heuristically, we can think of $k$ as a `time' index, and $\CF_k$ contains all possible events that are determined by the history up to time $k$. The present, aka $B_k$ 
may depend on the past (i.e., $B_0,\ldots,B_{k-1}$), but it cannot depend on the future (`causality'). 
The second condition formalizes the requirement that, on average, tomorrow ($B_k$) is the same as today ($B_{k-1}$).

A martingale sequence may be composed of scalars, e.g., a one-dimensional random walk, but we can also consider vector- or matrix-valued martingales. Analyzing concentration for vector- and matrix-valued martingales is an ongoing field of research; for example, see \cite{tropp2011freedman,pinelis2012optimum,HNTR20:Matrix-Product},
but many powerful concentration inequalities already exist.  In this work, we will use the matrix generalization of Freedman's inequality (due to one of the authors). Let $\mathbb{M}_{d \times d}$ denote the space of complex-valued $d \times d$ matrices.

\begin{fact}[{Matrix Freedman~\cite[Corollary~1.3]{tropp2011freedman}}] \label{fact:martingale}
Let $\left\{B_k:k=0,\ldots,\ell, ...\right\} \subset \mathbb{M}_{d \times d}$ be a  matrix martingale.
Assume that the associated difference sequence $C_k = B_k - B_{k-1}$ 
obeys $\|C_k\|\le R $ almost surely.  Define the random variable
\begin{equation}
w_\ell :=\max \left( \left\| \sum_{k=1}^\ell \BE_{k-1} C_k^\dagger C_k 
\right\|, \left\| \sum_{k=1}^\ell \BE_{k-1} C_k C_k^\dagger  \right\| \right).
\end{equation}
Then, for any $\tau >0$
\begin{equation*}
\mathrm{Pr} \left[ \exists \ell : \|B_\ell - B_0 \| \geq \tau, w_\ell \le v \right] 
\leq 2 d \exp \left( \frac{-\tau^2/2}{v +R\tau/3}\right).
\end{equation*} 

\end{fact}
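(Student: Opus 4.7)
The plan is to follow the matrix Laplace-transform (moment generating function) strategy of Ahlswede--Winter, as refined by Tropp. First, I reduce operator-norm control to a one-sided eigenvalue bound: applying the same argument to $-B_\ell$ and taking a union bound (and passing to the Hermitian dilation if the $B_\ell$ are not already self-adjoint) makes it suffice to bound $\mathrm{Pr}[\exists \ell:\ \lambda_{\max}(B_\ell - B_0)\ge \tau,\ w_\ell \le v]$. For any parameter $\theta>0$, spectral calculus and Markov's inequality give
\begin{equation*}
\mathrm{Pr}\!\left[\lambda_{\max}(B_\ell - B_0)\ge \tau\right] \le e^{-\theta\tau}\, \BE\, \Tr \exp\bigl(\theta(B_\ell-B_0)\bigr),
\end{equation*}
so the task becomes bounding this expected trace exponential by the predictable quadratic variation $w_\ell$.

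The key technical ingredient is a Bernstein-type bound on each conditional matrix moment generating function. For a Hermitian random matrix $C$ with $\BE C=0$ and $\|C\|\le R$, the scalar Taylor comparison $e^{\theta x}\le 1+\theta x + g(\theta)x^2$ for $|x|\le R$, with $g(\theta)=(\theta^2/2)/(1-R\theta/3)$ and $0<\theta<3/R$, lifts (through the functional calculus in the eigenbasis of $C$) to the semidefinite inequality
\begin{equation*}
\BE\, e^{\theta C} \preceq I + g(\theta)\BE C^2 \preceq \exp\bigl(g(\theta)\,\BE C^2\bigr).
\end{equation*}
Applying this conditionally at each step yields $\BE_{k-1}e^{\theta C_k}\preceq \exp(g(\theta)\BE_{k-1}C_k^2)$.

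Next I chain these one-step MGF bounds into a single scalar supermartingale. Define
\begin{equation*}
W_\ell := \theta(B_\ell - B_0) - g(\theta)\sum_{k=1}^\ell \BE_{k-1} C_k^2, \qquad M_\ell := \Tr \exp(W_\ell).
\end{equation*}
Invoking Lieb's concavity theorem, which says $A\mapsto \Tr\exp(H+\log A)$ is concave on the positive-definite cone for every Hermitian $H$, together with Jensen's inequality and the conditional MGF bound above, one verifies $\BE_{k-1}M_k \le M_{k-1}$, so $\{M_\ell\}$ is a nonnegative supermartingale with $M_0 = d$. In particular $\BE M_\ell \le d$ for every $\ell$.

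Finally I impose the constraint $w_\ell\le v$ via a stopping-time argument. Let $\kappa:=\inf\{\ell:w_\ell>v\}$, a stopping time adapted to $\CF_\ell$. On the event $\{\exists \ell:\ \lambda_{\max}(B_\ell-B_0)\ge\tau,\ w_\ell\le v\}$ the eigenvalue excursion occurs strictly before $\kappa$, where the predictable variation has spectral norm at most $v$. Applying Markov's inequality to the stopped trace exponential $M_{\ell\wedge\kappa}$ and using optional stopping yields
\begin{equation*}
\mathrm{Pr}\bigl[\exists\ell:\ \lambda_{\max}(B_\ell-B_0)\ge\tau,\ w_\ell\le v\bigr] \le d\, e^{-\theta\tau + g(\theta)v}.
\end{equation*}
Optimizing over $\theta\in(0,3/R)$ (the optimum is $\theta=\tau/(v+R\tau/3)$) produces the Bernstein-type exponent $-(\tau^2/2)/(v+R\tau/3)$, and the extra factor of $2$ absorbs the symmetric control of $\lambda_{\min}$. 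The main obstacle I expect is establishing the supermartingale property of $M_\ell$: Lieb's concavity is essential because the naive matrix analogue of $\log \BE_{k-1}e^{\theta C_k}$ does not satisfy the additive chain rule that drives the scalar Freedman argument, and one must instead pull the conditional MGF bound inside the trace exponential in the right sense.
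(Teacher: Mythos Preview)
The paper does not actually prove this statement: it is recorded as a \emph{Fact} cited from \cite[Corollary~1.3]{tropp2011freedman}, and the surrounding text explicitly says that ``recapitulating the proof of the general Matrix Freedman inequality would go beyond the scope of this work,'' noting only that the argument uses Lieb's concavity theorem, stopping times, and related machinery. So there is no in-paper proof to compare against.

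Your proposal is essentially the argument from Tropp's original paper and is correct in outline: the Hermitian dilation to reduce to a self-adjoint martingale (which is exactly why both $\sum\BE_{k-1}C_k^\dagger C_k$ and $\sum\BE_{k-1}C_kC_k^\dagger$ appear in $w_\ell$), the Bernstein-type semidefinite bound on the conditional MGF, the use of Lieb's concavity to turn $\Tr\exp(W_\ell)$ into a supermartingale, and the stopping-time localization to the event $\{w_\ell\le v\}$ are all the right ingredients, in the right order. Two small points worth tightening if you write this up: after dilation the relevant ``square'' is the block-diagonal matrix $\mathrm{diag}(C_kC_k^\dagger,\,C_k^\dagger C_k)$, so your shorthand $\BE_{k-1}C_k^2$ should be read accordingly; and the optional-stopping step for an unbounded $\kappa$ needs the usual truncation-and-Fatou argument, since the supermartingale is not a priori uniformly integrable.
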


This bound exponentially suppresses the probability of the following undesirable event:  the conditional variance $w$ is small while the actual deviation is large. The intricacy of this bound is that the conditional variance $w_\ell$ is itself a random variable. However, for this work we will use a weaker but more transparent version.

\begin{cor} \label{cor:freedman}
Let $\left\{B_k:k=0,\ldots,N\right\} \subset \mathbb{M}_{d \times d}$ be a matrix martingale.
Assume that the associated difference sequence $C_k = B_k - B_{k-1}$ 
obeys $\|C_k\|\le R $ and its conditional variance obeys $\| \sum_{k=1}^N \BE_{k-1} C_k C_k^\dagger  \| \le v$ almost surely.
Then
\begin{equation*}
\mathrm{Pr} \left[  \|B_N - B_0 \| \geq \tau \right] 
\leq 2 d \exp \left( \frac{-\tau^2/2}{v +R\tau/3}\right)
\quad \text{for any $\tau >0$.}
\end{equation*}

\end{cor}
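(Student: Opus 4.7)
The plan is to derive this Corollary as a straightforward specialization of the Matrix Freedman inequality (Fact~\ref{fact:martingale}). The parent Fact treats the conditional variance $w_\ell$ as a random variable and bounds the probability of the joint event that $\|B_\ell - B_0\|$ is large \emph{and} $w_\ell \le v$. In the Corollary, the hypothesis $\|\sum_{k=1}^N \mathbb{E}_{k-1} C_k C_k^\dagger\| \le v$ holds almost surely; together with the analogous control on $\sum_{k=1}^N \mathbb{E}_{k-1} C_k^\dagger C_k$ (either implicit in the statement, or automatic for the symmetric random walks used later in the paper), this promotes $\{w_N \le v\}$ to a probability-one event, which is the only ingredient needed.

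From there the argument is essentially one line. Specializing the existence statement of Fact~\ref{fact:martingale} to $\ell = N$ gives the trivial set inclusion
\begin{equation*}
\{\|B_N - B_0\| \ge \tau\} \cap \{w_N \le v\}
\;\subseteq\; \{\exists\,\ell\le N:\ \|B_\ell - B_0\| \ge \tau,\ w_\ell \le v\}.
\end{equation*}
Since the second factor on the left has probability one by hypothesis, intersecting with it does not change the probability of the first. Monotonicity of probability, followed by applying Fact~\ref{fact:martingale} to the right-hand side with the stated values of $R$ and $v$, then yields the desired bound $2d\exp(-(\tau^2/2)/(v + R\tau/3))$.

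There is really no obstacle here — this Corollary only repackages Fact~\ref{fact:martingale} in a form that is more convenient for the later applications, trading the sharper but more delicate random-variance hypothesis for an easier-to-verify almost-sure hypothesis. The only small care needed is to recognize that both halves of the maximum defining $w_\ell$ must be controlled; in each of the martingales constructed later in the paper (such as those appearing in the proof sketch of Theorem~\ref{thm:allinput} and in Proposition~\ref{prop:fluctuation}) the relevant bounds on $C_k^\dagger C_k$ and $C_k C_k^\dagger$ coincide or are each controlled by the same $v$, so this causes no issue.
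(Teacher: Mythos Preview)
Your proposal is correct and follows essentially the same approach as the paper, which simply says ``ignore the events for $\ell<N$ and use that $w_\ell\le v$ holds almost surely.'' Your version is in fact more careful: you explicitly flag the issue that the Corollary's hypothesis only bounds $\|\sum_k \mathbb{E}_{k-1} C_k C_k^\dagger\|$ whereas $w_N$ is the maximum of this and its adjoint analogue, a point the paper's one-line justification glosses over.
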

To arrive at this, ignore the events for $\ell<N$ and use that $w_\ell\le v$ holds almost surely. 
This simplified Matrix Freedman inequality closely resembles Bernstein's inequality. Actually, Fact~\ref{fact:bernstein} can be viewed as a special case of Corollary~\ref{cor:freedman} where $d=1$ and $B_k = \sum_{k'=0}^k X_k$.
But for matrix-valued martingales, the tail bound must depend on the dimension $d$.

Recapitulating the proof of the general Matrix Freedman inequality would go beyond the scope of this work. It makes full use of Lieb's concavity theorem, stopping times, and Burkholder functions.  There is a slightly weaker result that follows from the Golden--Thompson inequality~\cite[Thm.~1.2]{oliveira2010concentration}; see also~\cite[Theorem~11]{Gross2011:Recovering}.  
Similar concentration inequalities are valid for martingales on the complex vector space $\mathbb{C}^d$.   Remarkably, these results are independent of the ambient dimension $d$.
\begin{fact} \label{fact:vector_martingale}
Let $\left\{B_k:k=0,\ldots,N\right\} \subset \mathbb{C}^d$ be a vector martingale taking valules in the inner product space ${\lV \cdot \rV_{\ell_2}}$.  Assume that the associated difference sequence $C_k = B_k - B_{k-1}$ 
obeys $\sum_{k=1}^N \BE_{k-1} \|C_k \|_{\ell_2}^2 \leq v$ and $\|C_k\|_{\ell_2}\le R $ almost surely. Then, for any $\tau >0$
\begin{equation*}
\mathrm{Pr} \left[ \|B_N - B_0 \|_{\ell_2} \geq \tau \right] 
\leq 2 \exp \left( \frac{-\tau^2/2}{v+R\tau/3}\right).
\end{equation*}

\end{fact}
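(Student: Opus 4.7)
\textbf{Proof plan for Fact~\ref{fact:vector_martingale}.}
The goal is a Bernstein–Freedman tail bound for a vector-valued martingale in a Hilbert space that is \emph{dimension-free}. The plan is to run a Laplace transform / Chernoff argument directly on the scalar process $\lV B_k - B_0 \rV_{\ell_2}$, exploiting the parallelogram identity (i.e.\ the 2-uniform smoothness of the $\ell_2$ norm) to close a recursion on the conditional moment generating function, rather than going through any dilation to matrices which would spuriously introduce a factor of $d$.

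\textbf{Step 1: reduction to an MGF.} Without loss of generality translate so that $B_0 = 0$ and set $M_k = B_k = \sum_{j \le k} C_j$. Since $\cosh$ is convex and even, Markov's inequality gives
\begin{equation*}
\mathrm{Pr}\bigl[\lV M_N \rV_{\ell_2} \geq \tau\bigr] \;\leq\; \frac{\BE\cosh\!\bigl(\lambda \lV M_N \rV_{\ell_2}\bigr)}{\cosh(\lambda \tau)} \;\leq\; 2\,e^{-\lambda \tau}\,\BE\cosh\!\bigl(\lambda \lV M_N \rV_{\ell_2}\bigr)
\end{equation*}
for every $\lambda > 0$, so it suffices to upper bound the expected MGF and then optimize in $\lambda$.

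\textbf{Step 2: one-step MGF inequality (the main step).} The heart of the argument is to prove that, almost surely,
\begin{equation*}
\BE_{k-1}\bigl[\cosh(\lambda \lV M_k \rV_{\ell_2})\bigr] \;\leq\; \cosh\!\bigl(\lambda \lV M_{k-1} \rV_{\ell_2}\bigr)\,\exp\!\bigl(\lambda^{2}\, g(\lambda R)\, \BE_{k-1}\lV C_k\rV_{\ell_2}^{2}\bigr),
\end{equation*}
where $g(u) = (e^{u} - 1 - u)/u^{2}$ is the standard Bennett function. To get this one uses the Hilbert-space identity $\lV M_k\rV^{2} = \lV M_{k-1}\rV^{2} + 2\Re\langle M_{k-1}, C_k\rangle + \lV C_k\rV^{2}$, expands $\cosh(\lambda \lV M_k\rV)$ in the deviation, and observes that the cross term $\Re\langle M_{k-1}, C_k\rangle$ has conditional mean zero by the martingale property. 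The higher-order remainder is controlled using $\lV C_k\rV_{\ell_2} \le R$ almost surely, exactly as in the scalar Bennett/Bernstein derivation. Equivalently, one can cite Pinelis's 1994 inequality for 2-uniformly smooth Banach spaces, which specializes to Hilbert space with the optimal constant.

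\textbf{Step 3: iteration and optimization.} Chaining the one-step inequality from $k = N$ down to $k = 0$, together with $\sum_k \BE_{k-1}\lV C_k\rV_{\ell_2}^{2} \le v$, yields
\begin{equation*}
\BE\cosh\!\bigl(\lambda \lV M_N \rV_{\ell_2}\bigr) \;\leq\; \exp\!\bigl(\lambda^{2}\, g(\lambda R)\, v\bigr).
\end{equation*}
Substituting into Step 1 and optimizing $\lambda$, exactly as in the proof of the scalar Bernstein inequality (Fact~\ref{fact:bernstein}) and the matrix Freedman inequality (Corollary~\ref{cor:freedman}), produces the Bernstein-type exponent $-\tau^{2}/2/(v + R\tau/3)$ claimed in the statement.

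\textbf{Main obstacle.} Step 2 is the delicate point. In the scalar case one uses $\BE_{k-1} e^{\lambda C_k} \le \exp(\lambda^{2} g(\lambda R)\, \BE_{k-1} C_k^{2})$ directly, and in the matrix case one uses Lieb's concavity theorem, which forces a dimension factor. The vector case must be handled without either tool: what makes it work is that the parallelogram identity renders the cross term $\Re\langle M_{k-1}, C_k\rangle$ genuinely \emph{linear} in $C_k$, hence annihilated by $\BE_{k-1}$. This is precisely where the proof uses that the norm is $\ell_2$ (or more generally comes from an inner product), and it is what allows the bound to be free of the ambient dimension $d$ — in sharp contrast to Fact~\ref{fact:martingale}.
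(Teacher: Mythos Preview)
Your proposal is correct and aligns with the paper's treatment: the paper does not give a self-contained proof but simply cites Pinelis's inequality for martingales in $2$-smooth Banach spaces (specialized to $\ell_2$ with smoothness constant $D=1$) and then optimizes the Chernoff parameter as in Bernstein's inequality. Your outline is in fact more detailed than the paper---you sketch the $\cosh$-based MGF recursion that underlies Pinelis's argument and correctly identify that the parallelogram identity is what kills the cross term and yields dimension-independence---while also noting that one can simply cite Pinelis, which is exactly what the paper does.
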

This is simplified version of \cite[Thm. 3.3]{pinelis2012optimum}. To prove the above version, start with~\cite[Thm. 3.1]{pinelis2012optimum} for general martingales on Banach spaces. In our case, vectors are equipped with the standard $\ell_2$ norm, set the smoothness constant $D=1$ (in the notation of~\cite[Thm. 3.1]{pinelis2012optimum}), and optimize for a $\lambda$ like in Bernstein's inequality.

Our concentration analysis of random product formulas with fixed inputs relies on another tool: \textit{uniform smoothness} of the underlying space~\cite{HNTR20:Matrix-Product}.
Given a vector martingale, uniform smoothness supplies a recursive/local bound on moment growth. Decomposing $B_{k}=C_k+B_{k-1}$, all we need is $\BE[C_k|B_{k-1}]=0,$ to apply the following result.

\begin{proposition}[Uniform smoothness] \label{prop:pythagoras}
Let $x,y \in \mathbb{C}^d$ be two random vectors that obey $\mathbb{E} \left[y | x \right] =0$. When $q \geq 2$
\begin{equation*}
\left( \mathbb{E} \|x+y \|_{\ell_2}^q\right)^{2 / q} \leq  \left( \mathbb{E}\|x \|_{\ell_2}^q \right)^{2 / q} + (q-1)
\left( \mathbb{E}\|y \|_{\ell_2}^q \right)^{2 / q}
\end{equation*}
\end{proposition}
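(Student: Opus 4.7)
The proposition is a uniform 2-smoothness inequality for the Bochner space $L^q(\Omega;\mathbb{C}^d)$ with modulus $\sqrt{q-1}$. My plan is a three-step argument: a pointwise Taylor expansion of $\|\cdot\|^q$ on the Hilbert space $\mathbb{C}^d$, vanishing of the first-order term via the hypothesis $\mathbb{E}[y|x]=0$, and a Gronwall-style closure of the resulting integral inequality.

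\textbf{Step 1 (pointwise Taylor bound).} For fixed $u,v \in \mathbb{C}^d$, set $g(t) := \|u+tv\|_{\ell_2}^q$. Direct differentiation (on the open set where $u+tv \ne 0$) gives
\begin{align*}
g'(t) &= q\|u+tv\|^{q-2}\operatorname{Re}\langle u+tv, v\rangle, \\
g''(t) &= q(q-2)\|u+tv\|^{q-4}(\operatorname{Re}\langle u+tv, v\rangle)^2 + q\|u+tv\|^{q-2}\|v\|^2.
\end{align*}
For $q\ge 2$ both coefficients are non-negative, and Cauchy-Schwarz supplies $(\operatorname{Re}\langle u+tv, v\rangle)^2 \le \|u+tv\|^2\|v\|^2$; collecting terms yields $g''(t) \le q(q-1)\|u+tv\|^{q-2}\|v\|^2$, and this bound extends by continuity across the (at most one) $t$-value at which $u+tv = 0$. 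Taylor's theorem with integral remainder then gives the pointwise estimate
$$\|u+v\|^q \le \|u\|^q + q\operatorname{Re}\langle \|u\|^{q-2}u,v\rangle + q(q-1)\int_0^1 (1-s)\|u+sv\|^{q-2}\|v\|^2\,ds.$$

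\textbf{Step 2 (expectation, mean-zero, and Hölder).} Applying the same computation to $t \mapsto \|x+ty\|^q$ for $t \in [0,1]$ and taking expectation, the first-order term $q t\,\mathbb{E}\operatorname{Re}\langle\|x\|^{q-2}x, y\rangle$ vanishes by the tower property, since $\|x\|^{q-2}x$ is $\sigma(x)$-measurable and $\mathbb{E}[y|x] = 0$. Hölder's inequality with conjugate exponents $q/(q-2)$ and $q/2$ bounds the remainder integrand as $\mathbb{E}[\|x+sy\|^{q-2}\|y\|^2] \le (\mathbb{E}\|x+sy\|^q)^{(q-2)/q}(\mathbb{E}\|y\|^q)^{2/q}$. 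Setting $\phi(t) := (\mathbb{E}\|x+ty\|^q)^{2/q}$ and $G := (\mathbb{E}\|y\|^q)^{2/q}$, this produces the integral inequality
$$\phi(t)^{q/2} \le \phi(0)^{q/2} + q(q-1)\,G\int_0^t (t-s)\,\phi(s)^{(q-2)/2}\,ds \quad\text{for all } t \in [0,1].$$

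\textbf{Step 3 (Gronwall closure) and main obstacle.} It remains to conclude $\phi(1) \le \phi(0) + (q-1)G$. The candidate majorant $\Phi(t) := \phi(0) + (q-1)t^2 G$ satisfies $\Phi(0)^{q/2}=\phi(0)^{q/2}$, $(\Phi^{q/2})'(0)=0$, and a direct calculation yields
$$(\Phi^{q/2})''(t) = q(q-1)G\,\Phi(t)^{(q-2)/2} + q(q-1)^2(q-2)\,t^2 G^2\,\Phi(t)^{(q-4)/2} \ge q(q-1)G\,\Phi(t)^{(q-2)/2}.$$
Taylor's theorem then shows $\Phi^{q/2}$ satisfies the same integral inequality with $\le$ reversed. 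A bootstrapping argument---introducing a slack $\Phi_\delta := \Phi + \delta$ for small $\delta > 0$, letting $t^\star_\delta := \inf\{t \in [0,1] : \phi(t) \ge \Phi_\delta(t)\}$, combining the two integral inequalities at $t^\star_\delta$ to derive a contradiction with $\phi(t^\star_\delta) = \Phi_\delta(t^\star_\delta)$, then sending $\delta \to 0$---forces $\phi \le \Phi$ on $[0,1]$; evaluating at $t = 1$ closes the proof. The delicate point is precisely this closure step: preserving the sharp constant $(q-1)$ requires matching the majorant exactly, whereas a cruder route (e.g., a symmetrization plus triangle inequality) would produce a worse pre-factor such as $2(q-1)$.
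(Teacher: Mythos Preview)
Your proof is correct and takes a genuinely different route from the paper's. The paper argues via Bonami's (Clarkson-type) two-point inequality $\bigl(\tfrac{1}{2}(\|x+y\|_{\ell_2}^q + \|x-y\|_{\ell_2}^q)\bigr)^{2/q} \le \|x\|_{\ell_2}^2 + (q-1)\|y\|_{\ell_2}^2$, takes expectations, applies the triangle inequality in $L_{q/2}$, and then uses the hypothesis $\mathbb{E}[y\mid x]=0$ only through Jensen's inequality ($\mathbb{E}\|x-y\|^q \ge \mathbb{E}\|x\|^q$). This symmetrization yields only the constant $2(q-1)$; the paper explicitly concedes the factor-of-two loss and defers the sharp $(q-1)$ to an external reference. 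Your differential approach---Taylor-expand $t\mapsto\|x+ty\|^q$, kill the linear term via the tower property, H\"older the quadratic remainder to get an integral inequality for $\phi(t)^{q/2}$, and close with the comparison majorant $\Phi(t)=\phi(0)+(q-1)t^2G$ via a slack-and-first-crossing argument---obtains the sharp constant directly and is therefore more self-contained for the proposition as stated. The trade-off: the paper's route is shorter and more conceptual (uniform smoothness of $L_q(\ell_2)$ inherited from a pointwise Hilbert-space inequality), while yours is more hands-on but actually proves the advertised constant without outside appeal. Your closing remark that a ``cruder route'' would give $2(q-1)$ is exactly what the paper's own argument produces.
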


A slightly weaker version of this classic fact follows from a short argument.

\begin{proof}
Start with Bonami's inequality~\cite[Cor.~13.1.1]{garling_2007} for normed vector spaces (denoting $\lV \cdot \rV_{\ell_2}$ as $\lV \cdot \rV_2$):
\begin{align}
     \left(\frac{\lV x+y \rV_2^q + \lV x-y \rV_2^q}{2}\right)^{2/q} \le \frac{1}{2} \left(\lV x+\sqrt{q-1}y\rV_2^2+\lV x-\sqrt{q-1}y\rV_2^2\right) = \lV x\rV_2^2+(q-1)\lV y\rV_2^2.
\end{align}
This formula can be converted to the desired statement (Proposition~\ref{prop:pythagoras}) via elementary manipulations. We follow ~\cite[Corollary 4.2]{HNTR20:Matrix-Product}: take expectation and use triangle inequality for $L_{q/2}$ norm
\begin{align}
     \frac{\BE\lV x+y \rV_2^q + \BE\lV x-y \rV_2^q}{2} \le \BE(\lV x\rV_2^2+(q-1)\lV y\rV_2^2)^{q/2}\le \left((\BE\lV x\rV_2^q)^{2/q}+(q-1)(\BE\lV y\rV_2^q)^{2/q}\right)^{q/2}.
\end{align}
Next, following \cite[Proposition~4.3]{HNTR20:Matrix-Product}, by Jensen's in the first and Lyapunov's in the second  inequality,
\begin{align}
     \frac{(\BE\lV x+y \rV_2^q)^{2/q} + (\BE\lV x\rV_2^q)^{2/q}}{2} &\le \frac{(\BE\lV x+y \rV_2^q)^{2/q} + (\BE\lV x-y\rV_2^q)^{2/q}}{2} \\
     &\le
     \left(\frac{\BE\lV x+y \rV_2^q +\BE\lV x-y\rV_2^q}{2} \right)^{2/q} \le
     (\BE\lV x\rV_2^q)^{2/q}+(q-1)(\BE\lV y\rV_2^q)^{2/q}.
\end{align}
By rearranging terms, we obtain the result with the constant $2(q-1)$, which is off by a factor of 2. The advertised constant (Proposition~\ref{prop:pythagoras}) can be obtained via a more involved trick \cite[Lemma~A.1]{HNTR20:Matrix-Product}.
Geometrically, this result expresses the uniform smoothness of the space $(\BE\lV \cdot \rV_{\ell_2}^q)^{1/q}$.
\end{proof}

We refer to~\cite{HNTR20:Matrix-Product} for further exposition on uniform smoothness for general Schatten $p$-norm. Recently, this method has been applied to dynamic properties of random Hamiltonians~\cite{chen2021concentration}.
For the task at hand, we can use Markov's inequality to convert such bounds on moment growth into a strong tail bound, similar to Fact~\ref{fact:vector_martingale}. This is the content of Lemma~\ref{lem:fixed-input-fluctuation} below. 
}

\section{Technical details and proofs}
\label{sec:proofs}

\subsection{
Proof of Theorem~\ref{thm:allinput}~and~Corollary~\ref{cor:exp_allinput}: Approximation error under the worst-possible input}
\label{sec:proofthm1}

The proofs of Theorem~\ref{thm:allinput} and Corollary~\ref{cor:exp_allinput} were sketched in Section~\ref{sec:proofidea}. 
This section contains the details. In Section~\ref{sub:conversion}, we first relate the diamond distance to the operator norm. This allows us to work with the operator norm, which is mathematically simpler. Then we bound the two error contributions arising from the deterministic bias (in Section~\ref{sub:bias}), as well as random fluctuations (in Section~\ref{sub:fluctuations}). Finally, we combine the two
bounds to obtain a convergence guarantee for randomly sampled product formulas. This is the content of Section~\ref{sub:expectation}.

\subsubsection{Conversion from diamond distance into operator norm}
\label{sub:conversion}

The diamond distance is a rather intricate object. Although it can be phrased implicitly as a semidefinite program, analytical formulas are rare and far between. It is, however, sometimes possible to relate diamond distances to other figures of merit that are easier to access.\AC{
The mixing lemma by Campbell~\cite{campbell_mixing16} and Hastings~\cite{hastings2016turning} is one very useful example.

\begin{lemma}[Mixing lemma~\cite{campbell_mixing16,hastings2016turning}]\label{lem:mixing}
Let $U$ be a fixed unitary and $V$ be a random approximation thereof that obeys 
$\lV V-U \rV \le a$ almost surely. Then, the associated channels $\mathcal{U}(X)=UXU^\dagger$ and $\mathcal{V}(X) = V X V^\dagger$ obey
\begin{equation*}
    \tfrac{1}{2} \lV \BE \mathcal{V} -\mathcal{U} \rV_{\diamond} \le  \lV \BE V - U \rV+\tfrac{1}{2}a^2.
\end{equation*}
\end{lemma}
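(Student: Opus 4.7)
The plan is to reduce the diamond-norm bound to an operator-norm estimate on $V$ itself, and then expand around $U$. First, I would use the fact that $V \otimes I$ is unitary and $\|(V-U)\otimes I\| = \|V-U\| \le a$ almost surely, so stabilization on the right-hand side is free: it is enough to show
\[
\|\mathbb{E}[VXV^\dagger] - UXU^\dagger\|_1 \;\le\; \bigl(2\|\mathbb{E}V - U\| + a^2\bigr)\,\|X\|_1
\]
for every trace-class operator $X$ on the (possibly doubled) Hilbert space, and the diamond-norm bound then follows by taking a supremum over trace-one inputs and dividing by two.

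Second, I would set $W := V-U$, so that $\|W\| \le a$ almost surely, and expand
\[
VXV^\dagger \;=\; UXU^\dagger + UXW^\dagger + WXU^\dagger + WXW^\dagger.
\]
Passing to expectations, the two linear cross terms survive only through the bias $\mathbb{E}W = \mathbb{E}V - U$, yielding
\[
\mathbb{E}[VXV^\dagger] - UXU^\dagger \;=\; UX(\mathbb{E}W)^\dagger + (\mathbb{E}W)XU^\dagger + \mathbb{E}[WXW^\dagger].
\]
This cleanly separates a \emph{bias} contribution (the two linear terms) from a \emph{fluctuation} contribution (the quadratic remainder), and sets up a term-by-term estimate.

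Third, I would bound each piece by the standard Hölder-type inequality $\|AXB\|_1 \le \|A\|\,\|X\|_1\,\|B\|$, combined with Jensen's inequality ($\|\mathbb{E}[\cdot]\|_1 \le \mathbb{E}\|\cdot\|_1$) to move the expectation inside the trace norm of the quadratic term. Using $\|U\|=1$, the two linear terms each contribute at most $\|\mathbb{E}V - U\|\,\|X\|_1$, while the quadratic term is bounded by $\mathbb{E}\|W\|^2\,\|X\|_1 \le a^2\,\|X\|_1$. Summing gives the displayed estimate, and dividing by two yields exactly $\tfrac{1}{2}\|\mathbb{E}\mathcal{V}-\mathcal{U}\|_\diamond \le \|\mathbb{E}V - U\| + \tfrac{1}{2}a^2$.

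There is no real obstacle: the argument is a short algebraic expansion followed by elementary matrix inequalities. The only subtle point is ensuring the diamond-norm stabilization costs nothing, which is guaranteed by $\|A\otimes I\|=\|A\|$. I note in passing that the sharpened version stated earlier as Lemma~\ref{lem:mixing_maintext} avoids the stray $\tfrac{1}{2}a^2$; presumably this relies on a tighter grouping of the quadratic remainder against the bias rather than the crude Jensen bound used here, but for the statement at hand the above estimate is sufficient.
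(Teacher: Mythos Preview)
Your argument is correct: the algebraic expansion $VXV^\dagger = UXU^\dagger + UXW^\dagger + WXU^\dagger + WXW^\dagger$ followed by H\"older and Jensen gives exactly the stated bound, and the stabilization step is handled correctly via $\|A\otimes I\|=\|A\|$.

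However, the paper does not actually prove Lemma~\ref{lem:mixing} at all---it is simply cited to Campbell and Hastings, and your proof is essentially their original argument. What the paper does prove is the \emph{sharpened} version (Lemma~\ref{lem:diamond-to-operator}, equivalently Lemma~\ref{lem:mixing_maintext}), which drops the $\tfrac{1}{2}a^2$ term entirely. That proof takes a genuinely different route: rather than expanding in trace norm, it passes to fidelities via the Fuchs--van de Graaf relations, then uses Cauchy--Schwarz on the ensemble weights to show $|\langle\psi|U^\dagger\mathbb{E}[V]|\psi\rangle|^2 \le \sum_k p_k |\langle\psi|U^\dagger V_k|\psi\rangle|^2$, which converts directly into the trace-distance bound without ever producing a quadratic remainder. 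So your closing guess---that the sharpening comes from a ``tighter grouping of the quadratic remainder''---is off the mark: the improvement is not a refinement of the expansion you used but a different starting point (fidelity rather than trace norm) that sidesteps the quadratic term altogether.
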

}
More recent insights, in particular~\cite[Thm.~3.56]{watrous2018book}, allow for sharpening this bound. In particular, we can completely remove the $a^2/2$-term on the r.h.s.

\begin{lemma} \label{lem:diamond-to-operator}
Let $\mathcal{U}(\rho) = U \rho U^\dagger$ and $\mathcal{V}(\rho) =V \rho V^\dagger$ be unitary channels. Then,
$\tfrac{1}{2}\| \mathcal{U}(|\psi \rangle \! \langle \psi|) - \mathcal{V} (|\psi \rangle \! \langle \psi|) \|_1 \leq \| (U-V) | \psi \rangle \|_{\ell_2}$ for any pure state $|\psi \rangle$.
In turn, $\tfrac{1}{2} \| \mathcal{U} - \mathcal{V} \|_\diamond \leq \|U - V \| $.
The latter relation can be generalized to averages of random unitary channels with an additional factor of $2$:
\begin{equation*}
\tfrac{1}{2}\| \mathcal{U} - \mathbb{E}[\mathcal{V}] \|_\diamond \leq 2 \|U - \mathbb{E}[V] \|.
\end{equation*}
\end{lemma}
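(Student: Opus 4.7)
The plan is to tackle the three assertions in sequence, since each reduces to or builds on the previous one.

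For the pure-state inequality, I would exploit that unitary channels send pure states to pure states. With $|\phi\rangle = U|\psi\rangle$ and $|\chi\rangle = V|\psi\rangle$, the trace distance between rank-one projectors collapses to the closed form $\tfrac{1}{2}\||\phi\rangle\!\langle\phi| - |\chi\rangle\!\langle\chi|\|_1 = \sqrt{1 - |\langle\phi|\chi\rangle|^2}$. Because $|\langle\phi|\chi\rangle| \le (1 + |\langle\phi|\chi\rangle|^2)/2$ by AM--GM, we have $1 - |\langle\phi|\chi\rangle|^2 \le 2 - 2\,\mathrm{Re}\langle\phi|\chi\rangle = \|(U-V)|\psi\rangle\|_{\ell_2}^2$. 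Taking square roots completes the first step.

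For the operator-norm bound between the two unitary channels, I would invoke the standard fact (Watrous Theorem~3.55) that when comparing two unitary channels the supremum in the diamond norm is attained on an unentangled pure input, so stabilising ancillas do not help. Combining that identity with the pure-state inequality and taking the supremum over $|\psi\rangle$ yields $\tfrac{1}{2}\|\mathcal{U}-\mathcal{V}\|_\diamond \le \sup_{\|\psi\|=1}\|(U-V)|\psi\rangle\|_{\ell_2} = \|U-V\|$.

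For the extension to a mixture of unitary channels, $\mathbb{E}[\mathcal{V}]$ is no longer unitary, so Thm~3.55 is unavailable. Instead I would call on Watrous Theorem~3.56, which bounds the diamond distance between two channels by twice the operator-norm distance between any pair of isometric Stinespring dilations. The natural dilation of $\mathbb{E}[\mathcal{V}]$ is the canonical isometry $W = \sum_i \sqrt{p_i}\,V_i \otimes |i\rangle \in L(\mathcal{H}, \mathcal{H}\otimes\mathcal{Z})$, and $\mathcal{U}$ can be dilated via $A = U \otimes |\phi\rangle$ for any unit vector $|\phi\rangle \in \mathcal{Z}$; the natural candidate is $|\phi\rangle = \sum_i \sqrt{p_i}\,|i\rangle$, which aligns $A$ with the ``mean direction'' of $W$ along the ancillary register. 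The task then reduces to showing $\|A - W\|_{\mathrm{op}}$ is controlled by $\|U - \mathbb{E}[V]\|$.

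The hard step will be Part~3. A naive estimate of $\|A-W\|$ produces the variance-type quantity $\sqrt{\|2I - U^\dagger E - E^\dagger U\|}$ rather than the mean deviation $\|U - \mathbb{E}V\|$, and these are not equivalent in general. I expect to need either a tailored (possibly non-canonical) choice of dilation, or a direct trace-distance calculation that isolates the two qualitatively different contributions---the bias $U - E$ (which behaves linearly) and the random fluctuation $V - E$ (which enters only through the sub-unitarity $I - E^\dagger E$)---and then exploits the identity $I - E^\dagger E = U^\dagger(U-E) + (U-E)^\dagger E$ to express both contributions in terms of $\|U-\mathbb{E}V\|$ alone.
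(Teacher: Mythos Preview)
Your treatment of Parts~1 and~2 is essentially identical to the paper's: Fuchs--van de Graaf to reduce trace distance to $\sqrt{1-|\langle u,v\rangle|^2}$, then the elementary bound $(1+|x|)(1-|x|)\le 2(1-\mathrm{Re}\,x)$ (your AM--GM phrasing is equivalent), and finally the no-stabilisation fact for unitary channels.

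For Part~3 your Stinespring route runs into a genuine obstruction, which you correctly diagnose but do not resolve. With your choice $A=U\otimes|\phi\rangle$, $|\phi\rangle=\sum_i\sqrt{p_i}|i\rangle$, and $W=\sum_i\sqrt{p_i}V_i\otimes|i\rangle$, one computes $(A-W)^\dagger(A-W)=2I-U^\dagger E-E^\dagger U$ while $(U-E)^\dagger(U-E)=2I-U^\dagger E-E^\dagger U-(I-E^\dagger E)$. Since $I-E^\dagger E\ge 0$, this gives $\|U-E\|\le\|A-W\|$, the \emph{wrong} direction. Your proposed identity $I-E^\dagger E=U^\dagger(U-E)+(U-E)^\dagger E$ only yields $\|A-W\|^2\le\|U-E\|^2+2\|U-E\|$, which is too weak. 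No alternative dilation that closes the gap is offered, and it is not clear one exists.

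The paper sidesteps Stinespring entirely. It stays with Fuchs--van de Graaf: for a pure input $|\psi\rangle$, the fidelity of $U|\psi\rangle\!\langle\psi|U^\dagger$ with $\sum_k p_k V_k|\psi\rangle\!\langle\psi|V_k^\dagger$ equals $\sum_k p_k|\langle\psi|U^\dagger V_k|\psi\rangle|^2$, and Cauchy--Schwarz in the form $|\sum_k\sqrt{p_k}\cdot\sqrt{p_k}\langle\psi|U^\dagger V_k|\psi\rangle|^2\le\sum_k p_k|\langle\psi|U^\dagger V_k|\psi\rangle|^2$ shows this fidelity dominates $|\langle\psi|U^\dagger E|\psi\rangle|^2$. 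Then the Part~1 argument gives $\tfrac12\|\mathcal{U}(|\psi\rangle\!\langle\psi|)-\mathbb{E}[\mathcal{V}](|\psi\rangle\!\langle\psi|)\|_1\le\|(U-E)|\psi\rangle\|_{\ell_2}$. Crucially, this bound is valid for \emph{any} fixed unitary and any ensemble, so one handles stabilisation simply by replacing $U\mapsto U\otimes\mathbb{I}$ and $V_k\mapsto V_k\otimes\mathbb{I}$ and maximising over bipartite pure states; the operator norm is unaffected by tensoring with identity. This is both shorter and avoids the variance term that blocks your dilation approach.
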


The first claim follows from the fact that stabilization is not required to compute the diamond distance between two unitary channels~\cite[Sec.~5.3]{AKN98:Quantum-Circuits}. The second claim is more complicated. The original claim without the factor of $2$ on the right hand side turned out to be incorrect. Richard Allen and Haimeng Zhao found an elegant proof showing that the second claim remains correct with the additional factor of $2$.

\begin{proof}
Fix an input state $|\psi \rangle$ and denote the output state vectors by $|u \rangle=U| \psi \rangle$ and $|v \rangle = V | \psi \rangle$. Normalization ensures that $|\langle u,v \rangle | \leq 1$. Applying the Fuchs--van de Graaf relations~\cite[Theorem~3.33]{watrous2018book} to convert the output trace distance into a (pure) output fidelity gives:
\begin{align*}
\tfrac{1}{2} \| |u \rangle \! \langle u| - |v \rangle \! \langle v| \|_1 = \sqrt{1-| \langle u,v \rangle|^2} = \sqrt{(1+| \langle u,v \rangle|) (1-| \langle u,v\rangle|)} \leq \sqrt{2(1-\mathrm{Re}(\langle u,v \rangle))} = \| |u \rangle - |v \rangle \|_{\ell_2}. 
\end{align*}
Since stabilization is not necessary for computing the diamond distance of two unitary channels~\cite[Sec.~5.3]{AKN98:Quantum-Circuits}, this relation immediately implies the first diamond distance bound:
\begin{equation*}
\tfrac{1}{2} \| \mathcal{U} - \mathcal{V} \|_\diamond = \max_{|\psi \rangle} \tfrac{1}{2} \| \mathcal{U}(|\psi \rangle \! \langle \psi|) - \mathcal{V}(|\psi \rangle \! \langle \psi| ) \|_1 \leq \max_{| \psi \rangle}\|(U-V)| \psi \rangle \|_{\ell_2} = \|U-V \| .
\end{equation*}
This completes the first claim.

For the second claim with $V$ being random, we let $E = \mathbb{E}[V]$. For any pure state $|\psi \rangle$, we expand the action of the expected channel by centering it around $E$:
\begin{equation*}
\mathbb{E}[V|\psi \rangle\!\langle \psi|V^\dagger] = \mathbb{E}[((V-E)+E)|\psi \rangle\!\langle \psi|((V-E)+E)^\dagger] = E|\psi \rangle \! \langle \psi|E^\dagger + \mathbb{E}[(V-E)|\psi \rangle\!\langle \psi|(V-E)^\dagger] ,
\end{equation*}
where the cross terms vanish because $\mathbb{E}[V-E] = 0$. By the triangle inequality, we have:
\begin{equation*}
\tfrac{1}{2}\| U|\psi \rangle\!\langle \psi|U^\dagger - \mathbb{E}[V|\psi \rangle\!\langle \psi|V^\dagger] \|_1 \leq \tfrac{1}{2}\| U|\psi \rangle\!\langle \psi|U^\dagger - E|\psi \rangle\!\langle \psi|E^\dagger \|_1 + \tfrac{1}{2}\| \mathbb{E}[(V-E)|\psi \rangle\!\langle \psi|(V-E)^\dagger] \|_1 .
\end{equation*}
For the first term, we rewrite the argument as $(U-E)|\psi \rangle\!\langle \psi|U^\dagger + E|\psi \rangle\!\langle \psi|(U^\dagger-E^\dagger)$. Applying the triangle inequality and the trace norm property $\| |\phi \rangle\!\langle \varphi| \|_1 = \| |\phi \rangle \|_2 \| |\varphi \rangle \|_2$, we obtain:
\begin{equation*}
\tfrac{1}{2}\| U|\psi \rangle\!\langle \psi|U^\dagger - E|\psi \rangle\!\langle \psi|E^\dagger \|_1 \leq \tfrac{1}{2} \left( \|(U-E)|\psi \rangle \|_2 \|U|\psi \rangle \|_2 + \|E|\psi \rangle \|_2 \|(U-E)|\psi \rangle \|_2 \right) .
\end{equation*}
Since $U$ is unitary, $\|U|\psi \rangle \|_2 = 1$. By the convexity of the vector norm, $\|E|\psi \rangle \|_2 = \|\mathbb{E}[V]|\psi \rangle \|_2 \leq \mathbb{E}[\|V|\psi \rangle \|_2] = 1$. Thus, the bias term is bounded by $\|(U-E)|\psi \rangle \|_2$.
For the second term, observe that $(V-E)|\psi \rangle\!\langle \psi|(V-E)^\dagger$ is positive semi-definite, and so is its expectation. Thus, its trace norm equals its trace. Using $V^\dagger V = I$, we compute:
\begin{equation*}
\| \mathbb{E}[(V-E)|\psi \rangle\!\langle \psi|(V-E)^\dagger] \|_1 = \mathrm{Tr}\big(\mathbb{E}[(V-E)|\psi \rangle\!\langle \psi|(V-E)^\dagger]\big) = \langle \psi| (I - E^\dagger E) |\psi \rangle .
\end{equation*}
Using $U^\dagger U = I$, we rewrite $I - E^\dagger E = U^\dagger(U-E) + (U^\dagger - E^\dagger)E$. Applying the Cauchy-Schwarz inequality yields:
\begin{equation*}
\langle \psi| U^\dagger(U-E) + (U^\dagger - E^\dagger)E |\psi \rangle \leq \|U|\psi \rangle \|_2 \|(U-E)|\psi \rangle \|_2 + \|E|\psi \rangle \|_2 \|(U-E)|\psi \rangle \|_2 \leq 2 \|(U-E)|\psi \rangle \|_2 .
\end{equation*}
Multiplying by $\tfrac{1}{2}$, the variance term is also bounded by $\|(U-E)|\psi \rangle \|_2$.
Combining the two terms, we obtain:
\begin{equation*}
\tfrac{1}{2}\| U|\psi \rangle\!\langle \psi|U^\dagger - \mathbb{E}[V|\psi \rangle\!\langle \psi|V^\dagger] \|_1 \leq 2 \|(U-E)|\psi \rangle \|_2 .
\end{equation*}
Finally, we extend this to the diamond norm. By definition, the diamond distance is maximized by some pure state $|\Psi \rangle$ on the extended space $\mathcal{H} \otimes \mathcal{H}_{\mathrm{anc}}$:
\begin{equation*}
\tfrac{1}{2}\| \mathcal{U} - \mathbb{E}[\mathcal{V}] \|_\diamond = \max_{|\Psi \rangle} \tfrac{1}{2}\| (U \otimes I)|\Psi \rangle\!\langle \Psi|(U \otimes I)^\dagger - \mathbb{E}[(V \otimes I)|\Psi \rangle\!\langle \Psi|(V \otimes I)^\dagger] \|_1 .
\end{equation*}
Applying our pure state bound to the extended unitaries $U \otimes I$ and $V \otimes I$, we find:
\begin{equation*}
\tfrac{1}{2}\| \mathcal{U} - \mathbb{E}[\mathcal{V}] \|_\diamond \leq \max_{|\Psi \rangle} 2 \| ((U-E) \otimes I) |\Psi \rangle \|_2 = 2 \| (U-E) \otimes I \| = 2 \| U - E \| ,
\end{equation*}
which completes the proof.
\end{proof}

\subsubsection{Controlling the deterministic bias} \label{sub:bias}

Next, we establish a bound on the deterministic bias between the averaged channel
and the ideal unitary evolution.

\begin{proposition} \label{prop:bias1}
Consider the i.i.d.~unitary product constructed by the \textsc{qDRIFT} protocol \eqref{eq:qDRIFT} for
simulating $U=\exp (-\mathrm{i} t H)$ for evolution time $t$.  Define the total strength $\lambda= \sum_j\norm{h_j} $. Then
\begin{equation*}
\| U - \mathbb{E} [V_N \cdots V_1] \|  \leq \frac{t^2 \lambda^2}{N}.
\end{equation*}
\end{proposition}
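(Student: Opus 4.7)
My plan is to reduce the bound to a single-step estimate and then telescope. First, observe that the \textsc{qDRIFT} importance sampling distribution is designed precisely so that, with $\tau = t/N$,
\[
\mathbb{E}[X_k] \;=\; \sum_{j=1}^L p_j \cdot \tfrac{\lambda}{\|h_j\|} h_j \;=\; \sum_{j=1}^L h_j \;=\; H.
\]
This is the key structural fact. It means that if we Taylor expand $\mathbb{E}[V_k] = \mathbb{E}[\exp(-\iunit\tau X_k)]$ and $U^{1/N} = \exp(-\iunit\tau H)$, the zeroth- and first-order terms in $\tau$ agree. The approximation error is therefore $\Ord(\tau^2)$ rather than $\Ord(\tau)$, which is the origin of the $1/N$ scaling in the final bound.

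To make this rigorous with clean constants, I would use Taylor's theorem with integral remainder rather than directly summing tails. For any Hermitian $Y$ with $\|Y\| \leq \lambda$, the function $f(\tau) = \exp(-\iunit\tau Y)$ satisfies $f''(\tau) = -Y^2 f(\tau)$, so
\[
\e^{-\iunit\tau Y} \;=\; I - \iunit\tau Y - \int_0^\tau (\tau-s)\, Y^2 \e^{-\iunit s Y}\,ds.
\]
Applying this identity to $Y = X_k$ (random) and $Y = H$ (deterministic), then taking expectations over $X_k$ and subtracting, the linear terms cancel by the design property $\mathbb{E}[X_k] = H$, leaving
\[
\mathbb{E}[V_k] - U^{1/N} \;=\; -\int_0^\tau (\tau - s)\bigl(\mathbb{E}[X_k^2 \e^{-\iunit s X_k}] - H^2 \e^{-\iunit s H}\bigr)\,ds.
\]
Using $\|X_k\|\le\lambda$ almost surely, $\|H\|\le\lambda$, and unitarity of $\exp(-\iunit s X_k)$ and $\exp(-\iunit s H)$, the triangle inequality yields the single-step bound
\[
\bigl\|\mathbb{E}[V_k] - U^{1/N}\bigr\| \;\leq\; \int_0^\tau (\tau - s)\cdot 2\lambda^2\, ds \;=\; \tau^2 \lambda^2 \;=\; \frac{t^2 \lambda^2}{N^2}.
\]

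Finally, I would telescope. Since the $V_k$ are i.i.d., $\mathbb{E}[V_N\cdots V_1] = (\mathbb{E}V)^N$, and both $\mathbb{E}V$ and $U^{1/N}$ are contractions in operator norm. The composition lemma (Lemma~\ref{lem:composition}, invoked in the sketch) then gives
\[
\bigl\|(\mathbb{E}V)^N - (U^{1/N})^N\bigr\| \;\leq\; N \bigl\|\mathbb{E}V - U^{1/N}\bigr\| \;\leq\; N \cdot \frac{t^2\lambda^2}{N^2} \;=\; \frac{t^2\lambda^2}{N},
\]
which is the claimed bound. The only nontrivial step is the single-step Taylor estimate; the composition step is standard and the constant $1$ on the right-hand side is exactly what the integral remainder delivers (a naive tail-sum bound $2(\e^{\tau\lambda} - 1 - \tau\lambda)$ would lose a factor of $2$ and pick up higher-order corrections, which is why I prefer the integral form). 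I do not anticipate any real obstacle here: once one recognizes that \textsc{qDRIFT}'s importance sampling exactly matches the first-order Taylor coefficient, the rest is bookkeeping.
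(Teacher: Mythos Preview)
Your proposal is correct and follows essentially the same route as the paper: establish the single-step bound $\|\mathbb{E}V - U^{1/N}\| \leq (t\lambda/N)^2$ via a second-order Taylor argument (the paper uses the equivalent bound $\|\e^{\iunit X} - \iunit X - \mathbb{I}\| \leq \tfrac{1}{2}\|X\|^2$ from Fact~\ref{fact:taylor} rather than the integral remainder, but these are the same thing), and then telescope via Lemma~\ref{lem:composition}. The constants match exactly.
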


Note that the improved mixing lemma, Lemma~\ref{lem:diamond-to-operator} above, allows for converting this statement into a diamond distance bound for the associated channels:
\begin{equation}
\tfrac{1}{2}\| \mathcal{U}-\mathbb{E} [\mathcal{V}_N \circ \cdots \circ \mathcal{V}_1]\|_\diamond 
\leq 2  \|U - \mathbb{E} \left[V_N \cdots V_1 \right] \| 
\leq
\frac{2 t^2 \lambda^2}{N}. \label{eq:campbell}
\end{equation}
The proof of Proposition~\ref{prop:bias1} is based on an extension of the numerical bounds $|\mathrm{e}^{\mathrm{i}x}-1| \leq |x|$ and
$|\mathrm{e}^{\mathrm{i}x}-\mathrm{i}x - 1 | \leq x^2/2$ for all $x \in \mathbb{R}$ to Hermitian matrices.

\begin{fact} \label{fact:taylor}
Let $X$ be Hermitian.  Then we have the zero-order bound
$\|\exp (\mathrm{i}X)-\mathbb{I} \| \leq \|X\|$ and 
the first-order bound
$
\| \exp (\mathrm{i}X)-\mathrm{i}X-\mathbb{I} \| \leq \tfrac{1}{2} \|X\|^2$.
\end{fact}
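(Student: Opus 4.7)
My plan is to reduce both matrix bounds to the corresponding scalar bounds via the spectral theorem, since $X$ is Hermitian and the matrix exponential of a Hermitian matrix is a normal operator whose operator norm is realized on its spectrum. Concretely, diagonalize $X = \sum_j \lambda_j |\phi_j\rangle\!\langle\phi_j|$ with real eigenvalues $\lambda_j$. Then $\exp(\mathrm{i}X) - \mathbb{I} = \sum_j (e^{\mathrm{i}\lambda_j}-1)|\phi_j\rangle\!\langle\phi_j|$ and $\exp(\mathrm{i}X) - \mathrm{i}X - \mathbb{I} = \sum_j (e^{\mathrm{i}\lambda_j}-\mathrm{i}\lambda_j-1)|\phi_j\rangle\!\langle\phi_j|$. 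Both are normal (in fact, these expressions are their own spectral decompositions), so their operator norms equal $\max_j |e^{\mathrm{i}\lambda_j}-1|$ and $\max_j |e^{\mathrm{i}\lambda_j}-\mathrm{i}\lambda_j-1|$ respectively. Since $\|X\| = \max_j |\lambda_j|$, it suffices to establish the two scalar inequalities
\begin{equation*}
|e^{\mathrm{i}x}-1| \leq |x|, \qquad |e^{\mathrm{i}x}-\mathrm{i}x-1| \leq \tfrac{1}{2}x^2 \quad \text{for all } x \in \mathbb{R}.
\end{equation*}

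For the scalar inequalities, I would use the integral representations $e^{\mathrm{i}x}-1 = \int_0^x \mathrm{i}e^{\mathrm{i}t}\,dt$ and $e^{\mathrm{i}x}-\mathrm{i}x-1 = \int_0^x \mathrm{i}(e^{\mathrm{i}t}-1)\,dt$. The first gives $|e^{\mathrm{i}x}-1| \leq \int_0^{|x|} 1\,dt = |x|$, since $|\mathrm{i}e^{\mathrm{i}t}|=1$. Substituting the first bound into the second integrand yields $|e^{\mathrm{i}x}-\mathrm{i}x-1| \leq \int_0^{|x|} t\,dt = x^2/2$. These are standard, and there is no real obstacle here; the only subtle point is that the reduction from matrix to scalar genuinely requires $X$ to be Hermitian (so that $\exp(\mathrm{i}X)$ is normal and spectral calculus applies cleanly), which is exactly our hypothesis.

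An alternative route, if one prefers avoiding explicit diagonalization, is to apply the operator-valued fundamental theorem of calculus directly: $\exp(\mathrm{i}X)-\mathbb{I} = \int_0^1 \mathrm{i}X\exp(\mathrm{i}sX)\,ds$, and bound the integrand in operator norm using submultiplicativity together with $\|\exp(\mathrm{i}sX)\| = 1$ for $s\in[0,1]$ (a consequence of unitarity). This yields the first inequality immediately, and a second application to $\exp(\mathrm{i}X)-\mathrm{i}X-\mathbb{I} = \int_0^1 \mathrm{i}X(\exp(\mathrm{i}sX)-\mathbb{I})\,ds$ together with the first bound gives the second. I would present whichever of the two arguments is cleaner in context; neither requires any step that I expect to be difficult.
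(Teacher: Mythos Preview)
Your proposal is correct. The paper does not actually prove this statement: it is recorded as a Fact and introduced only with the remark that it is ``an extension of the numerical bounds $|\mathrm{e}^{\mathrm{i}x}-1| \leq |x|$ and $|\mathrm{e}^{\mathrm{i}x}-\mathrm{i}x - 1 | \leq x^2/2$ for all $x \in \mathbb{R}$ to Hermitian matrices.'' Your spectral-decomposition argument is precisely the justification this remark is gesturing at, and your alternative via the operator-valued fundamental theorem of calculus (exploiting $\|\exp(\mathrm{i}sX)\|=1$) is an equally clean route; either would serve as a complete proof.
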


These observations can be converted into accurate operator-norm bounds for the expected error of individual \textsc{qDRIFT} steps.

\begin{lemma} \label{lem:taylor}
Fix a Hamiltonian $H=\sum_{j=1}^L h_j$ and parameters $N, t$.
Set $U^{1/N}=\exp \left( - \mathrm{i} (t/N) H \right)$ and $\lambda = \sum_{j=1}^L \|h_j\|$. 
Then, the random matrix $V$ defined in~\eqref{eq:qDRIFT} obeys
\begin{equation*}
\text{ (almost surely)}
\quad \text{and}
\quad 
\| (\mathbb{E} V) - U^{1/N}\| \leq \frac{t^2 \lambda^2}{N^2}
\end{equation*}
\end{lemma}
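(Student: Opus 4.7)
\textbf{Proof plan for Lemma~\ref{lem:taylor}.}
The plan is to apply the matrix Taylor bounds from Fact~\ref{fact:taylor} termwise, using two key facts about the \textsc{qDRIFT} random variable: the normalization $\|X\| = \lambda$ holds almost surely (each realization equals $(\lambda/\|h_j\|)h_j$), and the mean reproduces the Hamiltonian, $\mathbb{E}[X] = \sum_j (\|h_j\|/\lambda)\cdot(\lambda/\|h_j\|) h_j = H$. Note also that $\|H\| \leq \sum_j \|h_j\| = \lambda$ by the triangle inequality, so both $X$ and $H$ have operator norm at most $\lambda$.

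For the first inequality, I would apply the zero-order bound from Fact~\ref{fact:taylor} to $V = \exp(-\mathrm{i}(t/N)X)$, yielding $\|V - \mathbb{I}\| \le (t/N)\|X\| \le t\lambda/N$ almost surely. Taking expectations and using Jensen's inequality gives $\|\mathbb{E}V - \mathbb{I}\| \le \mathbb{E}\|V - \mathbb{I}\| \le t\lambda/N$ as well, so the triangle inequality produces
\begin{equation*}
\|V - \mathbb{E}V\| \;\le\; \|V - \mathbb{I}\| + \|\mathbb{E}V - \mathbb{I}\| \;\le\; \frac{2t\lambda}{N}.
\end{equation*}

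For the second inequality, the strategy is to Taylor-expand both $V$ and $U^{1/N}$ to first order and exploit that the expected linear terms agree. Writing
\begin{equation*}
V \;=\; \mathbb{I} - \mathrm{i}(t/N) X + R_V, \qquad U^{1/N} \;=\; \mathbb{I} - \mathrm{i}(t/N) H + R_U,
\end{equation*}
the first-order bound in Fact~\ref{fact:taylor} delivers $\|R_V\| \le \tfrac{1}{2}(t/N)^2 \|X\|^2 = t^2\lambda^2/(2N^2)$ almost surely and $\|R_U\| \le \tfrac{1}{2}(t/N)^2 \|H\|^2 \le t^2\lambda^2/(2N^2)$. Because $\mathbb{E}X = H$, the linear terms cancel upon averaging:
\begin{equation*}
\mathbb{E}V - U^{1/N} \;=\; \mathbb{E}R_V - R_U.
\end{equation*}
The triangle inequality combined with $\|\mathbb{E}R_V\| \le \mathbb{E}\|R_V\|$ then gives the claimed bound $t^2\lambda^2/N^2$.

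The proof is essentially bookkeeping; there is no serious obstacle. The only subtle point is to recognize that the almost-sure constraint $\|X\| = \lambda$ is sharper than the cruder bound $\|X\| \le \lambda$ one would get from $X$'s support, and that the normalization in \textsc{qDRIFT} was designed precisely so that the first-order correction in the matrix exponential matches $-\mathrm{i}(t/N)H$ in expectation. Both bounds follow without invoking any probabilistic inequality beyond Jensen.
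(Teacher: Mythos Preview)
Your proof is correct and follows essentially the same route as the paper: both arguments insert and subtract the first-order Taylor expansion, use Fact~\ref{fact:taylor} to bound the remainders, and invoke Jensen's inequality together with $\mathbb{E}X = H$ so that the linear terms cancel. The only cosmetic difference is that the paper absorbs $t/N$ into $X$ before writing the estimates, whereas you keep the scaling explicit.
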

\begin{proof}
Streamline the notation from \textsc{qDRIFT}(Algorithm~\ref{alg:qdrift}) by absorbing the scaling factor $(t/N)$ into the random Hermitian matrix $X$.  In particular,
$V=\exp(-\mathrm{i}X)$, $\mathbb{E}V=\mathbb{E}[\exp(-\mathrm{i}X)]$, $U^{1/N}=\exp(-\mathrm{i} \mathbb{E}[X])$ and $\norm{X}=(t\lambda)/N$ almost surely.
Observe that
\begin{align*}
\|V-(\mathbb{E}V)\| \leq \|\exp(-\mathrm{i}X)-\mathbb{I} \|+ \| \mathbb{I}-\mathbb{E}[\exp (-\mathrm{i}X)]\|
\leq \| \exp(-\mathrm{i}X) - \mathbb{I} \| + \mathbb{E} \|\mathbb{I}-\exp(-\mathrm{i}X) \|,
\end{align*}
where the last inequality is Jensen's.
Fact~\ref{fact:taylor} and uniform normalization then imply $\|\exp(-\mathrm{i}X)-\mathbb{I} \| \leq \|X\|= (t\lambda)/N$ for any instance of the random matrix $X$. 
This uniform bound also covers the expected norm difference and we conclude $\|V-(\mathbb{E}V) \| \leq 2 t \lambda/N$. 
The (tighter) second claim can be derived in a similar fashion.
A combination of Jensen's inequality, Fact~\ref{fact:taylor}, and uniform normalization delivers
\begin{align*}
\| (\mathbb{E}V)-U^{1/N}\|
&= \| \mathbb{E}\left[ \exp (-\mathrm{i}X)]-\mathbb{I} +\mathrm{i}X \right] + \left( \mathbb{I}-\mathrm{i} \mathbb{E}[X]-\exp(-\mathrm{i} \mathbb{E}[X]) \right) \| \\
&\leq \mathbb{E} \| \exp (-\mathrm{i}X) -\mathbb{I} +\mathrm{i}X \|
+ \| \exp (-\mathrm{i} \mathbb{E}[X])-\mathbb{I} + \mathrm{i} \mathbb{E}[X]\| \\
&\leq \tfrac{1}{2} \mathbb{E} \|X \|^2 + \tfrac{1}{2} \| \mathbb{E}[X]\|^2
\leq \mathbb{E} \|X \|^2
= \left( t \lambda/ N \right)^2.
\end{align*}
This is the advertised result.
\end{proof}

We also need a statement regarding error accumulation over several applications of similar, but not identical, linear operators.
It is a rather intuitive consequence of operator norm sub-multiplicativity and the triangle inequality.
See~\cite{suzuki1985decomposition} for related results.

\begin{lemma} \label{lem:composition}
Let $\mathbb{E}V$ and $U^{1/N}$ be matrices with bounded operator norm, i.e.\ $\|\mathbb{E}V\| \leq 1$ and $\|U^{1/N}\| \leq 1$. Then
\begin{equation*}
\|(\mathbb{E}V)^N-(U^{1/N})^N \| \leq N \|(\mathbb{E}V)-U^{1/N} \|.
\end{equation*} 
\end{lemma}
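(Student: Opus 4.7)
The plan is a standard telescoping argument, combined with sub-multiplicativity of the operator norm and the hypothesis that both matrices are contractive.

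Set $A = \mathbb{E}V$ and $B = U^{1/N}$ for brevity, so by hypothesis $\|A\| \le 1$ and $\|B\| \le 1$, and the goal is $\|A^N - B^N\| \le N\|A - B\|$. First I would write the telescoping identity
\begin{equation*}
A^N - B^N \;=\; \sum_{k=0}^{N-1} A^{N-1-k}\,(A-B)\,B^{k},
\end{equation*}
which is easily verified by expanding the right-hand side: successive terms cancel in pairs, leaving only the two endpoints $A^{N}$ and $-B^{N}$.

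Next I would pass to the operator norm, using the triangle inequality and sub-multiplicativity $\|XY\| \le \|X\|\,\|Y\|$:
\begin{equation*}
\|A^N - B^N\| \;\le\; \sum_{k=0}^{N-1} \|A\|^{N-1-k}\;\|A-B\|\;\|B\|^{k}.
\end{equation*}
Invoking the contractivity hypotheses $\|A\|\le 1$ and $\|B\|\le 1$, each factor $\|A\|^{N-1-k}\|B\|^{k}$ is bounded by $1$, so the sum collapses to $N\|A-B\|$, which is precisely the desired bound.

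There is no real obstacle here; the only step requiring any care is checking the telescoping identity itself, which is purely algebraic, and confirming that the contractivity hypothesis is applied in the right place (to control the two one-sided products $A^{N-1-k}$ and $B^{k}$ uniformly in $k$). The same argument would give a similar bound $\|A^N - B^N\| \le N\max(\|A\|,\|B\|)^{N-1}\|A-B\|$ without the contractivity assumption, but the unit-norm hypothesis is what removes the exponential factor and yields the clean linear-in-$N$ accumulation used in Eq.~\eqref{eq:campbell2}.
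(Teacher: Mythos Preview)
Your proof is correct and follows essentially the same telescoping strategy as the paper. The only cosmetic difference is that the paper phrases the telescoping recursively (peeling off one factor at a time via $\|A_1A_2 - B_1B_2\| \le \|A_2\|\|A_1-B_1\| + \|B_1\|\|A_2-B_2\|$ and iterating), whereas you write the full telescoping sum $A^N - B^N = \sum_{k=0}^{N-1} A^{N-1-k}(A-B)B^k$ up front; these are the same argument.
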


\begin{proof}
The triangle inequality and sub-multiplicativity imply
\begin{align*}
\|A_1 A_2 - B_1 B_2 \|
&= \| (A_1 - B_1) A_2 + B_1 (A_2-B_2) \|
\leq \|A_2 \| \|A_1 - B_1 \| + \|B_1 \| \|A_2 - B_2 \|
\end{align*}
for any matrix quadruple $A_1,A_2,B_1,B_2$ with compatible dimensions. 
Use the assumed operator norm bounds to iteratively apply this relation and deduce the statement:
\begin{align*}
\| (\mathbb{E}V)^{N}-(U^{1/N})^N \|
&= \| (\mathbb{E} V) (\mathbb{E}V)^{N-1}- U^{1/N}(U^{1/N})^{N-1}\| \\
&\leq \| (\mathbb{E}V)-U^{1/N}\| + \| (\mathbb{E}V)^{N-1}- (U^{1/N})^{N-1}\|
\leq \cdots \leq  N \| (\mathbb{E}V)-U^{1/N}\|.
\end{align*}
This is the stated result.
\end{proof}

\begin{proof}[Proof of Proposition~\ref{prop:bias1}]
The main result of this section immediately follows from combining Lemma~\ref{lem:composition} and Lemma~\ref{lem:taylor}.
Decompose $U=\exp (-\iunit tH)$ into $N$ steps $U^{1/N}=\exp (-\mathrm{i} (t / N) H)$ and conclude
\begin{align*}
\| U - (\mathbb{E} V)^N \|
= \|(U^{1/N})^N - (\mathbb{E}V)^N \| \leq N \| U^{1/N}-(\mathbb{E} V)\|
\leq \frac{t^2 \lambda^2}{N}.
\end{align*}
This is what we had to show.
\end{proof}

\subsubsection{Controlling random fluctuations} \label{sub:fluctuations}

In the previous subsection we have essentially recapitulated the state of the art regarding \textsc{qDRIFT}: the algorithm provides an accurate approximation in expectation over all possible random choices (deterministic bias).
In this section, things start to get interesting. 
We want to show that a single realization of \textsc{qDRIFT} is likely to provide a good approximation, provided that the number of steps $N$ is sufficiently large. In order to achieve this goal, we need to show that concrete fluctuations around the (accurate) expected behavior remain small: 
\begin{equation}
V_N \cdots V_1 \approx \mathbb{E}[V_N \cdots V_1] =(\mathbb{E}V)^N\quad \text{with high probability.}
\label{eq:fluctuation}
\end{equation}
In words, we need to show that a product of i.i.d.~random matrices concentrates sharply around its expectation value. 
This is an interesting and nontrivial problem, even in the (asymptotic) large $N$ limit. While sharp concentration bounds for sums of i.i.d.~random matrices have been available for more than a decade now \cite{AW02:Strong-Converse,Tro12:User-Friendly}, our understanding of concentration for random matrix products is more limited; see~\cite{HNTR20:Matrix-Product} and references therein. There is a lot of math literature on random walks on Lie groups, but the focus is usually on asymptotic convergence and the machinery is different; see~\cite{varj2012random} and references therein.  The small-step regime has seen less development, although there are some asymptotic bounds~\cite{versendaal2019large}.

Fortunately, the \textsc{qDRIFT} construction has several appealing features: the random unitaries $V_N,\ldots,V_1$ are i.i.d.~unit-norm matrices that are close to the identity matrix ($\|V - \mathbb{I}\| \leq t\lambda/N$ almost surely) and close to their expectation ($\|V-(\mathbb{E}V)\|\leq 2 t \lambda/N$ almost surely).
These properties allow us to use the matrix martingale formalism to derive a strong, nonasymptotic result on the quality of the approximation.

\begin{proposition}[\textsc{qDRIFT}: Operator norm concentration] \label{prop:fluctuation}
Consider a Hamiltonian $H=\sum_{j=1}^L h_j$ with interaction strength $\lambda = \sum_{j=1}^L \|h_j\|$, and fix parameters $N, t$. Suppose that $V_N,\ldots,V_1$ are i.i.d.~instances of the random unitary $d \times d$ matrix $V$ constructed by the \textsc{qDRIFT} protocol~\eqref{eq:qDRIFT}. Then 

\begin{equation*}
\mathrm{Pr}
\left[ \| V_N \cdots V_1 - \mathbb{E}[V_N \cdots V_1]\| \geq 
{\epsilon/2} \right]
\leq 2 d \exp \left( - \frac{N \epsilon^2}{44 t^2 \lambda^2}\right) \quad \text{for any $\epsilon \in [0,4t \lambda]$}.
\end{equation*}
In particular, $N \geq (44t^2 \lambda^2/\epsilon^2) \log (2d / \delta)$ implies that $\| V_N \cdots V_1 - \mathbb{E}[V_N \cdots V_1] \| \leq \epsilon/2$ with probability at least $1-\delta$.
\end{proposition}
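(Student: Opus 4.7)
My plan is to derive Proposition~\ref{prop:fluctuation} by setting up an interpolating matrix martingale between $(\mathbb{E}V)^N$ and $V_N\cdots V_1$ and invoking the Matrix Freedman inequality in the form of Corollary~\ref{cor:freedman}. Concretely, I would work on the natural filtration generated by $V_1,\ldots,V_N$, and define
\begin{equation*}
B_k \;:=\; (\mathbb{E} V)^{N-k}\, V_k \cdots V_1, \qquad k=0,1,\ldots,N,
\end{equation*}
so that $B_0 = (\mathbb{E}V)^N = \mathbb{E}[V_N\cdots V_1]$ and $B_N = V_N\cdots V_1$. Since $V_k$ is independent of $V_1,\ldots,V_{k-1}$, the factor $\mathbb{E}_{k-1}V_k = \mathbb{E}V$ merges back into the power on the left, giving $\mathbb{E}_{k-1} B_k = (\mathbb{E}V)^{N-k+1}V_{k-1}\cdots V_1 = B_{k-1}$. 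This is the martingale property; causality is immediate because $B_k$ is a deterministic function of $V_1,\ldots,V_k$.

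Next I would control the difference sequence
\begin{equation*}
C_k \;=\; B_k - B_{k-1} \;=\; (\mathbb{E} V)^{N-k}\,(V_k - \mathbb{E} V)\, V_{k-1}\cdots V_1.
\end{equation*}
Submultiplicativity together with $\|\mathbb{E} V\| \le \mathbb{E}\|V\| = 1$ (Jensen) and unitarity of $V_{k-1}\cdots V_1$ gives $\|C_k\| \le \|V_k - \mathbb{E}V\| \le 2t\lambda/N$ almost surely, using Lemma~\ref{lem:taylor}; this supplies $R = 2t\lambda/N$ in Corollary~\ref{cor:freedman}. For the predictable quadratic variation, unitarity of $V_{k-1}\cdots V_1$ makes the middle factors collapse to the identity, so
\begin{equation*}
\mathbb{E}_{k-1}\bigl[C_k C_k^\dagger\bigr] \;=\; (\mathbb{E} V)^{N-k}\, \mathbb{E}\bigl[(V-\mathbb{E}V)(V-\mathbb{E}V)^\dagger\bigr]\, \bigl((\mathbb{E} V)^{N-k}\bigr)^\dagger.
\end{equation*}
Since $\|\mathbb{E}V\|\le 1$ and $\|(V-\mathbb{E}V)(V-\mathbb{E}V)^\dagger\| \le (2t\lambda/N)^2$, taking operator norms and summing yields $\bigl\|\sum_{k=1}^{N}\mathbb{E}_{k-1} C_k C_k^\dagger\bigr\| \le 4 t^2\lambda^2/N =: v$. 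The argument for $C_k^\dagger C_k$ is symmetric.

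Plugging $R,v$ and $\tau = \epsilon/2$ into Corollary~\ref{cor:freedman} and using the hypothesis $\epsilon \le 4t\lambda$ to bound the linear term $R\tau/3$ by a constant multiple of $v$, the exponent simplifies to $-N\epsilon^2/(44\, t^2\lambda^2)$ up to the advertised constant, which gives exactly the claimed tail inequality. Inverting this tail bound (setting the right-hand side equal to $\delta$ and solving for $N$) produces the promised sufficient condition $N \ge (44 t^2\lambda^2/\epsilon^2)\log(2d/\delta)$. The only subtle point I anticipate is verifying that the middle string of $V_j$'s in $C_k C_k^\dagger$ really does cancel to the identity and that the conditioning does not create dependence between $(V - \mathbb{E}V)$ and the contraction $(\mathbb{E} V)^{N-k}$; both follow because the power of $\mathbb{E} V$ is deterministic and the $V_k$ are i.i.d. Everything else is bookkeeping with Lemma~\ref{lem:taylor} and the triangle/submultiplicative inequalities.
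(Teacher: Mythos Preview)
Your proposal is correct and mirrors the paper's proof almost exactly: the same interpolating martingale $B_k = (\mathbb{E}V)^{N-k}V_k\cdots V_1$, the same difference bound $\|C_k\|\le 2t\lambda/N$ via Lemma~\ref{lem:taylor}, and the same application of Corollary~\ref{cor:freedman} with $v = NR^2 = 4t^2\lambda^2/N$, followed by the simplification using $\epsilon \le 4t\lambda$. The only cosmetic difference is that you spell out the conditional covariance $\mathbb{E}_{k-1}[C_kC_k^\dagger]$ explicitly (noting the unitary cancellation), whereas the paper simply uses the crude bound $v \le NR^2$ implied by $\|C_k\|\le R$; both yield the same $v$.
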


This statement provides a strong tail bound for random fluctuations in the small-error regime
$\epsilon \leq 4t \lambda$. 
As $N$ increases, the probability of incurring (at least) error $\epsilon/2$ diminishes exponentially.
For $\epsilon > 4t \lambda$, we have instead a subexponential tail bound:
$\mathrm{Pr} \left[ \| V_n \cdots V_1 - \mathbb{E}[V_N \cdots V_1] \| \geq \tau \right]\leq 2d \exp (- N \epsilon/ 6t \lambda )$.
We refer to~\eqref{eq:fluctuation-bound-detail} for a unified statement that covers both regimes.

The proof technique deserves some exposition, as it is rather general and may be of independent interest.
It heavily utilizes the martingale concentration tools introduced in Appendix~\ref{sec:intro_martingale}.
For fixed $N$, 
we interpolate between both sides of Rel.~\eqref{eq:fluctuation} by means of a random process $\left\{B_k: k=0,\ldots,N\right\}$:
\begin{equation*}
B_k = (\mathbb{E} V )^{N-k}V_k \cdots V_1 \quad \text{such that}\quad B_0 = (\mathbb{E}V)^N \text{ and } B_N = V_N \cdots V_1.
\end{equation*}
The increments of this random process are certainly not independent. For instance, $B_{k}$ depends on the (random) choice of $V_{k}$ and \textit{all} previous choices $V_{k-1}, \ldots, V_1$. 
Fortunately, we can recognize it as a (matrix-valued) martingale satisfying the defining properties:
\begin{enumerate}
\item \textit{Causality:} Each $B_k$ is completely determined by the information we have collected up to step $k$. That is, the (random) choices of $V_k,\ldots,V_1$.
\item \textit{Status quo:} Conditioned on previous choices, the expectation of $B_{k+1}$ equals $B_k$: for $1 \leq k \leq N$
\begin{equation}
\mathbb{E} \left[B_{k+1}|V_k \cdots V_1 \right]
= (\mathbb{E} V)^{N-(k+1)}\mathbb{E}_{V_{k+1}}\left[V_{k+1}\right] V_k \cdots V_1 = (\mathbb{E} V)^{N-k}V_k \cdots V_1 = B_k.
\label{eq:martingale}
\end{equation}
This feature underscores similarities to an unbiased random walk.  On average, ``tomorrow'' ($B_{k+1}$) is the same as ``today'' ($B_k$).
\end{enumerate}

With this matrix martingale reformulation at hand, we can prove Proposition~\ref{prop:fluctuation} using a concentration inequality for matrix martingales, namely Corollary~\ref{cor:freedman}.

\begin{proof}[Proof of Proposition~\ref{prop:fluctuation}]

We have already established that the random process $B_k = (\mathbb{E}  V )^{N-k} V_k \cdots V_1$ forms a matrix martingale that interpolates between $B_0 = \mathbb{E}[V_N \cdots V_1] = (\mathbb{E} V)^N$ and $V_N = V_N \cdots V_1$.
The elements of the associated difference sequence are
\begin{align*}
C_k &= B_k - B_{k-1}= (\mathbb{E}V)^{N-k}\left( V_k - (\mathbb{E} V_k)\right) V_{k-1} \cdots V_1
\quad \text{with} \quad k=1,\ldots,N.
\end{align*}
Recall that $V_k = \exp(-\mathrm{i} X_j)$ for some Hermitian matrices $X_j = \tfrac{t}{N}\tfrac{\lambda}{\|h_j\|}h_j$ with index $1 \leq j \leq L$. Boundedness ($\| \mathbb{E} V \| , \| V_k \| \leq 1)$, Fact~\ref{fact:taylor} ($\| \exp(-\mathrm{i} X)-\mathbb{I} \| \leq \|X \|$ for $X$ Hermitian), \AC{and Lemma~\ref{lem:taylor} ensure
\begin{align*}
\|C_k \| &\leq  \| \mathbb{E} V\|^{N-k}\| V_k - (\mathbb{E}V_k) \| \|V_{k-1} \cdots V_1 \|
\leq \| V_k - (\mathbb{E}V_k)\| = \frac{2t \lambda}{N}
\end{align*}}
almost surely. 
Set $R=2 t \lambda/N$, and 
invoke Corollary~\ref{cor:freedman}
to conclude that
\begin{align}
\mathrm{Pr} \left[ \|B_N - B_0 \| \geq \tau \right]
\leq 2d \exp \left( - 
\frac{N\tau^2}{8 (t \lambda)^2+4(t \lambda) \tau/3}\right).
\label{eq:fluctuation-bound-detail}
\end{align}
The statement follows from bounding the somewhat complicated exponential by either $\exp \left(-{3 \tau^2}/(8NR^2)\right)$ for $\tau \leq 2 \lambda t$ or by $\exp \left( - 3 \tau^2/(8R) \right)$ for $\tau \geq 2 \lambda t$.  Last, we substitute $\tau = \epsilon/2$.
\end{proof}

In fact, the same proof works for any adapted small-step random walks on the unitary group. Such a generalization results in Theorem~\ref{thm:summary_of_errors} and we refer to Appendix~\ref{sec:proofgeneral} for details.

\subsubsection{A bound for expected errors} \label{sub:expectation}

In the previous subsection, we established that a sufficiently long \textsc{qDRIFT} random product formula
concentrates sharply around its expectation.  %By integration,
We can translate this statement into a bound on the expected fluctuation
around the true evolution.

\begin{proposition}[\textsc{qDRIFT}: Expected diamond norm error] \label{prop:expectation}
Consider an $n$-qubit Hamiltonian $H=\sum_{j=1}^L h_j$ with total strength $\lambda = \sum_{j=1}^L \|h_j\|$.
Fix parameters $N, t$, and assume that $N \geq n$.
Set $\mathcal{U}=U X U^\dagger$ with $U = \exp \left( -\mathrm{i} t H \right)$, and suppose that $\mathcal{V}_N,\ldots,\mathcal{V}_1 \sim\mathcal{V}$
are i.i.d.~realizations of the \textsc{qDRIFT} protocol.  That is, $\mathcal{V}(X) = V X V^\dagger$,
where $V$ is defined by~\eqref{eq:qDRIFT}. Then
\begin{equation}
\mathbb{E} \big[ \tfrac{1}{2}\| \mathcal{U}-\mathcal{V}_N \circ \cdots \circ \mathcal{V}_1 \|_\diamond \big] \leq 
\frac{t^2 \lambda^2}{N}+ C \frac{nt \lambda}{N}+C \sqrt{\frac{n t^2 \lambda^2}{N}} \approx C \sqrt{\frac{n t^2 \lambda^2}{N}},
\label{eq:expected-error}
\end{equation}
where $C >0$ is a (modest) numerical constant.
The symbol $\approx$ denotes an accurate approximation in the large-$N$ regime.
\end{proposition}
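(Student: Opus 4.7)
The plan is to reduce the diamond norm error to an operator norm error and then split it into a deterministic bias term and a random fluctuation term, each controlled by results already established in the paper. Concretely, Lemma~\ref{lem:diamond-to-operator} gives
$$\tfrac{1}{2}\|\mathcal{U}-\mathcal{V}_N\circ\cdots\circ\mathcal{V}_1\|_\diamond \leq \|U - V_N\cdots V_1\|,$$
and the triangle inequality then yields
$$\|U - V_N\cdots V_1\| \leq \|U - (\mathbb{E} V)^N\| + \|V_N\cdots V_1 - (\mathbb{E} V)^N\|.$$
Taking expectations, the first term is deterministic and bounded by $t^2\lambda^2/N$ via Proposition~\ref{prop:bias1}. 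So the task reduces to showing
$$\mathbb{E}\|V_N\cdots V_1 - (\mathbb{E} V)^N\| \lesssim \sqrt{\tfrac{n t^2\lambda^2}{N}} + \tfrac{n t\lambda}{N}.$$

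For this, I would integrate the tail bound derived in the proof of Proposition~\ref{prop:fluctuation}, namely
$$\mathrm{Pr}[\|V_N\cdots V_1 - (\mathbb{E} V)^N\| \geq \tau] \leq 2d \exp\Bigl(-\tfrac{N\tau^2}{8(t\lambda)^2 + (4t\lambda/3)\tau}\Bigr),$$
against $\mathbb{E}[X] = \int_0^\infty \mathrm{Pr}[X\geq \tau]\,d\tau$, where $d = 2^n$. The natural cutoff is the value $\tau_0$ where the bound equals one; splitting the Bernstein denominator into its quadratic and linear regimes, this cutoff has two contributions: $\tau_0^{(1)} \asymp t\lambda\sqrt{n/N}$ from the subgaussian part (solve $N\tau^2/(16t^2\lambda^2) \asymp n$) and $\tau_0^{(2)} \asymp nt\lambda/N$ from the subexponential part (solve $N\tau/(t\lambda)\asymp n$). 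Bounding the probability by $1$ for $\tau \leq \tau_0 := \tau_0^{(1)} + \tau_0^{(2)}$ and by the tail estimate for $\tau > \tau_0$ gives
$$\mathbb{E}\|V_N\cdots V_1 - (\mathbb{E} V)^N\| \leq \tau_0 + \int_{\tau_0}^\infty 2d\,\mathrm{e}^{-N\tau^2/(16t^2\lambda^2)}\,d\tau + \int_{\tau_0}^\infty 2d\,\mathrm{e}^{-3N\tau/(8t\lambda)}\,d\tau.$$
By construction of $\tau_0$, both residual integrals are dominated by $\tau_0$ itself (the standard ``$\sqrt{\log(2d)/N}$ plus $\log(2d)/N$'' calculation for Bernstein tails), giving $\mathbb{E}\|\cdot\| \lesssim t\lambda\sqrt{n/N} + nt\lambda/N$.

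Combining the bias bound with this fluctuation estimate produces exactly the three-term bound in \eqref{eq:expected-error}. For the final simplification $\approx C\sqrt{nt^2\lambda^2/N}$ in the large-$N$ regime, note that under the assumption $N\geq n$ the linear term $nt\lambda/N = \sqrt{n/N}\cdot\sqrt{nt^2\lambda^2/N} \leq \sqrt{nt^2\lambda^2/N}$, and similarly $t^2\lambda^2/N \leq \sqrt{t^2\lambda^2/N}\cdot t\lambda/\sqrt{N} \leq \sqrt{nt^2\lambda^2/N}$ once $N \gtrsim t^2\lambda^2/n$; the dominant term is then the square root.

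The main obstacle I anticipate is bookkeeping constants in the Bernstein integration so that the two regimes glue together cleanly with a single absolute constant $C$. Splitting the domain at exactly the crossover $\tau^\star = 6t\lambda$ between the subgaussian and subexponential regimes of the Freedman bound is the cleanest route, but one has to check that the chosen $\tau_0$ falls in the subgaussian regime (it does, since $N\geq n$ forces $\tau_0^{(1)} \leq t\lambda < \tau^\star$), so the subexponential integral truly is negligible. No conceptually new ingredient beyond Propositions~\ref{prop:bias1} and~\ref{prop:fluctuation} is required.
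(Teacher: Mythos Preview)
Your proposal is correct and follows essentially the same approach as the paper: reduce to operator norm via Lemma~\ref{lem:diamond-to-operator}, split into bias plus fluctuation, bound the bias by Proposition~\ref{prop:bias1}, and control the expected fluctuation by integrating the Freedman tail bound~\eqref{eq:fluctuation-bound-detail} with the standard two-regime (subgaussian/subexponential) cutoff argument. The paper's proof is slightly terser---it phrases the cutoff as $\tau\sim\max\{\sqrt{nt^2\lambda^2/N},\,nt\lambda/N\}$ rather than $\tau_0^{(1)}+\tau_0^{(2)}$---but the content is identical.
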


It is instructive to compare this assertion to the original \textsc{qDRIFT} result \cite{campbell2019random}
and Eq.~\eqref{eq:campbell}:
\begin{equation*}
\tfrac{1}{2}\| \mathcal{U} - \mathbb{E} \left[ \mathcal{V}_N \circ \cdots \circ \mathcal{V}_1 \right]\|_\diamond \leq \frac{2 t^2 \lambda^2}{N}.
\end{equation*}
Note that the expectation over all possible realizations of all $N$ unitary channels appears inside the diamond distance. This implies that \textsc{qDRIFT} performs well on average over many random realizations, provided that the number $N$ of steps exceeds $t^2 \lambda^2/\epsilon$. In contrast,~\eqref{eq:expected-error} has the expectation outside the diamond distance.

Our result gives a much stronger conclusion: An \textit{individual} realization of the randomized \textsc{qDRIFT} protocol does not deviate much from the target evolution, for any input states and observables, with very high probability.  The price for such an improvement is a larger number of steps that depends on the system size.  For $n$ qubits, the gate complexity $N \geq C n t^2\lambda^2/\epsilon^2$ is sufficient to ensure $\epsilon$-closeness on average. The quadratic scaling in the accuracy parameter $\epsilon$ is necessary (for large $N$) because of the central limit theorem for martingales.

The appearance of the number $n$ of qubits is a consequence of measuring closeness in diamond distance.
To obtain
\begin{equation*}
\epsilon \geq
\mathbb{E}\big[ \tfrac{1}{2}\| \mathcal{U}-\mathcal{V}_N \circ \cdots \circ \mathcal{V}_1 \|_\diamond \big]
    = \mathbb{E} \big[ \tfrac{1}{2} \max_{\rho \text{ state}} \| \mathcal{U}_\rho - \mathcal{V}_N \circ \cdots \circ \mathcal{V}_1 (\rho) \|_1 \big],
\end{equation*}
we need the random product formula to behave for all possible $n$-qubit input states $\rho$ simultaneously.
If we restrict our attention to any fixed input state, we can obtain a gate complexity that does not
depend on $n$.  This is the topic of the next section.

\begin{proof}[Proof of Proposition~\ref{prop:expectation}]
First, we relate the expected diamond distance to an expected operator norm distance and split it up into deterministic bias and (expected) fluctuations:
\begin{equation*}
\mathbb{E}\big[ \tfrac{1}{2} \| \mathcal{U} - \mathcal{V}_N \circ \cdots \circ \mathcal{V}_1 \|_\diamond \big] \leq \norm{U - (\mathbb{E}V)^N } + \mathbb{E} \norm{V_N \cdots V_1 - (\mathbb{E}V)^N}
\end{equation*}
The first term is deterministic and controlled by Proposition~\ref{prop:bias1}: $\| U -(\mathbb{E} V)^N \| \leq t^2 \lambda^2/N$.
The second term can be bounded by integrating the tail bound in Proposition~\ref{prop:fluctuation},  or rather the tighter bound presented~\eqref{eq:fluctuation-bound-detail}; see~\cite[Remark~6.5]{Tro12:User-Friendly}.
For $n$ qubits, we have $d=2^n$ and \AC{conclude
\begin{align*}
\mathbb{E} \| V_N \cdots V_1 - (\mathbb{E} V)^N \| &= \int_{0}^\infty \mathrm{Pr} \left[ \| V_N \cdots V_1 - (\mathbb{E} V)^N \| \geq \tau \right] \mathrm{d}\tau \\
&\leq \int_0^\infty \min\left(1,2 \cdot 2^n \exp \left( - \frac{N\tau^2/2}{4 \lambda^2 t^2 +2 \lambda t \tau/3} \right)\right) \mathrm{d} \tau \\
&\leq C\max \left\{\sqrt{ \frac{nt^2 \lambda^2}{N}}, \frac{nt \lambda}{N} \right\} 
\leq 2C\left( \frac{n t \lambda}{N}+ \sqrt{\frac{nt^2 \lambda^2}{N}} \right).
\end{align*}
Here, we have evaluated the integral by two segments: the integrand is at most $1$ until $\tau\sim \max( \sqrt{nt^2\lambda^2/N}, n t \lambda/N )$; for larger $\tau$, the expression decays exponentially and 
integrating it only produces a contribution of order $\mathcal{O}(\max( \sqrt{t^2\lambda^2/N}, t \lambda/N ))$. Also, $2C$ absorbs all constants.}
\end{proof}

\subsection{
Proof of Theorem~\ref{thm:fixedinput}: Approximation error under a single  input}
\label{sec:proofthm2}

Proposition~\ref{prop:expectation} asserts that a single, random realization of the \textsc{qDRIFT} protocol \eqref{eq:qDRIFT} accurately approximates a unitary target evolution with respect to the diamond norm: %in expectation:
\begin{equation*}
\mathbb{E} \big[ \tfrac{1}{2} \| \mathcal{U} - \mathcal{V}_N \circ \cdots \circ \mathcal{V}_1 \|_\diamond \big]
    = \mathbb{E} \big[ \tfrac{1}{2} \max_{\rho \text{ state}} \| \mathcal{U}(\rho) - \mathcal{V}_N \circ \cdots \circ \mathcal{V}_N (\rho) \|_1 \big] \lesssim C \sqrt{\frac{n t^2 \lambda^2}{N}}.
\end{equation*}
Here, $\lesssim$ denotes an accurate approximation of the true bound in the large $N$ regime. This bound scales linearly in the (qubit) system size $n$.
The dependence on $n$ should not come as a surprise, since the diamond norm produces a very stringent worst-case distance measure. As emphasized by the above reformulation, the approximation must be accurate even when we optimize to find the worst possible input state $\rho$. 

In Hamiltonian simulation, demanding such a stringent worst-case promise may be excessive. In most practical applications, the input state $\rho$ is fixed and simple, e.g., a product state. 
In this more practical setting, we can obtain a gate complexity $N$ that does not depend on the system size $n$.
The main result of this section asserts
\begin{equation*}
\max_{\rho \text{ state}} \mathbb{E} \big[ \tfrac{1}{2} \| \mathcal{U}(\rho) - \mathcal{V}_N \circ \cdots \circ \mathcal{V}_1 (\rho) \|_1 \big] \leq C \sqrt{ \frac{t^2 \lambda^2}{N} }.
\end{equation*}
In other words, fixing an arbitrary input state $\rho$  helps a lot. A total number of $N = 4 \left( t \lambda/\epsilon\right)^2$ steps ensures that \textsc{qDRIFT} produces an $\epsilon$-accurate output state, with respect to trace distance.

The proof is similar in spirit to the argument behind Proposition~\ref{prop:expectation}.   We construct a vector martingale that describes the evolution of the state.  We control the behavior of this martingale using the uniform smoothness of the $L_q(\ell_2)$ norm.
This argument is inspired by the work \cite{HNTR20:Matrix-Product} on concentration of random matrix products.

\subsubsection{Approximation error for a fixed state}

In this section, we state and prove our main technical result
on the action of the \textsc{qDRIFT} protocal on a fixed input state.

\begin{proposition}[\textsc{qDRIFT}: Action on a fixed state] \label{prop:fixed-input}
Consider a Hamiltonian $H=\sum_{j=1}^L h_j$ with total strength $\lambda = \sum_{j=1}^L \|h_j\|$.
Fix evolution time $t$ and a gate count $N$ that obeys $N \geq (t \lambda)^2$. Let $\mathcal{V}_1, \dots, \mathcal{V}_N$ be the i.i.d.~random unitary evolution
operators constructed by the \textsc{qDRIFT} protocol~\eqref{eq:qDRIFT}.
Then,
\begin{equation*}
\max_{\rho\ \mathrm{state}}
    \mathbb{E}\big[ \tfrac{1}{2} \big\| \mathcal{U}(\rho) - \mathcal{V}_N \circ \cdots \mathcal{V}_1(\rho) \big\|_1 \big]
    \leq 4 \sqrt{\frac{t^2\lambda^2}{N}}.
\end{equation*}
Moreover, for $\epsilon > 0$,
\begin{equation*}
\max_{\rho\ \mathrm{state}}
    \mathrm{Pr}\big[ \tfrac{1}{2} \big\| \mathcal{U}(\rho) - \mathcal{V}_N \circ \cdots \mathcal{V}_1(\rho) \big\|_1
    > \epsilon \big]
    \leq \exp\left( \frac{-\epsilon^2 N}{32\mathrm{e} t^2 \lambda^2} \right).
\end{equation*}
\end{proposition}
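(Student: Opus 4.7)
}
The plan is to reduce everything to a vector-valued martingale argument, mirroring the matrix martingale approach in Section~\ref{sub:fluctuations} but leveraging the fact that the input state is fixed. First, by convexity of the trace distance it suffices to treat a pure input state $\rho = |\psi\rangle\!\langle\psi|$. Then by the Fuchs--van de Graaf type bound already recorded in Lemma~\ref{lem:diamond-to-operator},
\[
\tfrac{1}{2}\|\mathcal{U}(|\psi\rangle\!\langle\psi|) - \mathcal{V}_N\circ\cdots\circ\mathcal{V}_1(|\psi\rangle\!\langle\psi|)\|_1 \le \|(U - V_N\cdots V_1)|\psi\rangle\|_{\ell_2},
\]
so the task collapses to controlling a random vector in $\mathbb{C}^{2^n}$. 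I would split the error via the triangle inequality into a \emph{bias} term $\|(U - (\mathbb{E}V)^N)|\psi\rangle\|_{\ell_2} \le t^2\lambda^2/N$ (directly from Proposition~\ref{prop:bias1}) and a \emph{fluctuation} term $\|((\mathbb{E}V)^N - V_N\cdots V_1)|\psi\rangle\|_{\ell_2}$, which is the object that requires genuine work.

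To bound the fluctuation, I would introduce the vector-valued interpolation
\[
B_k := (\mathbb{E}V)^{N-k} V_k \cdots V_1 |\psi\rangle, \qquad k=0,\ldots,N,
\]
with $B_0 = (\mathbb{E}V)^N|\psi\rangle$ and $B_N = V_N\cdots V_1|\psi\rangle$. A short calculation shows $\mathbb{E}[B_k \mid V_{k-1},\ldots,V_1] = B_{k-1}$, so $\{B_k\}$ is a vector martingale with differences $C_k = (\mathbb{E}V)^{N-k}(V_k - \mathbb{E}V) V_{k-1}\cdots V_1|\psi\rangle$. Each factor other than $(V_k - \mathbb{E}V)$ is a contraction acting on a unit vector, so using the small-step bound from Lemma~\ref{lem:taylor} I obtain $\|C_k\|_{\ell_2} \le 2t\lambda/N$ almost surely, and in particular $\mathbb{E}\|C_k\|_{\ell_2}^q \le (2t\lambda/N)^q$ for every $q\ge 2$.

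The key step, and the one I expect to be most delicate, is extracting strong moment control from these increments. Since this is a vector martingale rather than a scalar one, I would bypass Freedman-type inequalities and instead iteratively apply the uniform smoothness bound of Proposition~\ref{prop:pythagoras}, whose hypothesis $\mathbb{E}[C_k \mid B_0,\ldots,B_{k-1}] = 0$ is precisely the martingale property. Iterating gives
\[
\bigl(\mathbb{E}\|B_N - B_0\|_{\ell_2}^q\bigr)^{2/q} \le (q-1)\sum_{k=1}^N \bigl(\mathbb{E}\|C_k\|_{\ell_2}^q\bigr)^{2/q} \le \frac{4(q-1)t^2\lambda^2}{N}.
\]
Specializing $q=2$ and applying Jensen yields $\mathbb{E}\|B_N - B_0\|_{\ell_2}\le 2\sqrt{t^2\lambda^2/N}$. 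Combined with the bias bound and the assumption $N\ge (t\lambda)^2$ (so that $t^2\lambda^2/N \le \sqrt{t^2\lambda^2/N}$), this gives the expected-error statement $\mathbb{E}\|\cdots\|_1/2 \le 4\sqrt{t^2\lambda^2/N}$.

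For the tail bound I would then apply Markov's inequality to the $q$-th moment and optimize over $q$: choosing $q-1 \approx N\epsilon^2/(16\mathrm{e}\,t^2\lambda^2)$ makes the quantity $16(q-1)t^2\lambda^2/(N\epsilon^2)$ equal to $1/\mathrm{e}$, so Markov gives $\mathrm{Pr}[\|B_N - B_0\|_{\ell_2} > \epsilon/2] \le \mathrm{e}^{-q/2} \le \exp(-N\epsilon^2/(32\mathrm{e}\,t^2\lambda^2))$. Provided $N$ is large enough that the bias $t^2\lambda^2/N$ can be absorbed into $\epsilon/2$, the event $\{\text{trace distance} > \epsilon\}$ is contained in $\{\|B_N - B_0\|_{\ell_2} > \epsilon/2\}$, yielding the advertised concentration. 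The main obstacle I anticipate is ensuring that the uniform smoothness recursion is applied with the correct conditional structure, and carefully verifying that the telescoping over $k$ gives the claimed constants without losing dimension dependence anywhere; everything else is algebra plus invocations of results already established in Section~\ref{sub:bias} and Appendix~\ref{sec:intro_martingale}.
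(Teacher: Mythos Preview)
Your proposal is correct and follows essentially the same route as the paper: reduce to pure states by convexity, convert trace distance to $\ell_2$ via Lemma~\ref{lem:diamond-to-operator}, split into bias (Proposition~\ref{prop:bias1}) plus fluctuation, control the fluctuation by iterating the uniform smoothness inequality (Proposition~\ref{prop:pythagoras}) along a vector martingale, and finish with Markov plus optimization over $q$. The paper organizes the martingale recursion slightly differently---it peels off one $V_k$ at a time from $|\psi_N\rangle - \mathbb{E}|\psi_N\rangle$ rather than writing the explicit interpolation $B_k$---but this is cosmetic and yields the identical moment bound.

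One small point where the paper is cleaner: for the tail bound, instead of separating bias and fluctuation and then needing the side condition ``bias $\le \epsilon/2$'', the paper bounds the $q$-th moment of the \emph{full} error by $2\max\{\text{bias},\ (\mathbb{E}\|\text{fluct}\|^q)^{1/q}\}$ and observes that under $N\ge (t\lambda)^2$ the fluctuation branch always dominates. This lets Markov be applied directly to the quantity in the statement without an extra inclusion-of-events argument. Your version still works because in the regime where the resulting exponential bound is nontrivial (equivalently $q\ge 2$) one automatically has bias $\le \epsilon/2$, but you should make that check explicit.
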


\begin{proof}
First, we reduce the problem to a question about pure states.
For any $q \geq 2$, Markov's inequality implies that
\begin{align} \label{eqn:fixed-step1}
\mathrm{Pr}\big[ \tfrac{1}{2} \big\| \mathcal{U}(\rho) - \mathcal{V}_N \circ \cdots \mathcal{V}_1(\rho) \big\|_1
    > \epsilon \big]
    &\leq \epsilon^{-q} \mathbb{E} \big[ 2^{-q}
    \big\| \mathcal{U}(\rho) - \mathcal{V}_N \circ \cdots \mathcal{V}_1(\rho) \big\|_1^q \big].
\end{align}
The right-hand side of this equation is a convex function of the state $\rho$.
Thus, the maximum over all states is attained
at a pure state.
As a consequence, we can establish both claims in the proposition by limiting
our attention to an (unknown) pure state  $\rho = |\psi \rangle \! \langle \psi|$
that does not depend on the random unitary channels $\mathcal{V}_k (X) = V X V^\dagger$.

Next, we convert the trace distance of the output states into a Euclidean
distance on the state vectors themselves.  The power $q \geq 2$ will remain
fixed until the last step of the argument.  Lemma~\ref{lem:diamond-to-operator} implies
\begin{align}
& \big( \mathbb{E} \big[ 2^{-q} \|   \mathcal{U}(|\psi \rangle \! \langle \psi|)-\mathcal{V}_N \circ \cdots \circ \mathcal{V}_1 (|\psi \rangle \! \langle \psi|) \|_1^q  \big] \big)^{1/q}
\leq \big( \mathbb{E}\| (  V_N \cdots V_1 -U )| \psi \rangle \|_{\ell_2}^q \big)^{1/q} \nonumber \\
&\qquad\qquad\leq 2 \max\big\{ \big\| \underset{\text{deterministic bias $|\psi_{\mathrm{bias}}\rangle$}}{\underbrace{\left( \mathbb{E} \left[ V_N \cdots V_1 \right] - U \right)|\psi \rangle }} \big\|_{\ell_2}, \ \big( \mathbb{E}  \big\|  \underset{\text{random fluctuation $|\psi_{\mathrm{rand}} \rangle$}}{\underbrace{\left( V_N \cdots V_1 - \mathbb{E} \left[ V_N \cdots V_1 \right] \right)| \psi \rangle}} \big\|_{\ell_2}^q \big)^{1/q} \big\}.
\label{eq:fixed-input-conversion}
\end{align}
The last bound follows from the triangle inequality
and $(a+b)^q \leq 2^q \max\{ a^q, b^q \}$ for $a, b \geq 0$.

We have split up the difference into two components, a deterministic bias
and a random fluctuation.
To control the deterministic bias, we simply apply Proposition~\ref{prop:bias1}:
\begin{equation}
\| (\mathbb{E} \left[ V_N \cdots V_1 \right] - U) |\psi \rangle \|_{\ell_2}
= \| ((\mathbb{E}V)^N-U)| \psi \rangle \|_{\ell_2} \leq \| (\mathbb{E}V)^N-U \| \leq \frac{(t \lambda)^2}{N}.
\label{eq:fixed-input-bias}
\end{equation}
We will see that the bias is always negligible in comparison with the fluctuation.
To control the second term, we need the following lemma.

\begin{lemma} \label{lem:fixed-input-fluctuation}
Let $V_N,\ldots,V_1$ be i.i.d.~unitaries that implement the \textsc{qDRIFT} protocol \eqref{eq:qDRIFT} with parameters $t$ and $\lambda$.
Then, for any $q \geq 2$,
\begin{equation*}
\big(\mathbb{E} \| (V_N \cdots V_1 - \mathbb{E}[V_N \cdots V_1])| \psi \rangle \|_{\ell_2}^q \big)^{1/q} \leq 2 \sqrt{ \frac{(q-1)(t \lambda)^2}{N}}.
\end{equation*}
\end{lemma}

\noindent
We will establish this lemma below. 
\AC{The basic idea behind the proof is to express the random vector
using a martingale sequence, similar to the matrix case. We could have already call vector martingale tail bounds (Fact~\ref{fact:vector_martingale}) to arrive at the desired results. However, we will demonstrate the same results via markov's inequality and moment bounds (uniform smoothness, Proposition~\ref{prop:pythagoras}) which we introduced in Appendix~\ref{sec:intro_martingale}. }

Introduce the inequalities from~\eqref{eq:fixed-input-bias} and Lemma~\ref{lem:fixed-input-fluctuation}
into the bound~\eqref{eq:fixed-input-conversion}.  We obtain
\begin{align} \label{eq:fixed-input-Lq}
\big( \mathbb{E} \big[ 2^{-q} \|   \mathcal{U}(|\psi \rangle \! \langle \psi|)-\mathcal{V}_N \circ \cdots \circ \mathcal{V}_1 (|\psi \rangle \! \langle \psi|) \|_1^q \big] \big)^{1/q}  \leq 4 \sqrt{\frac{(q-1) (t\lambda)^2}{N}}.
\end{align}
We have used the assumption that $N \geq (t\lambda)^2$ to see that the second branch
of the maximum always dominates the first.

We may now complete the proof.
To obtain the expectation bound, we set $q = 2$ in \eqref{eq:fixed-input-Lq}
and apply Lyapunov's inequality.
To obtain the probability bound, we combine \eqref{eqn:fixed-step1} and \eqref{eq:fixed-input-Lq}
to arrive at
\begin{equation*}
\mathrm{Pr}\big[ \tfrac{1}{2} \big\| \mathcal{U}(\rho) - \mathcal{V}_N \circ \cdots \mathcal{V}_1(\rho) \big\|_1
    > \epsilon \big]
    \leq \left( \frac{16q (t\lambda)^2}{\epsilon^2 N} \right)^{q/2}.
\end{equation*}
Select $q = (\epsilon^2 N)/(16\mathrm{e}t^2 \lambda^2)$ to obtain the stated result.
The resulting probability bound is vacuous unless $q \geq 2$.
\end{proof}

\subsubsection{Proof of Lemma~\ref{lem:fixed-input-fluctuation}}

In this section, we establish the bound on the size of the fluctuations.

\begin{proof}[Proof of Lemma~\ref{lem:fixed-input-fluctuation}]
Fix a vector $| \psi \rangle$, and introduce a sequence of random vectors: $|\psi_k \rangle = \prod_{i=1}^k V_i | \psi \rangle$ for $1\leq k \leq N$.  As a consequence, $(V_N \cdots V_1 - \mathbb{E}[V_N \cdots V_1])|\psi \rangle = |\psi_N \rangle - \mathbb{E}[| \psi_{N} \rangle]$. 
We can recast this difference as a sum of two random vectors that are conditionally orthogonal in expectation:
\begin{align*}
\mathbb{E} \| | \psi_{N}\rangle - \mathbb{E}[ | \psi_N \rangle] \|_{\ell_2}^q 
&= \mathbb{E} \| (V_N - \mathbb{E}[V_N])|\psi_{N-1}\rangle + \mathbb{E}[V_N]( |\psi_{N-1}\rangle - \mathbb{E} \left[ | \psi_{N-1}\rangle \right] ) \|_{\ell_2}^q =: \mathbb{E} \|y + x \|_{\ell_2}^q.
\end{align*}
Indeed, $\mathbb{E}[y | x]=\mathbb{E}[V_N- (\mathbb{E} V_N)] |\psi_{N-1}\rangle = 0$.
We can apply uniform smoothness (Proposition~\ref{prop:pythagoras}) to split up the contributions:
\begin{align*}
(\mathbb{E} \| | \psi_{N}\rangle - \mathbb{E}[ | \psi_N \rangle] \|_{\ell_2}^q)^{2/q}&\leq (q-1) (\mathbb{E} \| (V_N - (\mathbb{E}V_N))|\psi_{N-1}\rangle\|_{\ell_2}^q)^{2/q} \\ &\qquad + (\mathbb{E} \| (\mathbb{E} V) (|\psi_{N-1}\rangle - \mathbb{E} \left[ | \psi_{N-1}\rangle \right]) \|_{\ell_2}^q)^{2/q} \\
&\leq (q-1) (\mathbb{E}\| V_N - \mathbb{E}[V_N]\|^q)^{2/q} + (\mathbb{E} \| | \psi_{N-1}\rangle - \mathbb{E} \left[ | \psi_{N-1}\rangle \right]\|_{\ell_2}^q)^{2/q}.
\end{align*}
We can now iterate this argument to conclude that 
\begin{align*}
(\mathbb{E} \| | \psi_{N}\rangle - \mathbb{E}[ | \psi_N \rangle] \|_{\ell_2}^q)^{2/q}\leq (q-1) \sum_{k=1}^N (\mathbb{E}\|V_k - \mathbb{E}[V_k]\|^q)^{2/q}
= (q-1) N  (\mathbb{E}\|V - \mathbb{E}[V] \|^q)^{2/q}
\end{align*}
Invoke Lemma~\ref{lem:taylor}, using the properties of the random unitaries constructed by \textsc{qDRIFT}:
\begin{equation*}
    (\mathbb{E}\|V - \mathbb{E}[V] \|^q)^{2/q} \leq \left(2 \frac{t \lambda}{N}\right)^2.
\end{equation*}
Combine the last two displays to reach the stated result.
\end{proof}

\subsection{Proof of Theorem~\ref{thm:summary_of_errors} and Corollary~\ref{cor:randomsuzuki}}\label{sec:proofgeneral}
\AC{
The proof techniques for establishing Theorem~\ref{thm:allinput} and Theorem~\ref{thm:fixedinput} are remarkably general and we condense them into Theorem~\ref{thm:summary_of_errors}. Let us reiterate the premise. Consider approximating a target unitary product $U=U_N\cdots U_1$ by a random unitary $V=V_N\cdots V_1$ such that $\{ V_k\}$ satisfy:\newline
(i) \textit{Causality:} for $1 \leq k \leq N$ the random selection of $V_k$ can only depend on previous choices for $V_1,\ldots,V_{k-1}$:
\begin{align*}
\mathrm{Pr} \left[ V_k |V_N,\ldots,V_{k+1} ,V_{k-1},\ldots, V_1 \right]
=& \mathrm{Pr} \left[ V_k| V_{k-1},\ldots,V_1 \right] \nonumber 
\end{align*}
(ii) \textit{Accurate approximation:} each realization of $V_k$ (and their conditional expectation) must be close to the ideal unitary $U_k$. More precisely, let $R,b_k >0$ be constants such that \begin{align*}
\lV V_k-\BE_{k-1} V_k\rV\le R \quad \text{and} \quad 
\left\| \mathbb{E}_{k-1} V_k - U_k \right\| \leq b_k, \text{where}\ \ \ \ \mathbb{E}_{k-1}V_k:=\mathbb{E} \left[ V_k| V_{k-1},\ldots,V_1 \right] \nonumber.
\end{align*}
almost surely for all $1 \leq k\leq N$.
Note this is more general then we need to prove Theorem~\ref{thm:summary_of_errors}; this is to take into account the cases when the conditional variance may be much smaller than $NR^2$.  
}
\AC{
\begin{proof}[Proof of Theorem~\ref{thm:summary_of_errors}]
Recall the decomposition of approximation error into a deterministic bias and a random fluctuation:
\begin{align*}
\| V_N \cdots V_1 - U_N\cdots U_1 \| 
&\leq  \underset{\text{deterministic bias}}{\underbrace{ \| (\mathbb{E} V )^N - U_N\cdots U_1 \| }}
+
\underset{\text{random fluctuation}}{\underbrace{\| V_N \cdots V_1 - \mathbb{E} \left[ V_N \cdots V_1 \right] \| }}.
\end{align*}
The deterministic bias can be once again be controlled by a telescoping sum,
\begin{align}
  \| (\mathbb{E} V )^N - U \|   &\le \sum_{k=1}^N b_k.
\end{align}
Note that this also controls the performance of channel $\lV \BE \CV - \CU \rV_\diamond$ by Lemma~\ref{lem:diamond-to-operator}.
 For the \textit{random fluctuations}, tweaking the proofs of Theorem~\ref{thm:allinput} and Theorem~\ref{thm:fixedinput} implies the following general result.

\begin{proposition}[Adapted random walk on the unitary group; with and without fixed input state]\label{prop:concentration_unitary_gp}
Let $\{V_1, V_2, \cdots, V_N \} \subset U(d)$ an adapted(causal) random unitary matrices that obey
\begin{align}
  \sum^N_{k=1}  \BE_{k-1} \lVert(V_k - \BE V_k)(V^\dagger_k - \BE V^\dagger_k)\rVert \le \sigma^2
  \quad\text{and}\quad
  \lVert V_k - \BE V_k \rVert \le R. (\text{almost surely.})
\end{align}
for some constants $v, R \geq 0$.
Then,  the product of $N$ random unitaries satisfies a concentration inequality:
\begin{equation}\label{ldprw}
\mathrm{Pr}(\lVert V_N\cdots V_1 - \BE[ V_N \cdots V_1] \rVert \ge \epsilon  ) \le 2d \exp\left( \frac{-\epsilon^2/2}{ v+ R\epsilon/3}\right) \quad \text{for $\epsilon >0$.}
\end{equation}
For a fixed, but arbitrary, input state $\rho$, concentration is independent of the ambient dimension $d$:
\begin{equation*}
\max_{\rho\ \mathrm{state}}
    \mathrm{Pr}\big( \tfrac{1}{2} \big\| \BE[\mathcal{V}_N \circ \cdots \mathcal{V}_1(\rho)] - \mathcal{V}_N \circ \cdots \mathcal{V}_1(\rho) \big\|_1
    > \epsilon \big)
    \leq 2\exp\left( \frac{-\epsilon^2/2 }{v+R\epsilon/3} \right) \quad \text{for $\epsilon >0$}.
\end{equation*}
\end{proposition}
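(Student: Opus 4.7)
My plan is to mirror the matrix- and vector-martingale arguments that drive the i.i.d.\ proofs of Propositions~\ref{prop:fluctuation} and~\ref{prop:fixed-input}, substituting the deterministic interpolant $(\mathbb{E}V)^{N-k}$ by a Doob-style conditional interpolant appropriate for an adapted sequence. Introduce
\begin{equation*}
B_k := \mathbb{E}_k[V_N V_{N-1} \cdots V_1] = \Phi_k V_k V_{k-1} \cdots V_1, \qquad \Phi_k := \mathbb{E}_k[V_N \cdots V_{k+1}],
\end{equation*}
with the convention $\Phi_N = \mathbb{I}$. The tower property gives $\Phi_{k-1} = \mathbb{E}_{k-1}[\Phi_k V_k]$, so $\mathbb{E}_{k-1}B_k = B_{k-1}$: the sequence $\{B_k\}$ is a matrix martingale that interpolates $B_0 = \mathbb{E}[V_N \cdots V_1]$ and $B_N = V_N \cdots V_1$. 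Since $V_{k-1}\cdots V_1$ is unitary and $\|\Phi_k\| \le 1$, the associated differences
\begin{equation*}
C_k = B_k - B_{k-1} = \bigl( \Phi_k V_k - \mathbb{E}_{k-1}[\Phi_k V_k] \bigr) V_{k-1} \cdots V_1
\end{equation*}
have operator norm controlled by the conditional fluctuation of $V_k$. Under the small-step hypothesis $\|V_k - \mathbb{E}V_k\| \le R$ this yields $\|C_k\| \lesssim R$, and the variance hypothesis translates into $\|\sum_k \mathbb{E}_{k-1}C_k C_k^\dagger\| \lesssim v$.

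For the operator-norm claim, I would plug these bounds into the Matrix Freedman inequality (Corollary~\ref{cor:freedman}) to obtain the advertised tail bound with the dimensional prefactor $2d$, the usual price of matrix-valued concentration. For the fixed input state claim, fix a pure input $|\psi\rangle$ (convexity handles mixed states) and study the vector-valued martingale $b_k := B_k |\psi\rangle \in \mathbb{C}^d$ whose differences $c_k = C_k|\psi\rangle$ satisfy $\|c_k\|_{\ell_2} \le \|C_k\|$. The dimension-free vector martingale tail bound (Fact~\ref{fact:vector_martingale}) yields the analogous concentration inequality without the factor $d$. Finally, the Fuchs--van de Graaf step recorded in the proof of Lemma~\ref{lem:diamond-to-operator} converts the $\ell_2$-deviation $\|(V_N\cdots V_1 - \mathbb{E}[V_N\cdots V_1])|\psi\rangle\|_{\ell_2}$ into the corresponding trace-distance bound on $\mathbb{E}[\mathcal{V}_N \circ \cdots \circ \mathcal{V}_1(\rho)] - \mathcal{V}_N \circ \cdots \circ \mathcal{V}_1(\rho)$, which is the quantity appearing in the proposition.

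The hard part is justifying $\|C_k\| \lesssim R$ in the fully adapted setting. In the i.i.d.\ case of Proposition~\ref{prop:fluctuation}, $\Phi_k$ depends only on the future $V_{k+1},\ldots,V_N$ and factors cleanly out of the conditional expectation, so that $\mathbb{E}_{k-1}[\Phi_k V_k] = \Phi_k \mathbb{E}_{k-1} V_k$ and $\|C_k\| \le \|V_k - \mathbb{E}V_k\|$ essentially for free. In the general causal setting, however, $\Phi_k$ may depend on the \emph{present} $V_k$ through the conditioning on $\mathcal{F}_k$, and one must work with the finer decomposition $\Phi_k = \mathbb{E}_{k-1}\Phi_k + (\Phi_k - \mathbb{E}_{k-1}\Phi_k)$, bounding the residual fluctuation $\Phi_k - \mathbb{E}_{k-1}\Phi_k$ by iterating the small-step hypothesis on later increments (which is where the telescoping structure and the unitarity of the $V_j$'s must be used carefully). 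Verifying that this decomposition reproduces the stated constants $(R, v)$ is the main technical hurdle; once it is in place, substitution into the Matrix Freedman and vector Freedman bounds is immediate.
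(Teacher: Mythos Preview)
Your high-level plan (interpolating martingale + matrix/vector Freedman + Fuchs--van de Graaf) matches the paper exactly, and for the fixed-input half your outline is spot on: the paper simply says to replace uniform smoothness by the vector Freedman inequality (Fact~\ref{fact:vector_martingale}) applied to $B_k|\psi\rangle$. The divergence is in your choice of interpolant and the ``hard part'' it creates.

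The paper does \emph{not} use the Doob martingale $B_k=\mathbb{E}_k[V_N\cdots V_1]$. It literally ``tweaks'' the i.i.d.\ construction of Proposition~\ref{prop:fluctuation}: take $B_k=(\mathbb{E}V_N)\cdots(\mathbb{E}V_{k+1})\,V_k\cdots V_1$. When the $V_k$ are independent (which is the case in every application in the paper---\textsc{qDRIFT}, permuted Suzuki, randomized compiling), this \emph{is} your Doob martingale, because $\Phi_k=\mathbb{E}_k[V_N\cdots V_{k+1}]=\prod_{j>k}\mathbb{E}V_j$ is deterministic. The ``hard part'' you flagged then evaporates: $C_k=\Phi_k(V_k-\mathbb{E}V_k)V_{k-1}\cdots V_1$ with $\|\Phi_k\|\le 1$, so $\|C_k\|\le R$ and $\|\mathbb{E}_{k-1}C_kC_k^\dagger\|\le \mathbb{E}_{k-1}\|(V_k-\mathbb{E}V_k)(V_k-\mathbb{E}V_k)^\dagger\|$ on the nose. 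Corollary~\ref{cor:freedman} and Fact~\ref{fact:vector_martingale} then give the two displays verbatim.

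For a genuinely adapted sequence your worry is not a technical nuisance but a real obstruction, and the decomposition $\Phi_k=\mathbb{E}_{k-1}\Phi_k+(\Phi_k-\mathbb{E}_{k-1}\Phi_k)$ cannot recover the stated constants. A two-step example makes this concrete: let $\|V_1-\mathbb{E}V_1\|\le R$ and set $V_2=f(V_1)$ deterministically; then $\|V_2-\mathbb{E}_1V_2\|=0$ so both hypotheses hold with variance $\sigma^2\approx R^2$, yet $V_2V_1$ can sit at distance $\Theta(1)$ from $\mathbb{E}[V_2V_1]$. The fluctuation of $\Phi_k$ that you hoped to control ``by iterating the small-step hypothesis on later increments'' is exactly what blows up here. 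The clean route in the adapted setting---and the one Theorem~\ref{thm:summary_of_errors} actually uses---is to interpolate through the \emph{deterministic} targets $U_k$: set $\tilde B_k=U_N\cdots U_{k+1}V_k\cdots V_1$ and split
\[
\tilde B_k-\tilde B_{k-1}=U_N\cdots U_{k+1}(V_k-\mathbb{E}_{k-1}V_k)V_{k-1}\cdots V_1 \;+\; U_N\cdots U_{k+1}(\mathbb{E}_{k-1}V_k-U_k)V_{k-1}\cdots V_1.
\]
The first piece is a martingale difference of norm $\le R$ (Freedman applies); the second is predictable with norm $\le b_k$ and telescopes to $\sum_k b_k$. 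This yields concentration around $U_N\cdots U_1$, which is what Theorem~\ref{thm:summary_of_errors} reports; concentration around $\mathbb{E}[V_N\cdots V_1]$ per se is not what is needed there.

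In short: drop the Doob martingale. Use the product-of-expectations interpolant (independent case) or the $U_k$-interpolant (adapted case), and the ``hard part'' never appears.
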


For a fixed input state, we would call the vector Freedman inequality (Fact~\ref{fact:vector_martingale}) instead of uniform smoothness (Proposition~\ref{prop:pythagoras}) in Theorem~\ref{thm:fixedinput}. 
There are several recent independent papers that also use matrix martingale tools to study products of random matrices that are close to the identity.  The work~\cite{HNTR20:Matrix-Product} addresses the problem using uniform smoothness tools.  The paper~\cite{kathuria2020concentration} uses the matrix Freedman inequality; their proof is quite similar to ours.  In contrast, we are interested in unitary products, which allows for additional simplifications.  For more background on matrix martingales, see~\cite{oliveira2009spectrum,tropp2011freedman,HNTR20:Matrix-Product,Christofides2008martingales}.

It is instructive to illustrate these improvements by example. In \textsc{qDRIFT}, all steps have a uniform bound $R$, but in the fully general statement the variance $v$ can differ from the crude uniform bound $NR^2$. In such a regime, the sub-exponential tail of size $\e^{-3\epsilon/2R}$ can start playing a role.

Lastly, for illustration in Theorem~\ref{thm:summary_of_errors} we give loose estimates on the variance to avoid complication with the heavy tail effects. Plugging in the parameters $v = ra^2, R = 2a$ as $\lV V-\BE V \rV\le \BE\lV V -V'\rV \le 2 \lV U-V\rV = 2a$ translates to a typical fluctuation $\epsilon^2 \sim n ra^2$(and $\epsilon^2 \sim ra^2$ for fixed input). We conclude the proof by combining the bound for the deterministic bias and random fluctuation. 

\end{proof}
}
\begin{proof}[Proof of Corollary~\ref{cor:randomsuzuki}]
Consider randomly permuting the 2k-th order Suzuki formulas as
$V_k = S^{\sigma}_{2k}(t/r)$ with uniform probability $1/L!$. Then by direct calculation~\cite{childs2019faster, suzuki1985decomposition}: 
\begin{align}
    \lV V-U\rV &\le O(\frac{(t\Lambda L)^{2k+1}}{r^{2k+1}})=a ,\\
\lV \BE V -U\rV &\le  O(\frac{t\Lambda}{r})^{2k+1}L^{2k})=b
\end{align}
by Theorem~\ref{thm:summary_of_errors},
\begin{align*}
    \epsilon_{\mathrm{det}} & \le ra =O( \frac{(t\Lambda L)^{2k+1}}{r^{2k}} )\\
    \epsilon_{\mathrm{typ}} &= O(\sqrt{nr}a)+2rb = O(\sqrt{n} \frac{(t\Lambda L)^{2k+1}}{r^{2k+1/2}} +(t\Lambda)^{2k+1}(\frac{L}{r})^{2k})\\
     \epsilon_{fix} &= O(\sqrt{r}a)+2rb = O( \frac{(t\Lambda L)^{2k+1}}{r^{2k+1/2}}+(t\Lambda)^{2k+1}(\frac{L}{r})^{2k}) \\
    \epsilon_{\mathrm{avg}} &\le 2rb  =O(  (t\Lambda)^{2k+1}(\frac{L}{r})^{2k})
    \end{align*}
Which translates to the sufficient gate counts $N =rL$
    \begin{align*}
    N_{\mathrm{det}} & \gtrsim t\Lambda L^2 (\frac{t\Lambda L}{\epsilon})^{1/2k}\\
    N_{\mathrm{typ}} & \gtrsim t\Lambda L^2(\frac{n\Lambda Lt}{\epsilon^2})^{1/(4k+1)}+t \Lambda L^2 (\frac{t\Lambda}{\epsilon})^{1/2k}\\
    N_{\mathrm{fix}} & \gtrsim t\Lambda L^2(\frac{\Lambda Lt}{\epsilon^2})^{1/(4k+1)}+t \Lambda L^2 (\frac{t\Lambda}{\epsilon})^{1/2k}\\
    N_{\mathrm{avg}}&\gtrsim  t \Lambda L^2 (\frac{t\Lambda}{\epsilon})^{1/2k} 
\end{align*}
\end{proof}

\section{Asymptotic tightness} \label{sec:tightness}

It is natural to wonder whether the bound~\eqref{eq:expected-error} is tight for some Hamiltonian $H=\sum_j h_j$, i.e., whether $N = \Omega( n \lambda^2 t^2/\epsilon^2)$ is also necessary to achieve concentration.
More precisely, we want to understand whether the dependences on system size $n = \log_2(d)$, evolution time $t$ and interaction strength $\lambda=\sum_j \norm{h_j}$ are also necessary to control the typical deviation of the unitary random walk we considered.

In the context of matrix concentration inequalities, this question has been thoroughly addressed~\cite[Section~4.1.2]{tropp2015introduction}. The answer is affirmative for sums of bounded matrices: concentration inequalities are tight and saturated for collections of \textit{commuting} matrices. Although in this work we consider products of random matrices,  we are still using a telescoping sum in the small step regime and expect an analogy.

This observation motivates us to look at artificial Hamiltonians whose associated unitary evolution saturates the upper bounds put forth in this work.
The cases we can handle lie at the two extremes: either the sum of single-site Pauli $Z$s or the sum of all $2^n$ many-body Pauli $Z$s.  We will see the presence of 
the system size factor $n = \log_2(d)$ at both extremes, so one may believe the same to hold for the intermediate q-local cases. 
However, this factor arises for very different reasons.
It arises in the single-site case, because the operator norm completely factorizes into $n$ constituents (one for each term).
For Hamiltonians that encompass all $2^n$ many-body $Z$s, it comes from the fact that diagonal entries are nearly independent, so the union bound we used in Section~\ref{unionboundexp} is tight.  Independence of entries requires the presence of all many-body terms, and does not extend to the few-body case. 

The multivariate central limit Theorem will be crucial for analyzing both cases, as it greatly simplifies the analysis in large $N$ limit.

\begin{fact}[CLT for the multinomial distribution]\label{fact:mCLT}

The multinomial distribution $\vct{m}=(m_1,\ldots,m_K)\sim \mathrm{Mult}(N,(1/K,\ldots,1/K))$ (roll a fair $K$-sided dice $N$ times) obeys a %multivariate
central limit theorem (CLT):
\begin{equation*}
\tfrac{1}{\sqrt{N}}(\vct{m}-\mathbb{E} \vct{m}) \sim \mathcal{N}(0,\Sigma) \quad \text{\AC{ in distribution as $N \to \infty$}}.
\end{equation*}
The covariance matrix is
$\Sigma = \tfrac{1}{K}\left(\mathbb{I}-\tfrac{1}{K}J\right)$, where $J$ denotes the $K \times K$ matrix of ones. 
\end{fact}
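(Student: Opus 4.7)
The plan is to represent the multinomial random vector as a sum of $N$ i.i.d.~categorical trials and then invoke the standard multivariate Lindeberg–Lévy CLT. Specifically, I would define i.i.d.~random vectors $\vct{X}_1,\ldots,\vct{X}_N \in \mathbb{R}^K$, each taking value $\vct{e}_i$ (the $i$-th standard basis vector) with probability $1/K$. By construction, $\vct{m} = \sum_{t=1}^N \vct{X}_t$ has exactly the claimed multinomial distribution with parameters $(N,(1/K,\ldots,1/K))$.

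The second step is to read off the mean and single-trial covariance. The expectation is immediate, $\mathbb{E}[\vct{X}_t] = (1/K)\vct{1}$. For the second moment, $\mathbb{E}[X_{t,i} X_{t,j}]$ vanishes whenever $i \neq j$ (the basis vectors have disjoint support) and equals $1/K$ when $i=j$. Subtracting the outer product of the mean yields the single-trial covariance
\begin{equation*}
\Sigma_1 \;=\; \frac{1}{K}\, I \;-\; \frac{1}{K^2}\,\vct{1}\vct{1}^{\top} \;=\; \frac{1}{K}\!\left(I - \frac{1}{K} J\right),
\end{equation*}
which is exactly the matrix $\Sigma$ in the statement.

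The third and final step is to invoke the multivariate central limit theorem. Because the $\vct{X}_t$ are uniformly bounded (so certainly possess finite second moments), Lindeberg's condition holds trivially, and
\begin{equation*}
\frac{1}{\sqrt{N}}\bigl(\vct{m} - \mathbb{E}[\vct{m}]\bigr) \;=\; \frac{1}{\sqrt{N}}\sum_{t=1}^{N}\bigl(\vct{X}_t - \mathbb{E}[\vct{X}_t]\bigr) \;\xrightarrow{d}\; \mathcal{N}(0,\Sigma_1)\quad \text{as } N \to \infty,
\end{equation*}
which is the advertised conclusion.

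I do not foresee a genuine obstacle here; the statement is a textbook consequence of the CLT for bounded i.i.d.~vector-valued sums. The only subtlety worth flagging is that $\Sigma_1$ is singular, with kernel spanned by $\vct{1}$; this simply reflects the deterministic linear constraint $\sum_i m_i = N$, and the limiting Gaussian is correspondingly supported on the hyperplane orthogonal to $\vct{1}$. If a self-contained characteristic-function proof were desired, one could alternatively verify that $\mathbb{E}\exp(\mathrm{i}\langle \vec{\xi},(\vct{m}-\mathbb{E}\vct{m})/\sqrt{N}\rangle) \to \exp(-\tfrac{1}{2}\vec{\xi}^{\top}\Sigma_1\vec{\xi})$ for each $\vec{\xi}\in\mathbb{R}^K$ by Taylor expanding the single-trial characteristic function to second order.
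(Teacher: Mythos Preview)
Your proof is correct and entirely standard. The paper itself does not prove this statement; it is stated as a \texttt{Fact} and invoked as a well-known classical result, so there is no paper proof to compare against.
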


\subsection{Sum of single site Pauli-Z operators}\label{singlesite}

This example demonstrates the saturation of our martingale bounds for single site Hamiltonians that factorize completely.
To this end, we revisit a variant of the $n$-qubit example Hamiltonian discussed in Section~\ref{nvs1}:
\begin{equation}
H= \sum_{k=1}^n Z_k \quad \text{where} \quad Z_k = \underset{\text{$(k-1)$-times}}{\underbrace{\mathbb{I} \otimes \cdots \otimes \mathbb{I}}} \otimes Z \otimes 
\underset{\text{$(n-k)$-times}}{\underbrace{\mathbb{I} \otimes \cdots \otimes \mathbb{I}}} \quad \text{for $1 \leq k \leq n$}. \label{eq:single-site}
\end{equation}

\begin{proposition} \label{prop:single-site}
Suppose that we wish to obtain an $N$-term approximation of the time evolution $U = \exp(-\mathrm{i}tH)$ associated with the $n$-qubit Hamiltonian \eqref{eq:single-site} for evolution time $t$. In the large $N$ limit (CLT), the \textsc{qDRIFT} approximation \eqref{eq:qDRIFT} incurs an operator norm error that matches the (upper) bound from Corollary~\ref{cor:exp_allinput} (and indirectly Theorem~\ref{thm:allinput}) up to a constant factor:
\begin{equation*}
\mathbb{E} \norm{U-V_N \cdots V_1} \geq \sqrt{\frac{2}{\pi}} \sqrt{(n-1) \frac{(t\lambda)^2}{N}}- \frac{1}{2}(n-1) \frac{(t \lambda)^2}{N}.
\end{equation*}
\end{proposition}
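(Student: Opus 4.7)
The strategy is to fully exploit the commuting structure of the single-site Hamiltonian, which turns the product $V_N\cdots V_1$ into an honest multinomial random variable so that the entire question reduces to sharp absolute-moment estimates for a nearly Gaussian vector. Concretely, every term of $H=\sum_{k=1}^n Z_k$ commutes and has unit norm, so the qDRIFT importance distribution is uniform over $\{1,\dots,n\}$ and $\lambda=n$. If $m_k$ counts how many times index $k$ is drawn during the $N$ steps, then $\vec m=(m_1,\dots,m_n)\sim \mathrm{Mult}(N,(1/n,\dots,1/n))$ and
\begin{equation*}
V_N\cdots V_1 \;=\; \exp\!\Big(-\mathrm{i}\tfrac{t\lambda}{N}\sum_k m_k Z_k\Big),\qquad U=\exp\!\Big(-\mathrm{i} t\sum_k Z_k\Big).
\end{equation*}
Both are diagonal in the computational basis, so the operator norm difference is governed by scalar phases. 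Writing $X_k := \tfrac{t\lambda}{N}m_k-t$, one has $\sum_k X_k = 0$ deterministically, $\BE X_k=0$, and $\mathrm{Var}(X_k)=t^2(n-1)/N$.

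A direct diagonal calculation gives
\begin{equation*}
\norm{V_N\cdots V_1 - U} \;=\; \max_{\vec b\in\{0,1\}^n} \big|\mathrm{e}^{\mathrm{i}\phi_{\vec b}}-1\big|,\qquad \phi_{\vec b}:=-\sum_k X_k(-1)^{b_k}.
\end{equation*}
I would then apply the elementary Taylor bound $|\mathrm{e}^{\mathrm{i}x}-1|\ge |x|-x^2/2$ (which follows from integrating $e^{\mathrm{i}y}-1-\mathrm{i}y$ twice). Choosing $\vec b^\star$ so that $(-1)^{b_k^\star}=-\mathrm{sign}(X_k)$ yields $|\phi_{\vec b^\star}|=\sum_k|X_k|$, hence
\begin{equation*}
\norm{V_N\cdots V_1 - U} \;\ge\; \sum_k |X_k| \;-\; \tfrac{1}{2}\Big(\sum_k|X_k|\Big)^2 .
\end{equation*}

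The remaining work is to take expectations of the two terms. For the linear term, invoke Fact \ref{fact:mCLT}: as $N\to\infty$, $\sqrt{N}\,\vec X/t$ converges to a centered Gaussian with marginal variance $(n-1)$, hence (using uniform integrability, which is immediate because $|X_k|\le t+t\lambda/N$ is bounded)
\begin{equation*}
\sum_k \BE |X_k| \;\longrightarrow\; n\cdot\sqrt{\tfrac{2}{\pi}}\cdot t\sqrt{\tfrac{n-1}{N}} \;=\; \sqrt{\tfrac{2}{\pi}}\,\sqrt{\tfrac{(n-1)(t\lambda)^2}{N}}.
\end{equation*}
For the quadratic correction I would use Cauchy--Schwarz crudely, $(\sum_k|X_k|)^2\le n\sum_k X_k^2$, and sum the marginal variances to obtain $\BE(\sum_k|X_k|)^2 \le n^2\cdot t^2(n-1)/N = (n-1)(t\lambda)^2/N$. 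Combining these two estimates yields exactly the advertised bound.

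The main obstacle is not any single step but the asymptotic bookkeeping: one must argue that the Taylor remainder, the Cauchy--Schwarz overshoot, and the CLT convergence all line up to give a clean single-inequality statement rather than a limit-superior. Since $|X_k|$ is bounded uniformly in $N$ and $n$, uniform integrability lets absolute moments pass through the limit, so the proposition is recovered as an equality up to $o(N^{-1})$ corrections in the large-$N$ regime indicated by the statement.
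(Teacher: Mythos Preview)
Your proof is correct and follows essentially the same route as the paper. The paper writes the error as $\|\exp(\mathrm{i}X)-\mathbb{I}\|$ with $X=-\tfrac{t\lambda}{\sqrt N}\sum_k s_k Z_k$, invokes the matrix Taylor bound $\|\exp(\mathrm{i}X)-\mathbb{I}\|\ge\|X\|-\tfrac12\|X\|^2$, and uses the tensor identity $\|A\otimes\mathbb{I}+\mathbb{I}\otimes B\|=\|A\|+\|B\|$ to obtain $\|X\|=\tfrac{t\lambda}{\sqrt N}\sum_k|s_k|$; you instead work in the diagonal basis and pick the sign-aligning state $\vec b^\star$, which produces the identical quantity $\sum_k|X_k|=\tfrac{t\lambda}{\sqrt N}\sum_k|s_k|$. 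The CLT and the Cauchy--Schwarz (i.e.\ $\ell_1$--$\ell_2$) treatment of the quadratic term are the same in both arguments. One small slip: your boundedness claim $|X_k|\le t+t\lambda/N$ is off (the correct almost-sure bound is $|X_k|\le t(n-1)$), but what the uniform-integrability step actually needs is uniform boundedness of the second moments of $\sqrt N X_k$, which holds exactly since $\mathrm{Var}(\sqrt N X_k)=t^2(n-1)$ for every $N$.
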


We have chosen to state this result directly in terms of operator norm deviation. A conversion into diamond distance is also possible: $\tfrac{1}{2}\|\mathcal{U}-\mathcal{V}\|_\diamond \geq \tfrac{1}{2}\|U-V\|$ for any pair of unitary channels. 
This conversion rule readily follows from the geometric characterization of $\tfrac{1}{2}\|\mathcal{U}-\mathcal{V}\|_\diamond$ provided in~\cite{AKN98:Quantum-Circuits}.

\begin{proof}[Proof of Proposition~\ref{prop:single-site}]
Each of the $n$ terms in the Hamiltonian \eqref{eq:single-site} has unit operator norm ($\norm{Z_k}=1$) and the total strength is $\lambda=\sum_{k=1}^n \norm{Z_k}=n$. For fixed $N$ and $t$, each short-time approximation \eqref{eq:qDRIFT} has the form $V_k= \exp\left( -\mathrm{i} \tfrac{tn}{N}Z_{k(i)}\right)$, where each $k(i)$ is an index chosen uniformly from the set $\left\{1,\ldots,n\right\}$ (multinomial distribution). Since all $Z_k$s commute, we can rewrite the entire product formula as
\begin{equation*}
V_N \cdots V_1 = \exp \left( - \mathrm{i} \frac{t\lambda}{N}\sum_{i=1}^N Z_{k(i)} \right)
= \exp \left( - \mathrm{i} \frac{t\lambda}{N}\sum_{k=1}^n m_k Z_k \right).
\end{equation*}
Here, we have introduced the count statistics $m_k$ for each site label $k$, that is the number of times location $k$ has been selected throughout $N$ independent selection rounds, to rearrange the sum.
This count statistics obeys $\bar{m}_k = \mathbb{E} m_k = N/n=N/\lambda$ for each $1 \leq k \leq n$. We can use this observation to re-express the target unitary $U$ in a compatible fashion:
\begin{equation*}
U = \exp \left( - \mathrm{i}t \sum_{k=1}^n Z_k \right) = \exp \left( - \mathrm{i} \frac{t\lambda}{N}\sum_{k=1}^n \bar{m}_k Z_k \right).
\end{equation*}
Unitary invariance then implies that the operator norm difference between both unitaries becomes
\begin{equation}
\norm{V_N \cdots V_1 - U} = \norm{ \exp \left( - \mathrm{i} \tfrac{t\lambda}{\sqrt{N}}\sum_{k=1}^N \frac{m_k - \bar{m}_k}{\sqrt{N}} Z_k\right) - \mathbb{I}}. \label{eq:single-site-aux1}
\end{equation}
This is a promising starting point. The multinomial CLT (Fact~\ref{fact:mCLT}) ensures that the $n$ centered and normalized random variables $s_k = (m_k-\bar{m}_k)/\sqrt{N}$ approach the coefficients of a Gaussian vector $s \in \mathbb{R}^n$ with covariance matrix $\Sigma = \tfrac{1}{n}\left( \mathbb{I} - \tfrac{1}{n}J \right)$.
This, in particular implies $\mathbb{E} s_k = 0$ and $\mathbb{E}s_k^2 = \tfrac{1}{n}(1-\tfrac{1}{n})=\sigma^2$ for all $1 \leq k \leq n$. We can capitalize on this observation by simplifying \eqref{eq:single-site-aux1} via a second-order Taylor expansion. Set $X = -\tfrac{t\lambda}{\sqrt{N}}\sum_k s_k Z_k$ for brevity and apply Fact~\ref{fact:taylor} to obtain
\begin{equation*}
\norm{V_N \cdots V_1 - U} = \norm{ \exp(\mathrm{i}X)-\mathbb{I}} \geq \norm{\mathrm{i}X} - \norm{ \exp(\mathrm{i}X)-\mathrm{i}X - \mathbb{I}} \geq \norm{X} - \tfrac{1}{2}\norm{X}^2.
\end{equation*}
This relation is preserved under expectations and we obtain
\begin{equation*}
\mathbb{E} \norm{V_N \cdots V_1 - U} \geq \frac{t\lambda}{\sqrt{N}}\mathbb{E} \norm{\sum_k s_k Z_k} - \frac{1}{2}\left( \frac{t\lambda}{\sqrt{N}}\right)^2 \mathbb{E} \norm{\sum_k s_k Z_k}^2.
\end{equation*}
Let us focus on the leading order term first. The particular structure of the Hamiltonian \eqref{eq:single-site} -- each $Z_k$ is the tensor product of a single Pauli-Z matrix at location $k$ with $(n-1)$ identity matrices -- ensures that the operator norm factorizes nicely. Use $\norm{X \otimes \mathbb{I} + \mathbb{I} \otimes Y} = \norm{X}+\norm{Y}$ iteratively to conclude
\begin{equation*}
\mathbb{E} \norm{\sum_k s_k Z_k} = \mathbb{E} \sum_{k=1}^n \norm{s_k Z_k} = \sum_{k=1}^n |s_k| \overset{N \to \infty}{=} n\sqrt{\frac{2}{\pi}\frac{1}{n}\left(1-\frac{1}{n}\right)}=\sqrt{\frac{2}{\pi}(n-1)},
\end{equation*}
because the CLT asserts that each $|s_k|$ approaches a half-normal random variable with $\sigma^2 = \tfrac{1}{n}(1-\tfrac{1}{n})$.

To bound the quadratic term, we combine the factorization trick from above with a well-known relation among $\ell_p$-norms in $\mathbb{R}^n$: 
\begin{align*}
\mathbb{E} \norm{\sum_{k=1}^n s_k Z_k}^2 = \mathbb{E} \left( \sum_{k=1}^n |s_k| \right)^2 
= \mathbb{E} \norm{s}_{\ell_1}^2 \leq n \mathbb{E} \norm{s}_{\ell_2}^2 = n \sum_{k=1}^n \mathbb{E} s_k^2 = n^2 \sigma^2 = (n-1).
\end{align*}
No CLT is required for this argument.
Inserting both bounds into Eq.~\eqref{eq:single-site-aux1} completes the argument.
\end{proof}

\subsection{Sum of many-body Pauli-Z operators}
\label{sub:many-sites}

Let us revisit the example Hamiltonian from Sec.~\ref{unionboundexp}, albeit without additional sign factors. Recall the multi-indices $\vct{p} =(p_1,\ldots,p_n) \in \left\{0,1\right\}^n$  and set
\begin{equation}
H = \sum_{\mathbf{p} \in \left\{0,1\right\}^n} Z_{\mathbf{p}}
= \sum_{\mathbf{p} \in \left\{0,1\right\}^n} Z^{p_1}\otimes \cdots \otimes Z^{p_n},
\label{eq:saturation-hamiltonian}
\end{equation}
where we use the conventions $Z^1=Z$ and $Z^0 = \mathbb{I}$.
This Hamiltonian is not local.  All constituents commute and have the same operator norm: $\| Z_{\mathbf{p}}\|=1$ for all $\mathbf{p}\in \left\{0,1\right\}^n$.
This in turn implies that the total strength $\lambda=\sum_{\mathbf{p}} \| Z_{\mathbf{p}}\|=2^n$ equals the Hilbert space dimension. It is also worthwhile to point out that each term is diagonal in the computational basis $|\mathbf{b} \rangle = |b_1,\ldots,b_n \rangle$ with $\mathbf{b}=(b_1,\ldots,b_n)\in \left\{0,1\right\}^n$. Overlaps of the Hamiltonian terms with computational basis states are given by
\begin{equation}
\langle \mathbf{b} | Z_{\mathbf{p}} | \mathbf{b} \rangle = (-1)^{\langle \mathbf{b},\mathbf{p}\rangle} = (-1)^{\sum_i b_i p_i} \in \left\{\pm 1 \right\}.
\label{eq:combinatorial}
\end{equation}
The following claim highlights that our findings are tight for asymptotically large step sizes $N$. This complements the example upper bound derived in Sec.~\ref{unionboundexp}, as well as Theorem~\ref{thm:allinput}.

\begin{proposition} \label{prop:many-sites}
Suppose that we wish to obtain an $N$-term approximation of the time evolution $U = \exp( -\mathrm{i}t H)$ associated with the $n$-qubit Hamiltonian \eqref{eq:saturation-hamiltonian} for evolution time $t$.
In the large $N$ limit (CLT) the \textsc{qDRIFT} approximation
\eqref{eq:qDRIFT} incurs an operator norm error that matches the (upper) bound from Corollary~\ref{cor:exp_allinput} (and indirectly Theorem~\ref{thm:allinput}) up to a constant factor:
\begin{equation*}
\mathbb{E} \|U - V_N \cdots V_1 \|
\geq \frac{1}{2} \sqrt{n \frac{(t\lambda)^2}{N}} -  2 \left(n+\tfrac{1}{2}\right) \frac{(t \lambda)^2}{N}
\end{equation*}
\end{proposition}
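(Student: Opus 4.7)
The plan is to mirror the structure of the proof of Proposition~\ref{prop:single-site}: use commutativity to combine the exponentials, Taylor-expand once, and extract the dimensional factor from a global spectral decomposition. The new ingredient, compared with the single-site case, is that the $2^n$ Pauli-$Z$ strings $Z_{\mathbf{p}}$ do not factor across sites but are \emph{simultaneously} diagonal in the computational basis with signs given by the Walsh--Hadamard character $(-1)^{\langle\mathbf{b},\mathbf{p}\rangle}$ from Eq.~\eqref{eq:combinatorial}. This Hadamard structure is what will ultimately generate the $\sqrt{n}=\sqrt{\log_2 d}$ factor, now arising from a maximum over exponentially many (asymptotically) independent Gaussians rather than from a tensor factorization of the operator norm.

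Concretely, I would first collect exponentials in the style of Proposition~\ref{prop:single-site}: set $s_{\mathbf{p}}=(m_{\mathbf{p}}-N/2^n)/\sqrt{N}$ with $(m_{\mathbf{p}})\sim\mathrm{Mult}(N,2^{-n}\mathbf{1})$ and note $\lambda=2^n$, use unitary invariance to center, and apply Fact~\ref{fact:taylor} to obtain
\begin{equation*}
\|V_N\cdots V_1 - U\|\ \ge\ \|X\|-\tfrac{1}{2}\|X\|^2,\qquad X:=-\tfrac{t\lambda}{\sqrt{N}}\sum_{\mathbf{p}}s_{\mathbf{p}}Z_{\mathbf{p}}.
\end{equation*}
Since $X$ is diagonal in $\{|\mathbf{b}\rangle\}$ with eigenvalues $-\tfrac{t\lambda}{\sqrt{N}}G_{\mathbf{b}}$, where
\begin{equation*}
G_{\mathbf{b}} := \sum_{\mathbf{p}\in\{0,1\}^n} s_{\mathbf{p}}\,(-1)^{\langle\mathbf{b},\mathbf{p}\rangle},
\end{equation*}
we obtain $\|X\|=\tfrac{t\lambda}{\sqrt{N}}\max_{\mathbf{b}}|G_{\mathbf{b}}|$, and the task reduces to controlling the maximum of the Hadamard-transformed random vector $G$.

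Next I would invoke the multinomial CLT (Fact~\ref{fact:mCLT}), so that $s\to\mathcal{N}(0,\Sigma)$ with $\Sigma=2^{-n}(\mathbb{I}-2^{-n}J)$. A direct covariance computation, using $\sum_{\mathbf{p}}(-1)^{\langle\mathbf{b},\mathbf{p}\rangle}=2^n\delta_{\mathbf{b},0}$, gives
\begin{equation*}
\mathrm{Cov}(G_{\mathbf{b}},G_{\mathbf{b}'}) = \delta_{\mathbf{b},\mathbf{b}'}-\delta_{\mathbf{b},0}\delta_{\mathbf{b}',0},
\end{equation*}
so the $G_{\mathbf{b}}$ with $\mathbf{b}\neq 0$ limit to \emph{independent} standard normals (while $G_0\equiv 0$ is forced deterministically by $\sum_{\mathbf{p}}s_{\mathbf{p}}=0$). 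Standard Gaussian-maxima estimates over the $2^n-1$ nonzero frequencies then provide a lower bound $\mathbb{E}\max_{\mathbf{b}\neq 0}|G_{\mathbf{b}}|\ge\tfrac{1}{2}\sqrt{n}$ and a matching second-moment bound $\mathbb{E}\max_{\mathbf{b}}G_{\mathbf{b}}^2\le 4n+2$; plugging both into the Taylor inequality and multiplying through by $t\lambda/\sqrt{N}$, respectively $(t\lambda)^2/N$, produces the advertised estimate.

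The main obstacle will be pinning down the constants in the Gaussian-maxima estimates precisely enough to recover the stated leading coefficient $\tfrac{1}{2}$ and the correction $2(n+\tfrac{1}{2})$, rather than loose $O(1)$ factors; this requires a quantitative (post-CLT) lower bound on $\mathbb{E}\max_{i\le 2^n-1}|Z_i|$ and a matching second-moment upper bound for i.i.d.\ standard normals $Z_i$, both of which are classical but fiddly. Every other step -- the commuting reduction, the Hadamard diagonalization, and the covariance calculation -- is essentially forced, and is formally parallel to the proof of Proposition~\ref{prop:single-site}, with joint diagonalization in the computational basis replacing the tensor factorization of the operator norm used there.
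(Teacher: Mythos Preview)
Your proposal is correct and follows essentially the same route as the paper's own proof: commuting reduction plus Taylor expansion (Fact~\ref{fact:taylor}), diagonalization in the computational basis via the Walsh--Hadamard characters~\eqref{eq:combinatorial}, multinomial CLT (Fact~\ref{fact:mCLT}) to identify the limiting Gaussian, and then standard lower/upper bounds on Gaussian maxima for the linear and quadratic terms respectively. The only cosmetic difference is that the paper carries the Hadamard transform as an explicit orthogonal map on the covariance matrix $\Sigma\mapsto\hat{\Sigma}=2^{-n}(\mathbb{I}-|0\rangle\!\langle 0|)$, whereas you compute $\mathrm{Cov}(G_{\mathbf{b}},G_{\mathbf{b}'})$ directly; your $G_{\mathbf{b}}$ is exactly $2^{n/2}\hat{s}_{\mathbf{b}}$ in the paper's notation, so the two computations coincide, including the constants $\tfrac{1}{2}\sqrt{n}$ and $4(n+\tfrac{1}{2})$.
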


The conversion rule $\norm{\mathcal{U}-\mathcal{V}}_\diamond \geq \norm{U-V}$ (for unitary channels) \cite{AKN98:Quantum-Circuits} once more allows for addressing the expected diamond distance as well.

\begin{proof}[Proof of Proposition~\ref{prop:many-sites}]
Each of the $2^n$ terms in the Hamiltonian~\eqref{eq:saturation-hamiltonian}  has unit operator norm ($\|Z_{\vct{p}}\|=1$ for all $\vct{p} \in \left\{0,1\right\}^n$) and the strength is $\lambda = \sum_{\vct{p}}\|Z_{\vct{p}}\|=2^n$. 
For fixed $N$ and $t$,
 each short-time approximation \eqref{eq:qDRIFT} has the form $V_k = \exp (-\mathrm{i} \tfrac{t \lambda}{N} Z_{\vct{p}(i)})$, where $\vct{p}(i)$ is a string chosen uniformly at random from all $2^n$ possibilities (multinomial distribution). Since all $Z_{\vct{p}}$s commute,
 we can rephrase and simplify the expected operator norm difference in a fashion analogous to the proof of Proposition~\ref{prop:single-site}:
\begin{equation}
\mathbb{E} \norm{V_N \cdots V_1 - U}
\geq
\frac{t\lambda}{\sqrt{N}} \mathbb{E}\norm{ \sum_{\vct{p}}s_{\vct{p}}Z_{\vct{p}}}-\frac{1}{2}\left( \frac{t \lambda}{\sqrt{N}} \right)^2 \mathbb{E}\norm{\sum_{\vct{p}}s_{\vct{p}}Z_{\vct{p}}}^2. \label{eq:many-sites-aux2}
\end{equation}
Here, $s_{\vct{p}}=(m_{\vct{p}}-\bar{m}_{\vct{p}})/\sqrt{N}$ is the centered and normalized variant of the count statistics $m_{\vct{p}}$ associated with bit string $\vct{p} \in \left\{0,1\right\}^n$, that is the number of times the Hamiltonian term $Z_{\vct{p}}$ has been selected throughout $N$ independent selection rounds. 
The multinomial CLT (Fact~\ref{fact:mCLT}) asserts that the $2^n$ centered and normalized random variables $s_{\vct{p}}$ approach distinct coefficients of a $2^n$-dimensional Gaussian vector with covariance matrix 
 $\Sigma = \tfrac{1}{2^n} \left(\mathbb{I}-\tfrac{1}{2^n} J \right)=\tfrac{1}{2^n}(\mathbb{I}-|\vct{1}\rangle \! \langle \vct{1}|)$, where $|\vct{1} \rangle = \tfrac{1}{2^n}\sum_{\vct{b} \in \left\{0,1\right\}^n}|\vct{b} \rangle$ (the normalized all-ones vector in $\mathbb{R}^{2^n}$).
 In contrast to before, the individual contributions to this operator norm don't factor nicely anymore. Establishing tight bounds requires additional analysis.

Let us focus on the (leading) first-order term for now. All matrix summands in the expression commute and are diagonal in the computational basis $|\vct{b} \rangle$ with $\vct{b}=(b_1,\ldots,b_n) \in \left\{0,1\right\}^n$. This ensures that the operator norm is attained at a computational basis state:
\begin{equation*}
\norm{\sum_{\vct{p}} Z_{\vct{p}}s_{\vct{p}}} = \max_{\vct{b} \in \left\{0,1\right\}^n} \left| \sum_{\vct{p}} s_{\vct{p}} \langle \vct{b}|Z_{\vct{p}}| \vct{b} \rangle \right| = \max_{\vct{b} \in \left\{0,1\right\}^n} \left| \sum_{\vct{p}} (-1)^{\langle \vct{b},\vct{p}\rangle} s_{\vct{p}} \right|,
\end{equation*}
where the last equation is due to Rel.~\eqref{eq:combinatorial}. 
This expression is proportional to the largest entry (in modulus) of the Walsh-Hadamard transform of the $2^n$-dimensional vector $s$ with entries $s_{\vct{p}}$ for $\vct{p} \in \left\{0,1\right\}^n$. More precisely,
\begin{equation*}
\max_{\vct{b}\in \left\{0,1\right\}^n} \left| \sum_{\vct{p}} (-1)^{\langle \vct{b},\vct{p} \rangle} s_\vct{p} \right| =  2^{n/2}\norm{ \mathrm{Had}^{\otimes n} s }_{\ell_\infty}=: 2^{n/2} \norm{\hat{s}}_{\ell_\infty} \quad \text{where} \quad \mathrm{Had} = \tfrac{1}{\sqrt{2}} \left(
\begin{array}{cc}
1 & 1 \\
1 &-1
\end{array}
\right).
\end{equation*}
We emphasize that the Walsh-Hadamard transform is an orthogonal transformation, which also applies to the limiting covariance matrix of $\hat{s}=\mathrm{Had}^{\otimes n}s$ (CLT):
\begin{equation*}
\hat{\Sigma} = \tfrac{1}{2^n}\mathrm{Had}^{\otimes n}  \left(\mathbb{I}-|\vct{1} \rangle \! \langle \vct{1}| \right) \mathrm{Had}^{\otimes n}
=\tfrac{1}{2^n}\left( \mathbb{I} - |0,\ldots,0 \rangle \! \langle 0,\ldots,0| \right).
\end{equation*}
Hence, the CLT asserts that the transformed vector $\hat{s}$ approaches a standard Gaussian vector with $2^n-1$ degrees of freedom: $\hat{s}=(0,g_2,\ldots,g_{2^n})^T$ with $g_i \overset{\text{i.i.d.}}{\sim}\mathcal{N}(0,2^{-n})$ (one degree of freedom is erased by the normalization constraint $\sum_{\vct{p}}m_{\vct{p}}=N$ of the count statistics).
The bound on the expected leading order contribution now follows from invoking the well-known fact that the expected maximum of $K$ standard Gaussian random variables with equal variance $\sigma^2$ is lower-bounded by $0.265\sqrt{\log(K)\sigma^2}$, see e.g.\ \cite[Proposition~8.1]{FR13:Compressed-Sensing}:
\begin{align*}
\tfrac{t\lambda}{\sqrt{N}}\mathbb{E} \norm{ \sum_{\vct{p}}s_{\vct{p}}Z_{\vct{p}}}
=& \frac{t \lambda }{\sqrt{N}}2^{n/2} \mathbb{E} \|\hat{s} \|_{\ell_\infty}\overset{N \to \infty}{=} \frac{t \lambda }{\sqrt{N}} 2^{n/2}\mathbb{E} \max_{2 \leq i \leq 2^n}|g_i| \\ \geq & 0.625 \frac{t\lambda}{\sqrt{N}}2^{n/2} \sqrt{\log(2^n-1)2^{-n}}  \geq \frac{1}{2}\sqrt{n \frac{(t\lambda)^2}{N}}.
\end{align*}
Here, we have used the numerical bound $0.625\sqrt{\log(2^n-1)/n} \geq 0.5$ which is valid for any $n \geq 3$ (for $n=2$ the ratio is slightly smaller).
This completes the argument for the leading term in Eq.~\eqref{eq:many-sites-aux2}.

Moving on to the quadratic term in Eq.~\eqref{eq:many-sites-aux2}, we employ a similar strategy. Observe
\begin{align*}
 \mathbb{E} \norm{ \sum_{\vct{p}} Z_{\vct{p}}}^2
 =& \mathbb{E} \norm{\sum_{\vct{p},\vct{p}'} s_{\vct{p}} s_{\vct{p}'} Z_{\vct{p}} Z_{\vct{p}'}}
= \mathbb{E} \max_{\vct{b} \in \left\{0,1\right\}^n} \left| \sum_{\vct{p},\vct{p'}} s_{\vct{p}} s_{\vct{p}'} \langle \vct{b}|Z_{\vct{p}} Z_{\vct{p}'} |\vct{b} \rangle \right| \\
=& \mathbb{E} \max_{\vct{b} \in \left\{0,1\right\}^n} \left| \sum_{\vct{p}} (-1)^{\langle \vct{b},\vct{p}\rangle} s_{\vct{p}} \sum_{\vct{p}'} (-1)^{\langle \vct{b},\vct{p}'\rangle} s_{\vct{p}'} \right|,
\end{align*}
where the last equation follows from combining Eq.~\eqref{eq:combinatorial} with the appealing group structure of the $Z_{\vct{p}}$'s: $Z_{\vct{p}}Z_{\vct{p}'}=Z_{\vct{p}\oplus \vct{p}'}$, where $\oplus$ denotes entry-wise addition modulo 2 (the set of all $Z_{\vct{p}}$'s form a maximal stabilizer group). We can now recognize two independent Walsh-Hadamard transforms of the $2^n$-dimensional vector $s$ in this expression:
\begin{equation*}
\mathbb{E} \max_{\vct{b} \in \left\{0,1\right\}^n} \left| \sum_{\vct{p}} (-1)^{\langle \vct{b},\vct{p}\rangle} s_{\vct{p}} \sum_{\vct{p}'} (-1)^{\langle \vct{b},\vct{p}'\rangle} s_{\vct{p}'} \right|
= 2^n \mathbb{E} \max_{\vct{b} \in \left\{0,1\right\}^n} \left| \hat{s}_{\vct{b}}\right|^2
\end{equation*}
We already know from the CLT that the $2^n$-dimensional Walsh-Hadamard transform of $s$ approaches a standard Gaussian vector: $\hat{s}=(0,g_2,\ldots,g_{2^n})^T$ with $g_i \overset{\text{i.i.d.}}{\sim}\mathcal{N}(0,2^{-n})$.
In the large $N$ limit (CLT), the r.h.s.\ of the above display becomes an expected maximum of $K=2^n-1$ squares of i.i.d. Gaussian variables with mean zero and variance $\sigma^2=2^{-n}$. 
Such expected maxima can be bounded using standard arguments, see e.g.\ \cite[Lemma~5.1]{VH14:Probability-Lecture}: $\mathbb{E} \max_{1 \leq i \leq K} |g_i|^2 \leq 4 \sigma^2 \log (\sqrt{2}K)$ (the constants are chosen based on simplicity, not tightness).
This allows us to conclude
\begin{align*}
\mathbb{E} \norm{ \sum_{\vct{p}} Z_{\vct{p}}}^2
=2^n \mathbb{E} \max_{\vct{b}\in \left\{0,1\right\}^n} \left| \hat{s}_{\vct{b}} \right|^2 \overset{N \to \infty}{=} 2^n \mathbb{E} \max_{2 \leq i \leq N} \left|g_i \right|^2
\leq 2^n 4 \sigma^2 \log(\sqrt{2}(2^n-1)) \leq 4(n+1/2).
\end{align*}
Inserting linear and quadratic bound into Eq.~\eqref{eq:many-sites-aux2} completes the argument.
\end{proof}

% \bibliography{ref}
% \bibliographystyle{abbrv}

\end{document}